\documentclass[11pt]{article}
\usepackage{amsmath,amsfonts}
\usepackage{amsthm,amssymb}
\usepackage{natbib}
\usepackage{booktabs}
\usepackage{graphicx}
\usepackage{subcaption}

\usepackage{epstopdf}
\newtheorem{theorem}{Theorem}[section]

\newtheorem{corollary}[theorem]{Corollary}
\newtheorem{lemma}[theorem]{Lemma}

\newtheorem{claim}{Claim}

\newtheorem{remark}{Remark}

\usepackage{algpseudocode}
\usepackage{algorithm}

\usepackage{enumitem}
\usepackage{bm}
\usepackage{hyperref}
\usepackage{filecontents}
\usepackage{amsmath} 
\usepackage{xcolor}

\begin{document}



\title{\textit{Nonparametric Estimation of Isotropic Covariance Function}}



{
	\title{\bf Nonparametric Estimation of Isotropic Covariance Function%
		\thanks{Accepted at the Journal of Nonparametric Statistics.}}
	
	\author{Yiming Wang and Sujit K. Ghosh\thanks{
			Yiming Wang is a graduate student (E-mail: wangyiming859@gmail.com);
			Sujit K. Ghosh is a Professor (E-mail: sujit\_ghosh@ncsu.edu),
			Department of Statistics, North Carolina State University, Raleigh, NC 27695.
		}
		\hspace{.2cm}\\
		Department of Statistics, North Carolina State University, Raleigh, NC 27695}
	
	\date{}
	\maketitle
}


\begin{abstract}
	A nonparametric model using a sequence of Bernstein polynomials is constructed to approximate arbitrary isotropic covariance functions valid in $\mathbb{R}^\infty$ and related approximation properties are investigated using the popular $L_{\infty}$ norm and $L_2$ norms. A computationally efficient sieve maximum likelihood (sML) estimation is then developed to nonparametrically estimate the unknown isotropic covaraince function valid in $\mathbb{R}^\infty$. Consistency of the proposed sieve ML estimator is established under increasing domain regime. The proposed methodology is compared numerically with couple of existing nonparametric as well as with commonly used parametric methods. Numerical results based on simulated data show that our approach outperforms the parametric methods in reducing bias due to model misspecification and also the nonparametric methods in terms of having significantly lower values of expected $L_{\infty}$ and $L_2$ norms. Application to precipitation data is illustrated to showcase a real case study.  Additional technical details and numerical illustrations are also made available.
\end{abstract}

\noindent%
{\it Keywords:}  Bernstein polynomials, Consistency, Sieve Maximum Likelihood, Spatial Covariance, Stationary Isotropic Covariance, Gaussian Process
\vfill


\section{Introduction}
\label{sec:introduction}
In geostatistics, observations are taken over a set of spatially varying locations and hence observations that are located in adjacent regions are expected to influence each other. Observations at nearer regions are thus more correlated with each other than those located far apart. Such spatially varying correlated data are often measured by covariance or correlation functions between two observations measured at two locations. Let $Y(\bm{s})$ denote the observation measured at location $\bm{s}\in{\cal S}\subseteq\mathbb{R}^d$, where ${\cal S}$ denotes the entire spatial region of interest. In this paper we are mostly interested in regions when $d=2$ or $d=3$, but the methodology proposed in this paper is valid for any dimension $d\geq 1$. Assuming that $E[Y^2(\bm{s})]<\infty$, the covariance function is defined as $\tilde{C}(\bm{s},\bm{s}^\prime)=Cov(Y(\bm{s}), Y(\bm{s}^\prime))$. Using the fact that for any finite collection of spatial points $\{\bm{s}_1,\ldots,\bm{s}_n\}$ in ${\cal S}$ and any arbitrary fixed vector $\mathbf{a}^T=(a_1,\ldots,a_n)\in\mathbb{R}^n$ the $Var(\mathbf{a}^T\mathbf{Y})\geq 0$ where $\mathbf{Y}^T=(Y(\bm{s}_1),\ldots,Y(\bm{s}_n))$,  any covariance function has to be positive definite satisfying
\begin{equation*}
\sum_{i=1}^n\sum_{j=1}^n a_i a_j \tilde{C}(\bm{s}_i,\bm{s}_j)\geq 0.
\end{equation*}
The goal is often to estimate the underlying covariance function based on observing a finite collection of measurements $\{Y(\bm{s}): \bm{s}\in \{\bm{s}_1,\ldots,\bm{s}_n\}\}$. However, such a task can be very difficult without any additional simplifying assumptions on the structure of the covariance function. In practice, it is often the case that covariance function depends only on the distance between two locations and hence we can then assume that $\tilde{C}(\bm{s}, \bm{s}^\prime)=C(|| \bm{s}-\bm{s}^\prime||_2)$, where $|| \bm{s}-\bm{s}^\prime ||_2$ is the Euclidean distance between locations $\bm{s}$ and $\bm{s}^\prime$.  In such case, our goal is to estimate the {\em isotropic covariance function} $C(h)$ for $h\geq 0$. Notice that this implies that $C(0)=Var[Y(\bm{s})]$ is constant for any $\bm{s}\in{\cal S}$ and often then the focus is the estimation of correlation function $R(h)=C(h)/C(0)$ and constant variance $\sigma^2=C(0)$. Various classes of isotropic parametric functions for $C(h)$ are available in literature (\citep{Cressie1993, Stein1999, Wang2022}).

One reasonable assumption is stationary covariance in the sense that covariance only depends on the difference of coordinates. i.e., $\tilde{C}(\bm{s}, \bm{s}^\prime)=C(\bm{s}-\bm{s}^\prime)$, for any two locations $\bm{s}$ and $\bm{s}^\prime$. Then resorting to the spectral representation stated in Bochner's theorem, one can characterize the stationary covariance functions in $\mathbb{R}^d$ as
\begin{equation}
C(\bm{s}-\bm{s}^\prime)=\int_{\mathbb{R}^d} exp(i\bm{r}^T (\bm{s}-\bm{s}^\prime))dG(\bm{r}),\label{eq:stationary cov}
\end{equation}
where the positive finite measure $G(\bm{r})$ is the stationary spectral distribution function.
Specially, in practice, it is often the case that covariance function is stationary and isotropic, that is, the covariance only depends on the distance between two locations and hence we can then assume that $\tilde{C}(\bm{s}, \bm{s}^\prime)=C(||\bm{s}-\bm{s}^\prime||_2)$, where $||\bm{s}-\bm{s}^\prime||_2$ is the Euclidean distance.  In such case, our goal is to estimate the {\em stationary and isotropic covariance function} $C(h)$ for $h\geq 0$. Notice that this implies that $C(0)=Var[Y(\bm{s})]$ is constant for any $\bm{s}\in{\cal S}$ and often then the focus is to conduct estimation of correlation function $R(h)=C(h)/C(0)$ and constant variance $\sigma^2=C(0)$. Various classes of isotropic parametric functions for $C(h)$ are available in literature (\citep{Cressie1993, Stein1999}).

For isotropic covariance function defined on $\mathbb{R}^d$, under some mild regularity conditions, it has been established based on (\ref{eq:stationary cov}) that any such function has the useful representation (\citep{Abramowitz1965}, p355): 
\begin{equation}
C(h) = \int_{0}^{\infty}\Omega_d(rh) dF_d(r), \label{eq: Bochner isotropic eq}
\end{equation}
where $\Omega_d(t)=(\frac{2}{t})^{\frac{d-2}{2}}\Gamma(\frac{d}{2})J_{\frac{d-2}{2}}(t)$, $\Gamma(\cdot)$ is the Gamma function, $J_v(\cdot)$ is the Bessel function of the first kind of order $v$ and $F_d(r)$ is a bounded nondecreasing positive measure on $\mathbb{R}$ 
satisfying $F_d(r)=\int_{r\leq || \bm{r} ||_2}dG(\bm{r})$ for $\bm{r}\in\mathbb{R}^d$. 
Moreover, using the fact that $\Omega_d(t)\rightarrow e^{-t^2}$ as $d\rightarrow\infty$, \citep{Schoenberg1938} shows that continuous isotropic covariance functions in $\mathbb{R}^\infty$ can be represented by
\begin{equation}
C(h) = \int_0^\infty exp(-r^2h^2) dF(r) \label{eq: Bochner isotropic Inf eq},
\end{equation}
where $F(r)$ is a bounded non-decreasing positive measure for $r\geq 0$. One can call $F_d(r)$ and $F(r)$, the isotropic spectral measure of $\mathbb{R}^d$ and $\mathbb{R}^{\infty}$, respectively, which we call simply the spectral measure for short hereafter.
Note that continuous isotropic covariance function in $\mathbb{R}^\infty$ with the form (\ref{eq: Bochner isotropic Inf eq}) can also be written as (\ref{eq: Bochner isotropic eq}) since positive definiteness in higher dimensions lead to positive definiteness in lower dimension (\citep{Schoenberg1938}). Thus, in order to estimate the isotropic covariance function $C(h)$, it is equivalent to estimate the spectral density $F(r)$ and our primary focus is to conduct nonparametric estimation using a collection of observed data points  $\{Y(\bm{s}): \bm{s}\in \{\bm{s}_1,\ldots,\bm{s}_n\}\}$ or identically distributed independent copies of spatial data,(e.g., annual precipitations at a selected set of spatial locations) and the primary goal is to explore the spatial correlations only.
There has been extensive work using non-parametric methods for covariance estimation based on the spectral representation.  
In the time series literature, approaches based on smoothing periodograms have been proposed and are more easily implemented for regularly gridded data. \citep{Choudhuri2004} proposed a nonparametric Bayesian method by constructing a prior for spectral density in equation (\ref{eq:stationary cov}) for $\mathbb{R}$ through one-dimension Bernstein polynomials and obtaining pseudoposterior distribution with aid of Whittle likelihood.
As an extension, a similar approach for spectral density estimation in $\mathbb{R}^d\;(d\geq 1)$ is developed by \citep{Zheng2010}. 
Based on irregularly gridded and continuous-indexed data, 
\citep{Reich2012} further generalized this idea by opting for Dirichlet process priors and Dirichlet process mixture priors to model the spectral density in (\ref{eq:stationary cov}) for $\mathbb{R}^2$.
\citep{Im2007} considered a semiparametric model, that is a linear combination of B-splines up to a cutoff frequency and a truncated algebraic trail,  for spectral densities in (\ref{eq: Bochner isotropic eq}) and derived the corresponding isotropic covariance functions in $\mathbb{R}^2$ analytically using Hankel transform.
Relatively more recently, \citep{Huang2011} formulated a nonparametric method for variogram estimation in $\mathbb{R}^d$ via equation (\ref{eq: Bochner isotropic eq}) using a general spline methodology with application in regularly-gridded data.
\citep{Choi2013} modelled the spectral densities for $\mathbb{R}^\infty$ by way of equation (\ref{eq: Bochner isotropic Inf eq}) using B-spline leading to an approximation of completely monotone functions computed through recursion formulas and devised a weighted least square estimator through an empirical covariagram estimate with its efficacy corroborated through simulation studies based on regularly-gridded data. 


Typically the aforementioned methods suffer from cumbersome calculation of approximated covariance functions or limited applicability when dealing with irregular observations.
Furthermore, the theoretical framework for some of the above mentioned models is not explicitly explored. Therefore, it is of interest to develop simpler nonparametric models with pratical flexibility that comes with some large sample theoretical properties.

As a result, we construct a sieve of isotropic covariance functions valid in $\mathbb{R}^\infty$ spanned by a novel class of basis functions $A_{k,m}(h)=\prod_{j=k}^m (1+\frac{h^2}{j})^{-1}$ for $1\leq k\leq m$ and $k,m\in\mathbf{N}^+$. 
We show that linear combinations of these basis functions are guaranteed to produce isotropic covariance functions valid for any dimension $d\geq 1$ through the representation (\ref{eq: Bochner isotropic Inf eq}) under mild conditions. Owing to the property of linear gaussian covariance models (\citep{Anderson1970, Piotr2017}) the coefficients of the linear combinations of the basis functions can be efficiently estimated by maximizing the sieve likelihood function (for any fixed $m>1$) and assuming that the observations are generated from a Gaussian process. 
Admittedly our idea of approximating isotropic covariance functions in $\mathbb{R}^\infty$ by employing a transformation of (\ref{eq: Bochner isotropic Inf eq}) is similar in spirit to (\citep{Choi2013}), our model, based on a novel class of basis functions, gives rise to an explicit form of covariance functions in the sense that neither recursion formula nor complicated analytic expression are required. Additionally, through the use of sieve likelihood, we take fully into account  all the correlations between observations and our methodology is well applicable to irregularly-spaced data.  Furthermore, we establish the consistency of the sieve estimator as $m\rightarrow\infty$ with certain rate of increasing sample (domain) size $n\rightarrow\infty$.

The rest of the paper is organized as follows. In Section \ref{sec:approximation method}, we describe our model and approximation theory 
using suitable norms as well as numerical illustrations of the approximation properties for finite m. Section \ref{sec: estimation method} presents the estimation method based on observed data and related asymptotic theory for large sample consistency.
In Section \ref{sec:simulation} a simulation study is conducted to compare our method with parametric methods together with couple of alternative nonparametric approaches proposed by \citep{Huang2011} and \citep{Choi2013} using several global metrics, followed by a real data example. 
Section \ref{sec:conclusion} summarizes the paper and discusses possible future work. Proofs of the theoretical results along with additional figures and tables are provided in the Appendix. Related R codes are accessible in a publicly shared repository at \url{github.com/whimwang/}.

\section{Models for Isotropic Covariance Functions}
\label{sec:approximation method}
Assume $F(r)$ in (\ref{eq: Bochner isotropic Inf eq}) permits a continuous density $f(r)$ with respect to the Lebesgue measure which we denote by $f(r)dr=dF(r)$. Then the corresponding isotropic covariance function $C$ has the representation
\begin{equation}
C(h) = \int_0^\infty exp(-r^2h^2) f(r)dr \label{eq: Bochner isotropic Inf eq new main}.	
\end{equation}

\noindent First, notice that a `good' approximation of the spectrum $f$ leads to a `good' approximation of the covariance $C$, which can be readily implied from the following simple observation. Let $C_1$ and $C_2$ be two covariance functions corresponding to the two spectrum $f_1$ and $f_2$, respectively linked by the equation (\ref{eq: Bochner isotropic Inf eq new main}). Then assuming 
both $f_1$ and $f_2$ are finitely integrable, one can derive 
\begin{equation*}
||C_1 -C_2||_{\infty}\leq ||f_1-f_2||_1,
\end{equation*}
where $||G||_\alpha :=(\int |G(r)|^\alpha dr)^{{1\over \alpha}}$ and $||G||_\infty := \sup_r |G(r)|$ denote the $L_\alpha$ norm for $\alpha\geq 1$ and $L_\infty$ of the function $G(\cdot)$.
Thus, we may focus on approximating the spectrum using suitable norms. Instead of approximating $f(r)$ on the whole real line, it is sometimes more convenient to transform to an equivalent function defined on a bounded interval, for which extensive approximation theory can be readily adapted.

Letting $s=\exp\{-r^2\}$ in the equation (\ref{eq: Bochner isotropic Inf eq new main}) we can obtain an equivalent expression given by
\begin{equation}
C(h)=\int_0^1 s^{h^2}g(s)ds
\label{eq: g to C},
\end{equation}
where $g(s)$ takes the form as
\begin{equation}
g(s)=\frac{1}{2s\sqrt{-log s}}f(\sqrt{-log s}).
\label{eq: f to g}
\end{equation}
and satisfies $\int_0^1g(s)ds = \int_0^\infty f(r)dr<\infty$. And we can also express $f$ using $g$ equivalently via the inverse relation
\begin{equation}
f(r)=2rexp(-r^2)g(exp(-r^2)).
\label{eq: g to f}
\end{equation}
Due to the bijection relationship between $g(s)$ and $f(r)$, we need only focus on approximation of $g(s)$ which eventually leads to an approximation for $C$ by equations (\ref{eq: g to C})(\ref{eq: f to g}) and (\ref{eq: g to f}).

Bernstein polynomials serve as a remarkable class of functions in approximation theory and comes with well-known simple, efficient recursive algorithms and excellent numerical stability properties (see \citep{Farouki2012} for a comprehensive review),
originally introduced by \citep{Bernstein1912} to facilitate a constructive proof of the  Weierstrass Approximation Theorem (1885) that if $g(x)$ is a continuous function in the interval $[0,1]$, for any $\epsilon>0$, it is always possible to find a polynomial $g_m(x)$ of degree m (depending on $\epsilon$ only), such that $||g-g_m||_{\infty}\equiv\underset{x\in[0,1]}{sup}|g(x)-g_m(x)|\leq \epsilon$.
Thus, assuming that unknown function $g$ is bounded, for any $\epsilon>0$,  $g(s)$ in $[0,1]$  can be uniformly approximated by a sequence of Bernstein polynomials $g_m(s)$ given by
\begin{equation}
g_m(s)=\sum_{k=1}^m w_k {m-1\choose k-1} s^{k-1} (1-s)^{m-k}, w_k\geq 0, k=1,\cdots, m;
\label{eq:form of g_m}
\end{equation}
where if we choose $w_k=g(\frac{k-1}{m-1})$ Bernstein showed that there exists a sufficently large m 
satisfying
\begin{equation}
||g_m-g||_\infty < \epsilon. \label{eq: g_m uniform converge to g}
\end{equation}
Substituting $g_m(s)$ in (\ref{eq: g to f}),  we can obtain approximated spectrum 
\begin{equation*}
f_m(r)=2rexp(-r^2)\sum_{k=1}^m  \{w_k {m-1\choose k-1} exp(-r^2(k-1))(1-exp(-r^2))^{m-k}\}
\end{equation*} 
and it follows from (\ref{eq: g to C}) and  (\ref{eq:form of g_m}) that the corresponding covariance approximating function $C_m(h)$ is given by 
\begin{equation}
C_m(h)= \frac{1}{m}\sum_{k=1}^m w_kA_{k,m}(h), \label{eq: C_m}
\end{equation}
where
\begin{equation*}
A_{k,m}(h)=\frac{Beta(k+h^2,m-k+1)}{Beta(k,m-k+1)}=\prod_{j=k}^m\left(1+\frac{h^2}{j}\right)^{-1}
\end{equation*}
and $Beta(a,b)=\int_0^1 u^{a-1}(1-u)^{b-1} du$ denotes the $Beta$ function for $a, b>0$.
Fig	\ref{fig:Basis functions for m=5 and m=25} presents curves of basis $\{A_{k,m}\}_{k=1}^m$ for $m=5$ and $m=25$.

Recall that the approximation result in (\ref{eq: g_m uniform converge to g}) is based on the assumption that $g(s)$ is bounded on $(0, 1)$ in which case one can define $g(0)=\underset{s\rightarrow0}{\lim\inf}\;g(s)$ and $g(1)=\underset{s\rightarrow 1}{\lim\sup}\;g(s)$. Otherwise,
there exists no such $g_m(s)$ of the form (\ref{eq:form of g_m}) satisfying (\ref{eq: g_m uniform converge to g}) when $g(s)$ explodes to infinity at the endpoints 0 or 1. For example, when true covariance function is Generalized Cauchy in the form $C(h)=(1+h^2)^{-\frac{1}{4}}$ with its spectrum $f(r)=r^{-\frac{1}{2}}exp(-r^2)$, then
$g(s)=\frac{1}{2}(-logs)^{-\frac{3}{4}}$ satisfies $\int_0^1 g(s)ds=\frac{1}{2}\Gamma(\frac{1}{4})<\infty$ and yet $\underset{s\rightarrow1}{\lim}\;g(s)=\infty$.
Nonetheless, our model in (\ref{eq: C_m}) can still accommodate such covariance with nice approximation property.
In the ensuing paragraphs, we investigate approximation theory using several norms.	

\begin{figure}[H]
	\centering
	\begin{minipage}[b]{0.45\textwidth}
		\centering
		\includegraphics[width=\linewidth]{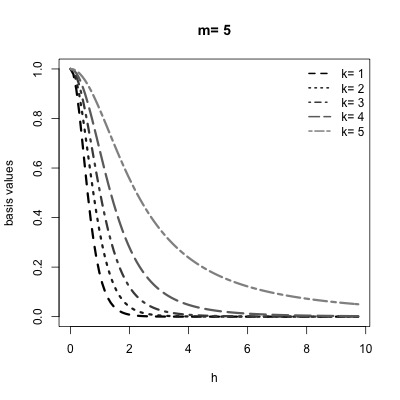}
		
		(a) $m=5$
	\end{minipage}
	\hspace{5pt}
	\begin{minipage}[b]{0.45\textwidth}
		\centering
		\includegraphics[width=\linewidth]{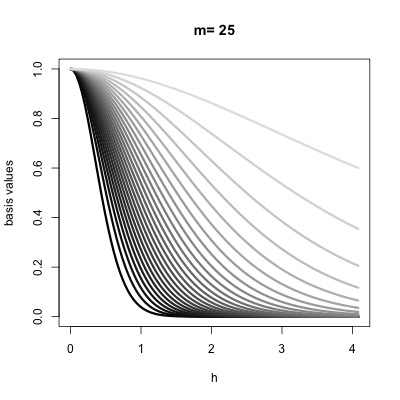}
		
		(b) $m=25$
	\end{minipage}
	\caption{Basis functions. In (b), colors for $A_{1,m},\cdots, A_{m,m}$ change from dark to light.}
	\label{fig:Basis functions for m=5 and m=25}
\end{figure}

\subsection{Approximation Theory}
First, assuming continuous spectrum we provide $L_\infty$ and $L_2$ approximation results based on a sequence of sieves:

\begin{theorem}[Approximation Theorem]
	\label{theorem: covariance approximation} 
	Let $\mathcal{A}$ be the class of continuous isotropic covariance function applicable in $\mathbb{R}^\infty$ with continuous spectrum $f(r)$ defined in (\ref{eq: Bochner isotropic Inf eq new main}). For any $m\in \mathbf{N}^+$, define a class of functions $\mathcal{A}_m\subset\mathcal{A}$ as
	\begin{equation*}
	\mathcal{A}_m = \{C_m(h):=\sum_{k=1}^m w_kA_{k,m}(h), A_{k,m}(h)= 
	\frac{Beta(k+h^2,m-k+1)}{Beta(k,m-k+1)}, w_k\geq 0, k=1,\cdots, m\}.
	\end{equation*}
	\begin{enumerate}
		\item ($L_\infty$ Approximation) For any function $C(h)\in \mathcal{A}$,
		given any $\epsilon>0$, 
		there exists a function $C_m(h) \in \mathcal{A}_m, m\in\mathbf{N}^+$ so that
		$||C-C_m||_\infty<\epsilon.$
		\item 
		($L_2$ Approximation) For any function $C(h)\in \mathcal{A}$ with $||C||_2<\infty$, given any $\epsilon>0$, 
		there exists a function $C_m(h) \in \mathcal{A}_m, m\in\mathbf{N}^+$ so that $||C-C_m||_2<\epsilon.$
	\end{enumerate}
\end{theorem}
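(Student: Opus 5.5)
The plan is to work entirely on the $g$-side via (\ref{eq: g to C}), using $C(h)=\int_0^1 s^{h^2}g(s)\,ds$ with $g\ge 0$ and $g\in L_1(0,1)$. Two facts drive the argument. First, for any integrable $e$ and every $h\ge 0$ we have $\left|\int_0^1 s^{h^2}e(s)\,ds\right|\le \|e\|_1$. Second, $\left|\int_0^1 s^{h^2}e(s)\,ds\right|\le \|e\|_\infty\int_0^1 s^{h^2}ds=\|e\|_\infty (1+h^2)^{-1}$. I will also use the computation preceding the theorem: if $g_m$ is a Bernstein polynomial of the form (\ref{eq:form of g_m}) with $w_k\ge 0$, its covariance is $\frac1m\sum_k w_kA_{k,m}$. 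The factor $\frac1m$ is absorbed into the weights, so this covariance lies in $\mathcal{A}_m$. Since $g_m\ge 0$, it is also a valid covariance, which gives $\mathcal{A}_m\subset\mathcal{A}$.

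\textbf{Step 1 (truncation).} Because $f$ is continuous, $g$ is continuous and nonnegative on $(0,1)$, but it may blow up at $0$ or $1$ (as in the generalized Cauchy example). For $\delta>0$, take a continuous cutoff $\varphi_\delta$ with $0\le\varphi_\delta\le 1$, $\varphi_\delta=1$ on $[\delta,1-\delta]$, and $\varphi_\delta$ vanishing near $0$ and $1$. Set $\tilde g=g\varphi_\delta$. Then $\tilde g$ is continuous on $[0,1]$ and $0\le\tilde g\le g$. By dominated convergence, $\|g-\tilde g\|_1\to 0$ as $\delta\to 0$. Let $\tilde C(h)=\int_0^1 s^{h^2}\tilde g(s)\,ds$.

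\textbf{Step 2 (Bernstein approximation).} Let $g_m$ be the Bernstein polynomial of $\tilde g$, i.e.\ (\ref{eq:form of g_m}) with $w_k=\tilde g(\frac{k-1}{m-1})\ge 0$. By (\ref{eq: g_m uniform converge to g}), $\|g_m-\tilde g\|_\infty\to 0$, and hence also $\|g_m-\tilde g\|_1\to 0$. Let $C_m$ be the associated covariance, which lies in $\mathcal{A}_m$.

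\textbf{$L_\infty$ claim.} By the first fact,
\begin{equation*}
\|C-C_m\|_\infty\le \|g-\tilde g\|_1+\|\tilde g-g_m\|_1 .
\end{equation*}
Choose $\delta$ so that the first term is below $\epsilon/2$, then $m$ so that the second is below $\epsilon/2$.

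\textbf{$L_2$ claim.} Split $\|C-C_m\|_2\le\|C-\tilde C\|_2+\|\tilde C-C_m\|_2$.
\begin{itemize}
\item For the second term, the second fact gives $\|\tilde C-C_m\|_2\le \|\tilde g-g_m\|_\infty\left(\int_0^\infty (1+h^2)^{-2}dh\right)^{1/2}=\|\tilde g-g_m\|_\infty\sqrt{\pi/4}$. This tends to $0$ as $m\to\infty$ for fixed $\delta$.
\item For the first term, note $0\le \tilde g\le g$, so $0\le C-\tilde C\le C$ pointwise.
\item Moreover $\sup_h |C-\tilde C|\le\|g-\tilde g\|_1\to 0$ as $\delta\to 0$.
\item Since $C\in L_2$ by assumption, dominated convergence (with dominating function $C^2$) gives $\|C-\tilde C\|_2\to 0$.
\end{itemize}
Choosing $\delta$ first and then $m$ completes the argument.

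The main obstacle is the possible unboundedness of $g$ at the endpoints. Bernstein approximation cannot handle it directly, and the $L_2$ tail is delicate because mass of $g$ near $s=1$ produces slowly decaying covariances. The plan resolves this in two ways. The monotone truncation $0\le\tilde g\le g$ lets $C$ itself serve as the $L_2$ dominating function. The uniform error of $g_m-\tilde g$ is converted into an $L_2$ error through the square-integrable envelope $(1+h^2)^{-1}$. This is why both an $L_1$ bound and a sup-norm bound on $g_m-\tilde g$ are needed.
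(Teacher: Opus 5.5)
Your proof is correct. It shares the paper's skeleton: truncate $g$, apply Bernstein to the truncation, and combine by the triangle inequality. In both, the envelope $(1+h^2)^{-1}$ turns the uniform Bernstein error into an $L_2$ error, and dominated convergence with dominating function $C$ handles the truncation error in $L_2$. The truncation step itself, however, is genuinely different.

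The paper splits into two cases.
\begin{itemize}
\item If $g$ is bounded, it sets $g(0)=\liminf_{s\to 0}g(s)$ and $g(1)=\limsup_{s\to 1}g(s)$, and declares $g$ continuous on $[0,1]$.
\item If $g$ is unbounded, it truncates by height, $g_K=\min(g,K)$. It obtains $C_K\uparrow C$ pointwise by monotone convergence, checks $C(h)\to 0$ so it can work on the compactified half-line $[0,+\infty]$, and then invokes Dini's theorem to get uniform convergence.
\end{itemize}
You instead cut off multiplicatively near the endpoints and control the truncation error with the one-line bound $\sup_h|C-\tilde C|\le\|g-\tilde g\|_1$. This is just the paper's own Section 2 remark $\|C_1-C_2\|_\infty\le\|f_1-f_2\|_1$, written on the $g$-side. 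It removes the case split, Dini's theorem, and any need to study $C$ at infinity.

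More importantly, because $\varphi_\delta$ vanishes near $0$ and $1$, your $\tilde g$ is genuinely continuous on $[0,1]$, so Bernstein's theorem applies as stated. The paper's endpoint assignment does not guarantee this: a bounded continuous $g$ on $(0,1)$, or $\min(g,K)$, can oscillate at an endpoint. For example, $f(r)=r(1+\sin(1/r))e^{-r^2}$ is a continuous integrable spectrum, yet it gives $g(s)=\frac{1}{2}(1+\sin(1/\sqrt{-\log s}))$, which has no limit as $s\to 1$. Your route therefore also closes a small gap in the published argument.

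In exchange, the paper's bounded case gives a direct pointwise bound $|C(h)-C_m(h)|\le\epsilon/(1+h^2)$, with no truncation, whenever $g$ does extend continuously to $[0,1]$.
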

\begin{remark}\label{remark:L_q approximation}
	($L_q$ Approximation) In fact, for any functions $C\in\mathcal{A}$ with $||C||_q<\infty$ and $q>\frac{1}{2}$, given any $\epsilon>0$, there exists a function $C_m(h)\in \mathcal{A}_m$, $m\in \mathbf{N}^+$ so that $||C - C_m||_q < \epsilon$.
\end{remark}
\begin{remark}\label{remark:example} Covariance functions $C(h)$ with unbounded $g(s)$ may not belong to $L_q$ $(q\geq 1)$ space. As mentioned in Section \ref{sec:introduction}, admittedly for Generalised Cauchy $C(h)=\frac{1}{(1+h^2)^{\frac{1}{4}}}$ with $g(s)=\frac{1}{2}(-logs)^{-\frac{3}{4}}$ satisfying $||C||_2 = \infty$ and $||C||_1=\infty$,  there always exists a series of functions $\{C_m\in\mathcal{A}_m\}_{m=1}^\infty$ such that $\underset{m\rightarrow\infty}{\lim}{||C-C_m||_{\infty}}=0$.
\end{remark}

Properties of the approximation class of functions $\mathcal{A}_m$ are of great interest.  Corollary \ref{corollary: basis of A_m} indicates  any function in $\mathcal{A}_m$ can be uniquely determined by coefficients. Corollary \ref{corollary: increasing A_m} indicates that such spaces of functions are nested. One can refer to Appendix \ref{sec:verify} for detailed verification.

\begin{corollary}\label{corollary: basis of A_m}
	For any $m\in \mathbf{N}^+, \{A_{k,m}\}_{k=1}^m$ forms the basis of $\mathcal{A}_m$.
\end{corollary}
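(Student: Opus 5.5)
Since $\mathcal{A}_m$ is by definition the set of nonnegative combinations of $A_{1,m},\dots,A_{m,m}$, spanning is automatic. What has to be shown is that the representation is unique, i.e.\ that $A_{1,m},\dots,A_{m,m}$ are linearly independent as functions of $h\ge 0$. Suppose $\sum_{k=1}^m c_k A_{k,m}(h)=0$ for all $h\ge 0$, with real $c_k$. We will show that every $c_k$ vanishes.

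The plan is to pull back to the spectral side using the integral representation from which $A_{k,m}$ was derived. By (\ref{eq: g to C}) and (\ref{eq:form of g_m}),
\begin{equation*}
A_{k,m}(h)=m\int_0^1 s^{h^2}{m-1\choose k-1}s^{k-1}(1-s)^{m-k}\,ds .
\end{equation*}
Hence the vanishing combination says that $\int_0^1 s^{t}p(s)\,ds=0$ for all $t=h^2\ge 0$, where
\begin{equation*}
p(s)=\sum_k c_k{m-1\choose k-1}s^{k-1}(1-s)^{m-k}
\end{equation*}
is a polynomial of degree at most $m-1$. Taking $t=0,1,\dots,m-1$ shows that $p$ is orthogonal in $L_2[0,1]$ to every polynomial of degree at most $m-1$, and in particular to itself. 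Therefore $\int_0^1 p^2=0$, so $p\equiv 0$.

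The Bernstein basis polynomials ${m-1\choose k-1}s^{k-1}(1-s)^{m-k}$, $k=1,\dots,m$, are linearly independent. One can see this by expanding in powers of $s$: the matrix relating them to $1,s,\dots,s^{m-1}$ is triangular with nonzero diagonal. Alternatively, divide by $(1-s)^{m-1}$ and substitute $u=s/(1-s)$, which produces distinct powers $u^{k-1}$. It follows that every $c_k=0$.

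As a backup, there is a direct argument avoiding integrals. Multiply the relation by $\prod_{j=1}^m(1+h^2/j)$ to get the polynomial identity $\sum_k c_k\prod_{j=1}^{k-1}(1+x/j)=0$ in $x=h^2$. Its terms have distinct degrees $0,1,\dots,m-1$, so the coefficients vanish from the top down. This route is actually the cleanest, and I would present it as the main proof. No step is a genuine obstacle. The only care needed is to note that "basis" here means a unique representation inside a cone, and linear independence gives exactly that.
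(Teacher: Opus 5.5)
Your proof is correct, and it takes a different route from the paper's. The paper argues on the spectral side. It substitutes $h\mapsto\sqrt{h}$ so that each $\tilde A_{k,m}$ becomes the Laplace transform of $\tilde f_{k,m}(r)\propto e^{-kr}(1-e^{-r})^{m-k}$. It then invokes uniqueness of the Laplace transform to get $\sum_k w_k\tilde f_{k,m}\equiv 0$, and finally substitutes $s=e^{-r}$ to reduce to linear independence of the Bernstein polynomials.

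Your first route reaches the same endpoint, Bernstein independence, but replaces the Laplace uniqueness theorem with a finite moment argument. Only the $m$ values $h^2=0,1,\dots,m-1$ are used, and orthogonality of $p$ to itself in $L_2[0,1]$ forces $p\equiv 0$. Your preferred route drops the integral representation altogether: clearing the common denominator $\prod_{j=1}^m(1+x/j)$ turns $A_{k,m}$ into $\prod_{j=1}^{k-1}(1+x/j)$, which has exact degree $k-1$, so independence is immediate.

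The algebraic argument has two advantages. It is self-contained, and it proves more: $\sum_k c_k\prod_{j<k}(1+x/j)$ has degree at most $m-1$, so vanishing at any $m$ distinct values of $h\ge 0$ already forces all $c_k=0$. In other words, $\bm{w}_m$ is determined by the values of $C_m$ at any $m$ distinct distances. This bears directly on recovering the weights from a covariance matrix with at least $m$ distinct inter-site distances, counting $0$.

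The paper's argument needs the identity on the whole half-line together with an analytic uniqueness theorem. In exchange, it is phrased in terms of spectral densities, the same framework used for Theorem~\ref{theorem: covariance approximation} and Corollary~\ref{corollary: increasing A_m}. It would also carry over unchanged to any basis family defined through linearly independent spectral densities, including ones with no closed-form product expression.
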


\begin{corollary}\label{corollary: increasing A_m}
	For any $m\in\mathbf{N}^+$, $\mathcal{A}_m \subseteq \mathcal{A}_{m+1} \subseteq \mathcal{A}$. 
\end{corollary}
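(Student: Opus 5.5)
We need to prove Corollary \ref{corollary: increasing A_m}: $\mathcal{A}_m\subseteq\mathcal{A}_{m+1}\subseteq\mathcal{A}$.

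\textbf{The inclusion $\mathcal{A}_m\subseteq\mathcal{A}$.} Take $C_m=\sum_{k=1}^m w_kA_{k,m}$ with $w_k\ge 0$. Each basis function has the integral form
\[
A_{k,m}(h)=\frac{1}{Beta(k,m-k+1)}\int_0^1 s^{h^2}s^{k-1}(1-s)^{m-k}\,ds .
\]
Hence $C_m(h)=\int_0^1 s^{h^2}g_m(s)\,ds$, where $g_m(s)=\sum_k w_k\, s^{k-1}(1-s)^{m-k}/Beta(k,m-k+1)$. This $g_m$ is nonnegative, continuous and integrable on $[0,1]$. Using (\ref{eq: g to f}) we set $f_m(r)=2r e^{-r^2}g_m(e^{-r^2})$. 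Then $f_m$ is a continuous nonnegative density with $\int_0^\infty f_m=\int_0^1 g_m<\infty$, and $C_m(h)=\int_0^\infty e^{-r^2h^2}f_m(r)\,dr$. So $C_m$ has the Schoenberg representation (\ref{eq: Bochner isotropic Inf eq new main}) with a continuous spectrum, and therefore $C_m\in\mathcal{A}$.

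\textbf{The inclusion $\mathcal{A}_m\subseteq\mathcal{A}_{m+1}$.} It suffices to write each $A_{k,m}$ as a nonnegative combination of $A_{1,m+1},\dots,A_{m+1,m+1}$. By the integral representation this reduces to degree elevation of Bernstein polynomials. Start from $1=s+(1-s)$:
\[
s^{k-1}(1-s)^{m-k}=s^{k}(1-s)^{m-k}+s^{k-1}(1-s)^{m-k+1}.
\]
These two monomials correspond to the $(m+1)$-level indices $k+1$ and $k$. Substituting into the integral gives
\[
A_{k,m}=\frac{Beta(k+1,m-k+1)}{Beta(k,m-k+1)}A_{k+1,m+1}+\frac{Beta(k,m-k+2)}{Beta(k,m-k+1)}A_{k,m+1}=\frac{k}{m+1}A_{k+1,m+1}+\frac{m-k+1}{m+1}A_{k,m+1}.
\]
The same identity can be checked directly from the product form. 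Use $A_{k,m}=A_{k,m+1}(1+h^2/(m+1))$ and $A_{k+1,m+1}=A_{k,m+1}(1+h^2/k)$. The right-hand side then equals $A_{k,m+1}\bigl[1+\tfrac{h^2}{m+1}\bigr]$, which is the left-hand side.

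Consequently $\sum_k w_kA_{k,m}=\sum_{j=1}^{m+1}w'_jA_{j,m+1}$, where
\[
w'_j=\frac{j-1}{m+1}\,w_{j-1}+\frac{m-j+1}{m+1}\,w_j,
\]
with the convention $w_0=w_{m+1}=0$. Every $w'_j$ is nonnegative, so $C_m\in\mathcal{A}_{m+1}$.

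The main obstacle is only bookkeeping: getting the Beta-function ratios right in the degree-elevation step. The product-form check above settles them independently of the integral computation.
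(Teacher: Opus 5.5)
Your proof is correct and follows essentially the same route as the paper: for $\mathcal{A}_m\subseteq\mathcal{A}$ you use the Schoenberg representation with spectrum $f_m(r)=2re^{-r^2}g_m(e^{-r^2})$, and for $\mathcal{A}_m\subseteq\mathcal{A}_{m+1}$ you use Bernstein degree elevation via $1=s+(1-s)$, giving nonnegative new weights. The differences are cosmetic. You normalize by Beta functions rather than the paper's binomial coefficients, which gives the clean identity $A_{k,m}=\frac{k}{m+1}A_{k+1,m+1}+\frac{m-k+1}{m+1}A_{k,m+1}$, checkable directly from the product form. You also state explicitly that $f_m$ is continuous, as the definition of $\mathcal{A}$ requires, whereas the paper's proof checks only integrability.
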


\begin{corollary}\label{corollary: compact A_m}
	For any $m\in\mathbf{N}^+$, $\mathcal{A}_m$ is closed under the $L_q$ $(q\geq 1)$ and $L_\infty$ norm. 
\end{corollary}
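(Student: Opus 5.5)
The plan is to reduce closedness of $\mathcal{A}_m$ to closedness of the nonnegative orthant $[0,\infty)^m$ under a linear bijection. By Corollary \ref{corollary: basis of A_m}, the map
\[
T:\mathbb{R}^m\to \mathrm{span}\{A_{1,m},\ldots,A_{m,m}\},\qquad T(\mathbf{w})=\sum_{k=1}^m w_kA_{k,m},
\]
is a linear isomorphism onto an $m$-dimensional vector space $V_m$. Moreover, $\mathcal{A}_m=T([0,\infty)^m)$ is a convex cone.

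First I will check that $V_m$ sits inside the normed space in question, so that the norm restricted to $V_m$ is a genuine norm.
\begin{itemize}
\item For $L_\infty$ this is immediate, since $0< A_{k,m}(h)\le 1$.
\item For $L_q$ with $q\ge1$, I note that $A_{k,m}(h)\le (1+h^2/m)^{-(m-k+1)}$. This lies in $L_q[0,\infty)$ when $2q(m-k+1)>1$, which always holds.
\end{itemize}
(If the paper intends $L_q$ only over a bounded domain, integrability is trivial.) A norm vanishing on a continuous function forces the function to be identically zero, so by linear independence the norm is a norm on $V_m$.

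Next, since all norms on a finite-dimensional space are equivalent, the map $\mathbf{w}\mapsto\|T(\mathbf{w})\|$ is equivalent to the Euclidean norm on $\mathbb{R}^m$. Hence $T$ is a homeomorphism from $\mathbb{R}^m$ onto $(V_m,\|\cdot\|)$. Also, $V_m$ is closed in the ambient space $L_q$ or $L_\infty$, because finite-dimensional subspaces of normed spaces are complete and therefore closed.

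Now take a sequence $C^{(n)}=T(\mathbf{w}^{(n)})\in\mathcal{A}_m$ converging in norm to some $C$.
\begin{enumerate}
\item Since $V_m$ is closed, $C\in V_m$, so $C=T(\mathbf{w})$ for a unique $\mathbf{w}$.
\item Since $T^{-1}$ is continuous, $\mathbf{w}^{(n)}\to\mathbf{w}$ coordinatewise.
\item Nonnegativity passes to the limit, so $w_k\ge 0$ for every $k$.
\end{enumerate}
Therefore $C\in\mathcal{A}_m$, and the class is closed.

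The main obstacle is the continuity of $T^{-1}$, that is, that norm convergence controls the coefficients. This rests on linear independence of the $A_{k,m}$ (Corollary \ref{corollary: basis of A_m}) together with norm equivalence. An explicit alternative, if one wants to avoid abstract norm equivalence, is compactness of the unit sphere $\{\|\mathbf{w}\|_2=1\}$. The function $\mathbf{w}\mapsto\|T\mathbf{w}\|$ is continuous and strictly positive there, so it has a positive minimum $c$. This gives $\|\mathbf{w}\|_2\le c^{-1}\|T\mathbf{w}\|$, which shows that the sequence of coefficients is Cauchy. The rest of the argument is routine.
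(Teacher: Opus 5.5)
Your proof is correct and follows the same route as the paper, whose one-line proof asserts that Corollary \ref{corollary: basis of A_m} gives a continuous bijection between $\mathcal{A}_m$ and the nonnegative orthant $[0,\infty)^m$. You supply the steps that argument leaves implicit: integrability of the $A_{k,m}$, closedness of their finite-dimensional span, and continuity of the inverse via norm equivalence. These steps are genuinely needed, because a continuous bijection alone does not give closedness.
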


Thus by Theorem \ref{theorem: covariance approximation} and Corollary \ref{corollary: increasing A_m}-\ref{corollary: compact A_m}, $\{\mathcal{A}_m\}_{m=1}^\infty$ form a  sieve space (nested class of approximation spaces), a powerful structure for constructing consistent estimation. Moreover, all functions in $\mathcal{A}_m$ are themselves valid covariance functions for $m\in\mathbf{N}^+$, a property that is practically very useful. It is also well worth pointing out that any $C_m\in\mathcal{A}_m$ satisfies $C_m(0)=\sum_{k=1}^m w_k$ and thereby we can easily obtain sieve space of isotropic correlation functions:
\begin{equation*}
\mathcal{R}_m = \{R_m(h):=\sum_{k=1}^m w_k A_{k,m}(h), \sum_{k=1}^m w_k=1,w_k\geq 0, k=1,\cdots, m\}.
\end{equation*}

\subsection{Numerical Illustrations}

To demonstrate the accuracy of our model to approximate a given covariance function, we present numerical examples with several popular covariance functions.  
Given the true covariance function $C_0\in\mathcal{A}$, there are a number of ways to assess  the global performance of its approximation errors.  Here we consider $L_2$ norm for ease of computation.

Without loss of generality, we assume $C_0(0)=1$. The approximation functions can be selected from the sieve space $\mathcal{R}_m\;(m\in\mathbb{N^+})$ which inherits the properties in Theorem \ref{theorem: covariance approximation}. For $C_{m}(h;\mathbf{w}_m)=\sum_{k=1}^m w_kA_{k,m}(h)$ with $\mathbf{w}_m=(w_1,\cdots,w_m)^T$,
we employ numerical methods to obtain the optimal weight $\mathbf{w}_m$ that minimizes the approximation error
\begin{equation}
||C_0-C_{m}||_2^2= \int_0^\infty\left[\sum_{k=1}^m w_kA_{k,m}(h)-C_0(h)\right]^2dh, \label{eq:approximation_error_with_w}
\end{equation}
subject to $\mathbf{w}_m\in \mathcal{S}_m:=\{\mathbf{w}:\sum_{k=1}^{m}w_k=1, w_k\geq 0,  k=1,\cdots, m\}$, a m-dimensional simplex.
This problem is akin to so-called positive lasso (\citep{Efron2004}), offering us to efficiently compute a sparse solution. (\ref{eq:approximation_error_with_w}) can be further rewritten as
\begin{align}
||C_0-C_{m}||_2^2 
&=\textbf{w}_m^T \textbf{M} \textbf{w}_m ,\label{eq: quadratic programming problem}
\end{align}
where the symmetric positive definite matrix $\mathbf{M}$ has $(i,j)$ entry $M_{[ij]}=\int_0^\infty (A_{i,m}(h)-C_0(h))\cdot(A_{j,m}(h)-C_0(h))dh
$ obtained numerically by well-established quadrature methods. Accordingly, the above optimization problem can be formulated as a convex quadratic programming problem with a unique solution.
R package {\tt quadprog} is used to compute the optimal weight parameter  for any given $m$ and true covariance function $C_0$. 

Theoretically, 
larger m engenders better approximation but for practical purpose, we set a desired threshold to bound the approximation error to determine the least possible m so that $\mathbf{M}$ remains numerically positive definite. When $C_0$ has finite $L_2$ norm, relative approximation error (RAE) between the true and approximation $\hat{C}_0(h)=C_m(h;\hat{\mathbf{w}}_m)$ defined as
\begin{equation*}
RAE =\frac{||C_0 - \hat{C}_0 ||_2}{|| C_0 ||_2}
\end{equation*}
serves as a sensible choice, accounting for the magnitude of $||C_0||_2$ when assessing the approximation performance of different parametric covariance families. 
In this manner, we select the simplest model with RAE smaller than threshold $\epsilon=0.05$ from a sequence of approximation functions for different values of m.
We implement this procedure for commonly used parametric covariance families.
\begin{enumerate}
	\item Gaussian(1,1) where $C_{Gaussian}(h;\sigma^2,\rho) = \sigma^2 exp(-(\frac{h}{\rho})^2)$; 
	\item Matern(1,1,1) where $C_{Matern}(h;\sigma^2,\rho,\nu) =\sigma^2 \frac{1}{2^{\nu-1} \Gamma(\nu)} (\frac{h}{\rho})^{\nu}  K_{\nu}(\frac{h}{\rho})$; 
	\item Cauchy(1,1) where $C_{Cauchy}(h;\sigma^2,\rho) = \sigma^2 (1+(\frac{h}{\rho})^2)^{-\frac{1}{2}}$. 
\end{enumerate}

Our proposed sieve model can attain the desired approximation for Gaussian and Matern with only $m=4$ and $m=3$, while Cauchy covariance requires $m=461$ (see Fig \ref{fig:RAE vs m and True vs approximation covariance}), indicating that the choice of m varies considerably with the tail decay rate of the true covariance.  Note that Matern(1,1) and Cauchy(1,1,1) belong to $\mathcal{A}$ while
Gaussian(1,1) does not given the fact that $exp(-h^2)=\int_{0}^{\infty}exp(-r^2h^2)f(r)dr$ with $f(r)=\delta(r-a)$ and $\delta$ is Dirac function, yet still included within our scope to test our model. In view of the approximation performance for Gaussian, our model is rather robust to the restriction of continuity on the spectral density $f$.

\begin{figure}[H]
	\centering
	
	\begin{minipage}[b]{0.31\textwidth}
		\centering
		\includegraphics[width=\linewidth]{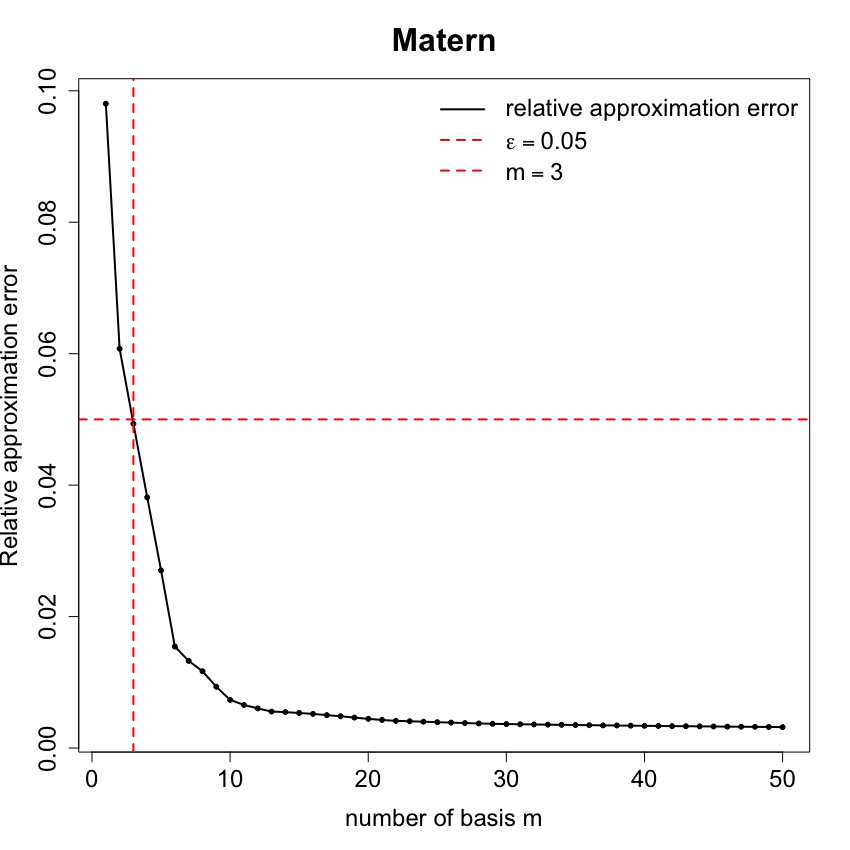}
		
		(a)
	\end{minipage}
	\hspace{1pt}
	\begin{minipage}[b]{0.31\textwidth}
		\centering
		\includegraphics[width=\linewidth]{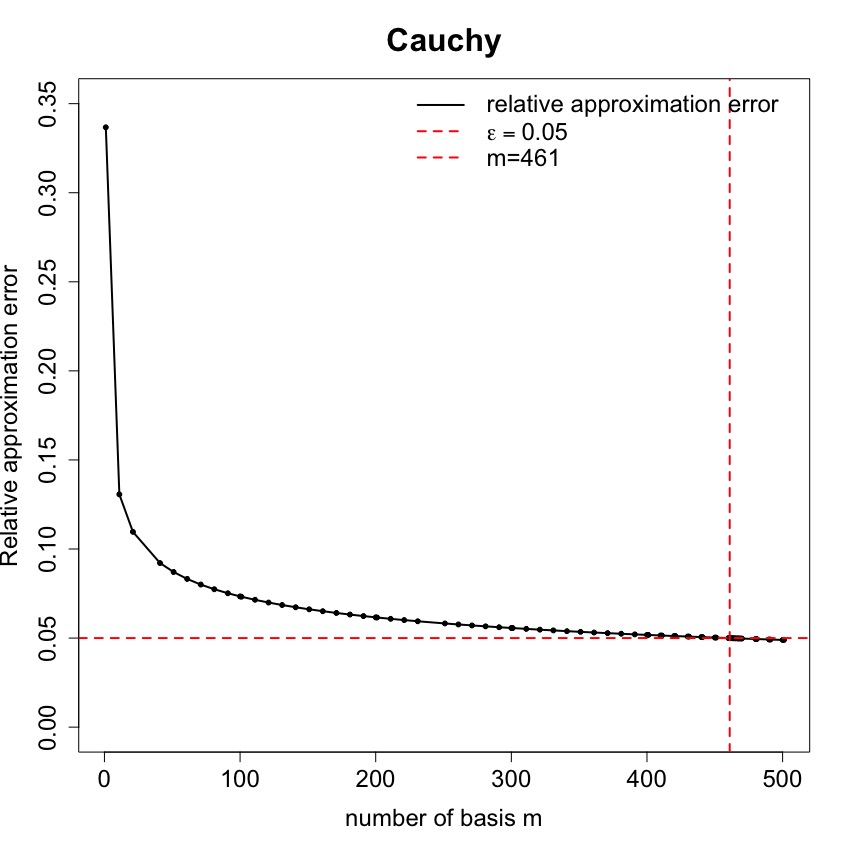}
		
		(b)
	\end{minipage}
	\hspace{1pt}
	\begin{minipage}[b]{0.31\textwidth}
		\centering
		\includegraphics[width=\linewidth]{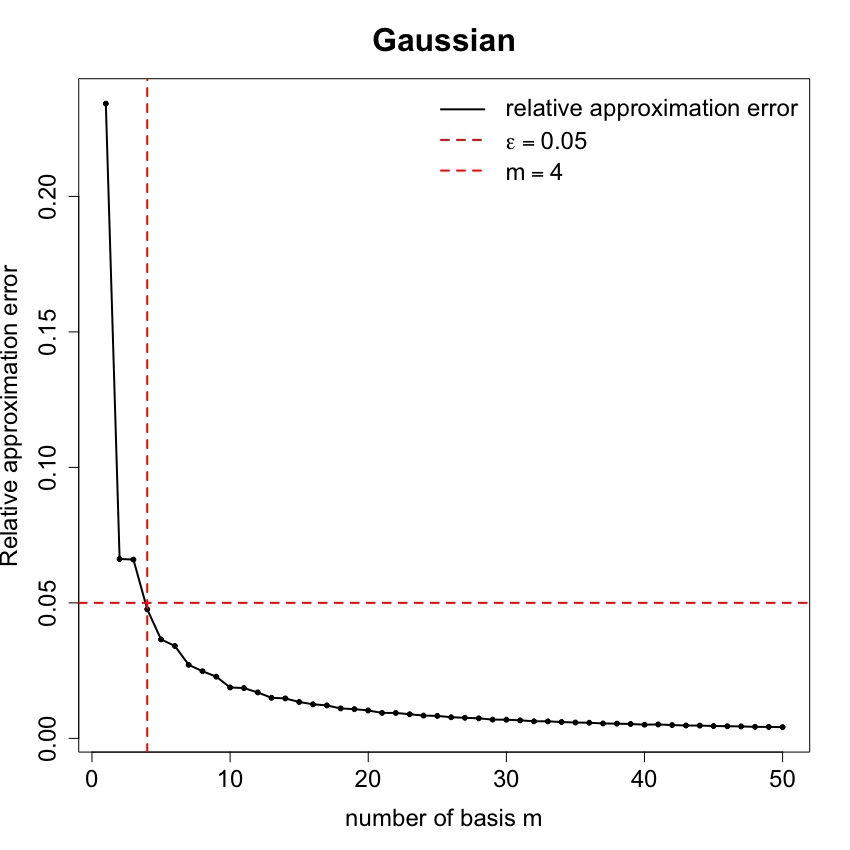}
		
		(c)
	\end{minipage}
	
	\vspace{4pt}
	
	\begin{minipage}[b]{0.31\textwidth}
		\centering
		\includegraphics[width=\linewidth]{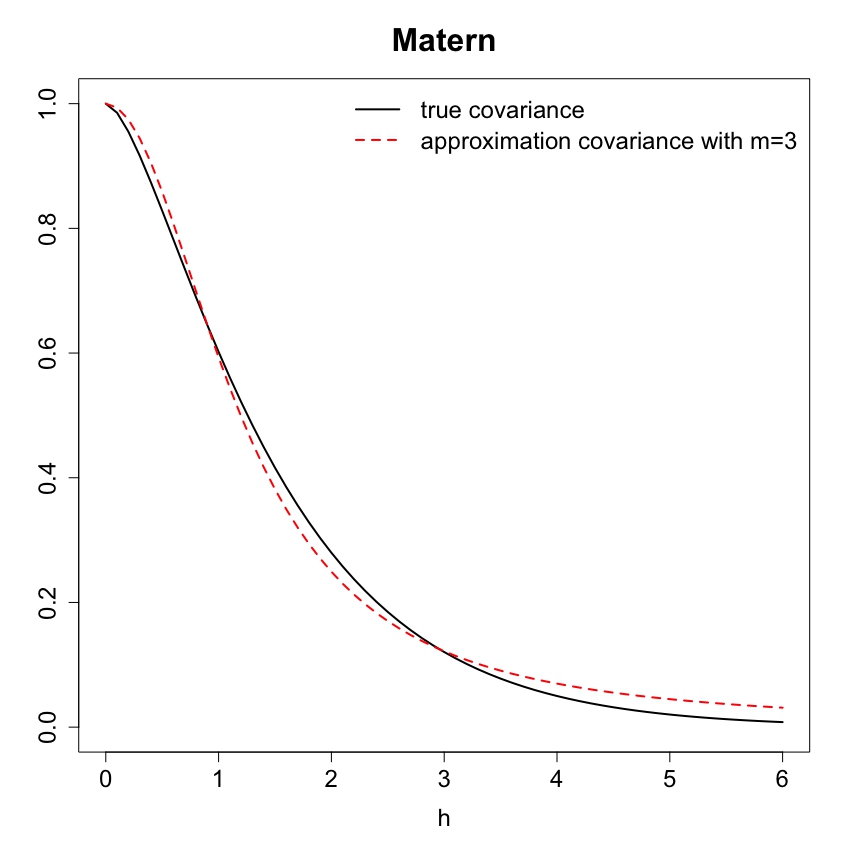}
		
		(d)
	\end{minipage}
	\hspace{1pt}
	\begin{minipage}[b]{0.31\textwidth}
		\centering
		\includegraphics[width=\linewidth]{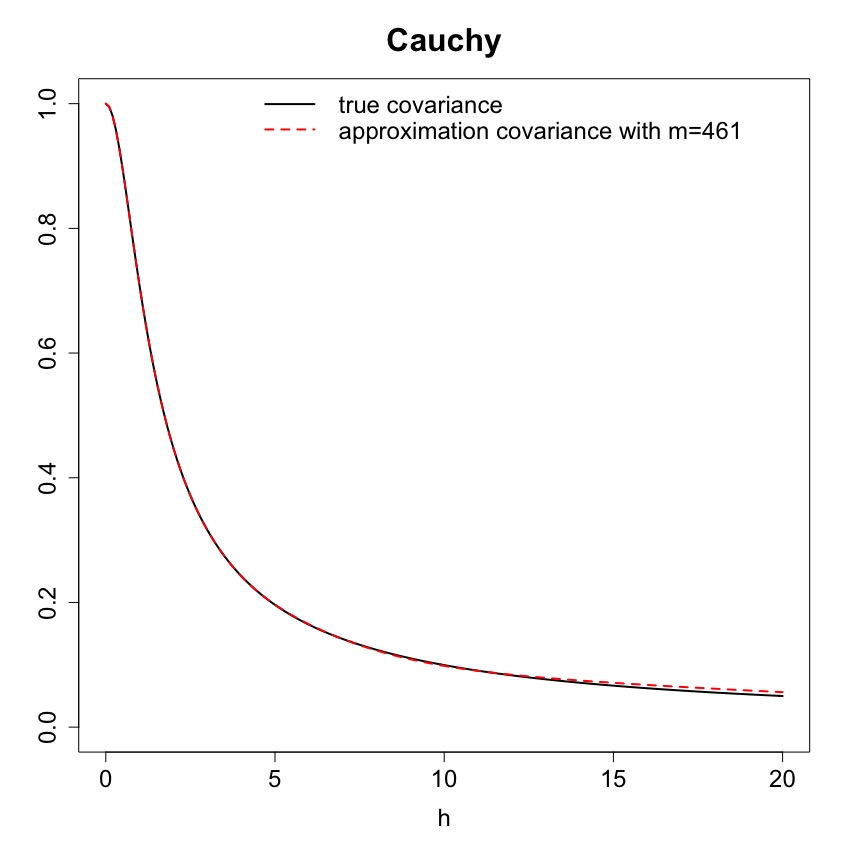}
		
		(e)
	\end{minipage}
	\hspace{1pt}
	\begin{minipage}[b]{0.31\textwidth}
		\centering
		\includegraphics[width=\linewidth]{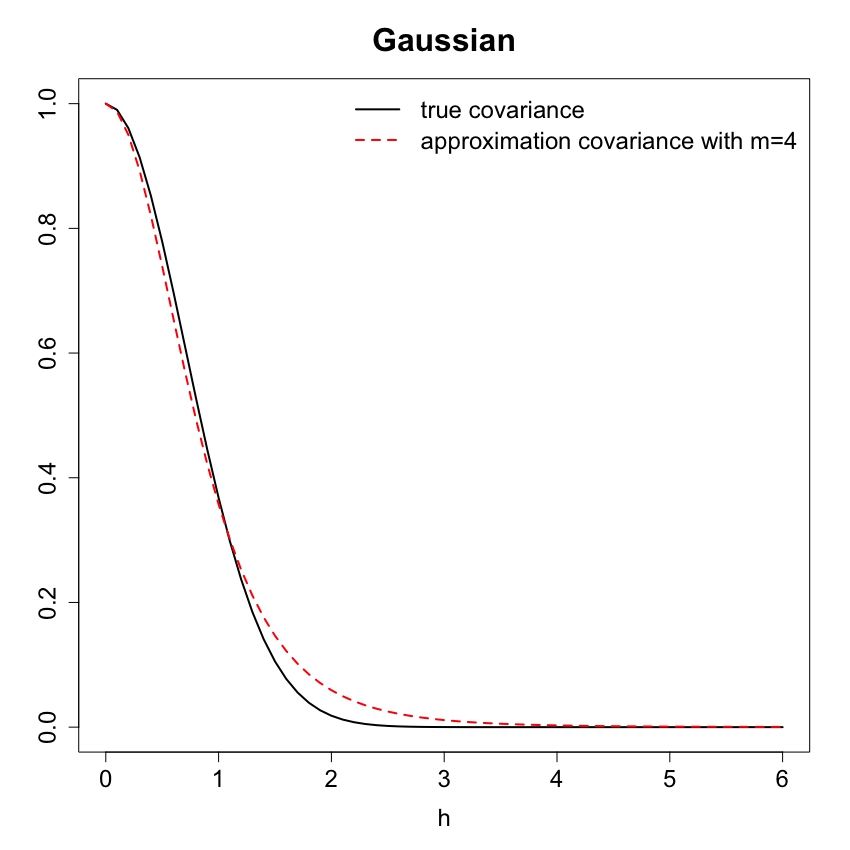}
		
		(f)
	\end{minipage}
	
	\caption{Relative approximation error vs.\ $m$ in (a)--(c), and true vs.\ approximation covariance in (d)--(f). Each row includes results for Matern, Cauchy, and Gaussian.}
		\label{fig:RAE vs m and True vs approximation covariance}
\end{figure}

\section{Sieve Maximum Likelihood Estimation Method}
\label{sec: estimation method}
Consider scenarios where observations are generated from a Gaussian process with an unknown isotropic covariance function valid in $\mathbb{R}^\infty$. We are interested in the statistical estimation of the model (e.g., the weight parameter $\mathbf{w}_m$ and the tuning parameter m) based on a set of random samples for a given sample size n.

Let $\{Y(\bm{s}), \bm{s}\in\cal{S}$ $\subset \mathbb{R}^d\}$ be Gaussian process with isotropic covariance function $C\in \mathcal{A}$ and mean $E(Y(\bm{s}))=\bm{X}^T(\bm{s})\bm{\beta}$, where $\bm{X}(\bm{s})\in \mathbb{R}^p$ is known vector valued predictor measured at location $\bm{s}$.
Unknown parameters $\bm{\beta}$ and $C$ are to be estimated based on a finite number of observations $\bm{Y}_n=(Y_1(\bm{s}_1),\cdots, Y_n(\bm{s}_n))^T$ and $\mathbf{X}_n=(\bm{X}_1(\bm{s}_1), \cdots,\bm{X}_n(\bm{s}_n))^T$  at $S_n = \{\bm{s}_1,\cdots,\bm{s}_n\}\subset \mathcal{S}$ satisfying
\begin{equation*}
\bm{Y}_n(\bm{s})=\mathbf{X}_n(\bm{s})\bm{\beta}+\bm{\epsilon}_n(\bm{s}),\;\;\;\;\bm{\epsilon}_n\sim N_n(\mathbf{0}, \mathbf{\Sigma}_{{C},n}),
\end{equation*}
where $\mathbf{\Sigma}_{C,n}$ is the covariance matrix under $C$
with the $(i,j)$ entry $\mathbf{\Sigma}_{C,n[i,j]} = C(||\bm{s}_i - \bm{s}_j ||_2)$. 
Define the parameter space $\bm{\Theta} = \{(\bm{\beta},C): \bm{\beta}\in \mathcal{B}\subset \mathbb{R}^p, C\in\bm{\Omega}\subset\mathcal{A}\} =\mathcal{B} \times \bm{\Omega}$ and the true parameter $\bm{\theta}_0=(\bm{\beta}_0,C_0)\in\mathcal{B} \times\bm{\Omega}$, where $\mathcal{B}$ and $\bm{\Omega}$ are bounded and convex sets in $\mathbb{R}^p$ and $\mathcal{A}$ respectively. (i.e., $\mathcal{B}=\{\bm{\beta}\in \mathbb{R}^p:||\bm{\beta}||_2\leq U_{\bm{\beta}}<\infty\}$ and $\bm{\Omega} = \{C\in\mathcal{A}:0< L_{C}\leq || C ||_\infty\leq U_{C}<\infty\}$ with some known lower bound $L_{C}$ and upper bounds $U_{\bm{\beta}},U_C$.   
The log-likelihood function is given by 
\begin{equation}
l_{n}(\bm{\beta},C;\bm{Y}_n)=-\frac{1}{2}logdet(\mathbf{\Sigma}_{C,n})-\frac{1}{2}(\bm{Y}_n-\mathbf{X}_n\bm{\beta})^T[\mathbf{\Sigma}_{C,n}]^{-1}(\bm{Y}_n-\mathbf{X}_n\bm{\beta}) -\frac{1}{2} n\log(2\pi), 
\label{eq:deviance}
\end{equation}
where $logdet$ is the natural logarithm of the determinant of a matrix. Notably, the function space $\bm{\Omega}\subset\mathcal{A}$ can be regarded as infinite-dimensional parameter space and it follows from Theorem \ref{theorem: covariance approximation} that $(\bm{\Omega}\cap\mathcal{A})\subseteq\overline{\cup_{m=1}^\infty(\bm{\Omega}\cap\mathcal{A}_m)}$ in the sense that any function $C\in\bm{\Omega}$ can be uniformly approximated by  $\sum_{k=1}^m w_{k,m} A_{k,m}(h)$ for non-negative weight vector $\bm{w}_m=(w_{1,m},\ldots,w_{m,m})^T$ as $m\rightarrow\infty$.

One possibility to estimate the true parameter is to maximize the log-likelihood function (\ref{eq:deviance}) over the entire infinite-dimensional space $\bm{\Theta}$ assuming that such an extremum exists. However, as pointed out by many earlier works (e.g., \citep{Amemiya1985,Newey1994}), such optimization methods are not only intractable in terms of computation but also lead to undesirable large sample properties caused by lack of structured basis. Alternatively, we employ method of sieves to estimate $\bm{\theta}_0$ which is deemed stable with satisfactory asymptotic properties. 

The concept of sieve was first introduced by \citep{Grenander1981}. Since then, the sieve method has been a powerful tool in the area of nonparametric and semiparametric statistical inference.
The intrinsic idea is to approximate an infinite-dimensional space $\bm{\Theta}$ by a series of finite-dimensional parameter space $\bm{\Theta}_{m}$ that depend on the sample size n and to estimate on the finite-dimensional space $\bm{\Theta}_m$ instead of $\bm{\Theta}$.
Mathematically, let $\bar{d}(\cdot,\cdot)$ be a pseudodistance on $\bm{\Theta}$; $L_n=L_n(\bm{\theta};\mathbf{W}_n)$ is a function of the observations $\mathbf{W}_n$ and the parameter $\bm{\theta}$, which is called the {\em empirical criterion function}. $\{\bm{\Theta}_m\}_{m=1}^\infty$ is a sequence of sieve spaces and
typically meets some conditions (\citep{Chen2007},pp 5589), which are: (i) $\bm{\Theta}_m$ are compact under $\bar{d}$; (ii) nondecreasing ($\bm{\Theta}_m\subset\bm{\Theta}_{m+1}\subset\bm{\Theta}, \forall  m\in \mathbf{N}^+$); (iii) For any $\bm{\theta}\in\bm{\Theta}$, there exists an sequence $\{\bm{\theta}_m: \bm{\theta}_m\in \bm{\Theta}_m\}_{m=1}^\infty$ such that $\bar{d}(\bm{\theta}_m,\bm{\theta})\rightarrow0$ as $m\rightarrow\infty$. 
In the language of \citep{Grenander1981}, 
the estimator $\hat{\bm{\theta}}_n$ satisfying 
\begin{align}
L_n(\hat{\bm{\theta}}_n;\mathbf{W}_n) \geq \underset{\bm{\theta}\in\bm{\Theta}_m}{\sup}{L_n(\bm{\theta};\mathbf{W}_n)}-\eta_n \label{eq:empirical criterion function}
\end{align}
is called the {\em sieve} MLE of $\bm{\theta}$ where m increases with n. The rate at which $\eta_n$ tends to 0 does not affect the asymptotic properties of $\hat{\bm{\theta}}_n$

In this manner, we construct sieve spaces $\bm{\Theta}_{m}=\mathcal{B}\times \bm{\Omega}_m,m\in\mathbf{N}^+$ and define $\bar{d}$ as $\bar{d}((\bm{\beta}_1,C_1),(\bm{\beta}_2,C_2)) = || \bm{\beta}_1 - \bm{\beta}_2 ||_2 + ||C_1 - C_2 ||_{\infty}$ for $(\bm{\beta}_i, C_i)\in \bm{\Theta}$, $i=1,2$. Consequently, such sieve spaces fullfill the aforementioned requirements because Corollary \ref{corollary: compact A_m} and \ref{corollary: increasing A_m} lead to condition (i) (ii)
and existence of such sequence meeting condition (iii) is assured by Theorem \ref{theorem: covariance approximation}. We choose $\frac{1}{n}l_n(\bm{\beta},C;\mathbf{Y}_n)$ in (\ref{eq:deviance})
as the empirical criterion function in (\ref{eq:empirical criterion function}) and $\eta_n=0$, then
\begin{align*}
\hat{\bm{\theta}}_n=\arg\underset{\bm{\theta}\in\Theta_{m(n)}}{\sup}\frac{1}{n}l_n(\bm{\theta},\bm{Y}_n),\mbox{ where $m(n)$ depends on $n$}
\end{align*}
is the sieve MLE of $\bm{\theta}_0$. 
\subsection{Asymptotic Theory}
\label{subsec: asymptotic theory}
For independent and repetitive data,  theorems on the asymptotic behaviour of sieve estimators are well-developed using the tool of empirical process (\citep{Shen1994, Van1996, Huang1997} and \citep{Xue2010}), while they can not readily be extended to 
the case when observations $\bm{Y}_n$ is a single realization from Gaussian distribution $N_n(\mathbf{X}_n\bm{\beta},\Sigma_{C_0,n})$.
Nevertheless, we are still able to establish asymptotic results of proposed sieve estimator.
It is worth mentioning that there are two standard approaches to increase the number of observations towards infinity (\citep{Cressie1993}). The first is called increasing-domain asymtotics where the domain expands in spatial extent and the sampling density remains bounded away from infinity while the second is called infill or fixed-domain asymptotics where the domain stays constant and the sampling density increases to infinity. Under fixed-domain asymptotics only a subcomponent of the covariance parameter, called the microergodic parameter,  can be estimated consistently (\citep{Zhang2004}). Given the well-known limitations of the fixed domain asymptotics, we only explore the increasing-domain asymptotics in our work.

Before we jump to the conclusion, some notations are to be clarified. Let $P_{\bm{\theta}}$ be the distribution of $\bm{Y}_n$ given $\mathbf{X}_n$ at $S_n$ under $\bm{\theta}$, $E_0$ be the expectation under $\bm{\theta}_0$. 
Let $\lambda_{max}(\mathbf{A})$ and $\lambda_{min}(\mathbf{A})$ denote, respectively, the largest and smallest eigenvalue of a positive definite matrix $\mathbf{A}$.  Let $\mathbf{\Sigma}_{C,n}$ and $\mathbf{\Sigma}_{0,n}$  be the covariance matrix under $C$ and $C_0$.
Specially $\{\mathbf{A}_{k,m,n}\}_{k=1}^m$ are matrices under covariance basis functions $\{A_{k,m}\}_{k=1}^m$. 
In addition, we refer to the following assumptions on parameter space $\bm{\Theta}$ and the sequence of domains ${S_n}$.

\begin{enumerate}[label=D\arabic*]
	\item \label{assumption: compact and convex}$\bm{\Theta}$ is a bounded and convex set. Specifically, $\mathcal{B}=\{\bm{\beta}\in \mathbb{R}^p:||\bm{\beta}||_2\leq U_{\bm{\beta}}<\infty\}$ and $\Omega = \{C\in\mathcal{A}:0< L_{C}\leq || C ||_\infty\leq U_{C}<\infty\}$ with some known lower bound $L_{C}$ and upper bounds $U_{\bm{\beta}},U_C$.
	\item $\bm{\theta}_0$ is an interior point of $\bm{\Theta}$.
	\item \label{assumption:identification of C and beta} 
	For $\bm{\theta}=(\bm{\beta},C)\in\bm{\Theta}$ and $n\in\mathbb{N}^+$, if $\bm{\theta}\neq \bm{\theta}_0$, then $\mathbf{\Sigma}_{0,n}\neq \mathbf{\Sigma}_{C,n}$ or $\mathbf{X}\bm{\beta}\neq \mathbf{X}\bm{\beta}_0$. 
	
	\item \label{assumption: X^TX} $\underset{n\rightarrow\infty}{\lim\sup}\; \lambda_{max}(\frac{1}{n}\mathbf{X}^T_n \mathbf{X}_n)<\infty$ and $\underset{n\rightarrow\infty}{\lim\inf}\;\lambda_{min}(\frac{1}{n}\mathbf{X}^T_n \mathbf{X}_n)>0$.
	
	\item \label{assumption:max and min eigen for true} 	$\underset{n\rightarrow\infty}{\lim\sup}\;\lambda_{max}(\mathbf{\Sigma}_{0,n}) <\infty$ and 
	$\underset{n\rightarrow\infty}{\lim\inf}\;\lambda_{min}(\mathbf{\Sigma}_{0,n}) >0$.
	
	\item \label{assumption:max and min eigen for A_m,m} For $m\in\mathbb{N}^+$,
	$\underset{n\rightarrow\infty}{\lim\sup}\;\lambda_{max} (\mathbf{A}_{m,m,n}) <\infty$ and	$\underset{n\rightarrow\infty}{\lim\inf}\;\lambda_{min} (\mathbf{A}_{m,m,n}) >0$.
	
	\item \label{assumption: limsup max eigen final}  $\underset{n,m(n)\rightarrow\infty}{\lim\sup}\lambda_{max}(\mathbf{A}_{m(n),m(n),n})<\infty$ for any $m$.
	
	\item \label{assumption: parameter to covariance} 
	$	\underset{\epsilon\rightarrow0}{\lim\sup}\underset{
		\begin{subarray}{c}
		n\in\mathbb{N},C\in\bm{\Omega}\\
		|| C-C_0 ||_\infty\leq \epsilon
		\end{subarray}}{\sup}
	\frac{1}{n}\sum_{i,j}^n(C(|| \bm{s}_i-\bm{s}_j ||_2)-C_0(|| \bm{s}_i-\bm{s}_j ||_2))^2 =0. 
	$
	\item \label{assumption: covariance to parameter} 	For any $\epsilon>0$,
	$	\underset{n\rightarrow\infty}{\lim\inf} \underset{
		\begin{subarray}{c}
		C\in\bm{\Omega}\\
	|| C-C_0 ||_\infty\geq \epsilon
		\end{subarray}}
	{\inf} \frac{1}{n}\sum_{i,j}^n(C(||  \bm{s}_i-\bm{s}_j ||_2)-C_0(|| \bm{s}_i-\bm{s}_j ||_2))^2>0. $
\end{enumerate}

\begin{theorem}[Consistency]\label{theorem:estimation consistency} 
	Assume conditions (\ref{assumption: compact and convex})-(\ref{assumption: covariance to parameter}). Then $\bar{d}(\hat{\bm{\theta}}_{m,n},\bm{\theta}_0)\overset{P_{\bm{\theta}_0}}{\longrightarrow} 0$.
\end{theorem}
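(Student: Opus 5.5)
We follow the standard consistency argument for sieve extremum estimators (\citep{Chen2007}, Theorem 3.1), adapted to a single Gaussian realization. Set $Q_n(\bm{\theta}) = \frac{1}{n}l_n(\bm{\theta};\bm{Y}_n)$ and $\bar{Q}_n(\bm{\theta}) = \frac{1}{n}E_0 l_n(\bm{\theta};\bm{Y}_n)$. Three ingredients are needed:
\begin{enumerate}
\item an approximating sequence $\bm{\theta}_{0,m}\in\bm{\Theta}_m$ with $\bar{Q}_n(\bm{\theta}_0)-\bar{Q}_n(\bm{\theta}_{0,m(n)})\to 0$;
\item a uniform law of large numbers, $\sup_{\bm{\theta}\in\bm{\Theta}_{m(n)}}|Q_n(\bm{\theta})-\bar{Q}_n(\bm{\theta})|\to 0$ in $P_{\bm{\theta}_0}$-probability;
\item an identification margin: for every $\epsilon>0$, $\liminf_n \inf_{\bar d(\bm{\theta},\bm{\theta}_0)\ge\epsilon}\{\bar{Q}_n(\bm{\theta}_0)-\bar{Q}_n(\bm{\theta})\}>0$.
\end{enumerate}
Given these, the usual chain $Q_n(\hat{\bm{\theta}})\ge Q_n(\bm{\theta}_{0,m})$ gives $\bar{Q}_n(\bm{\theta}_0)-\bar{Q}_n(\hat{\bm{\theta}})\le o_P(1)$. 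This forces $P(\bar d(\hat{\bm{\theta}},\bm{\theta}_0)\ge\epsilon)\to0$.

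\textbf{Identification.} We compute $\bar{Q}_n(\bm{\theta}_0)-\bar{Q}_n(\bm{\theta})=\frac{1}{n}\mathrm{KL}(P_{\bm{\theta}_0}\|P_{\bm{\theta}})$, which equals
\[
\frac{1}{2n}\Big[\mathrm{tr}(\mathbf{\Sigma}_{C,n}^{-1}\mathbf{\Sigma}_{0,n})-n-\log\det(\mathbf{\Sigma}_{C,n}^{-1}\mathbf{\Sigma}_{0,n})\Big]+\frac{1}{2n}(\bm{\beta}-\bm{\beta}_0)^T\mathbf{X}_n^T\mathbf{\Sigma}_{C,n}^{-1}\mathbf{X}_n(\bm{\beta}-\bm{\beta}_0).
\]
We need uniform eigenvalue bounds for $\mathbf{\Sigma}_{C,n}$ over the sieve. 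For $C=\sum_k w_kA_{k,m}$, each $A_{k,m}$ with $k\le m$ is a product containing the factor $A_{m,m}$ times further valid covariance factors. By the Schur product theorem, $\mathbf{A}_{k,m,n}\succeq$ (diagonal scaling) $\mathbf{A}_{m,m,n}$ up to constants. Together with $\sum w_k=C(0)\ge L_C$, this gives $\lambda_{min}(\mathbf{\Sigma}_{C,n})$ bounded below via \ref{assumption:max and min eigen for A_m,m}. For the upper bound, Gershgorin on the row sums gives $\lambda_{max}\le U_C\,\lambda_{max}$-type bounds using \ref{assumption: limsup max eigen final}. With all eigenvalues of $\mathbf{\Sigma}_{C,n}^{-1/2}\mathbf{\Sigma}_{0,n}\mathbf{\Sigma}_{C,n}^{-1/2}$ in a compact subset of $(0,\infty)$, the elementary inequality $x-1-\log x\ge c(x-1)^2$ gives
\[
\frac{1}{n}\mathrm{KL}\ge c\Big[\frac{1}{n}\|\mathbf{\Sigma}_{C,n}-\mathbf{\Sigma}_{0,n}\|_F^2+\|\bm{\beta}-\bm{\beta}_0\|_2^2\Big].
\]
The $\bm{\beta}$ term uses \ref{assumption: X^TX}. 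The Frobenius term is exactly the quantity in \ref{assumption: covariance to parameter}, which yields the margin. The reverse inequality, with an upper constant, combined with \ref{assumption: parameter to covariance} and Theorem~\ref{theorem: covariance approximation}, gives the approximation step: choose $C_{0,m}\in\bm{\Omega}_m$ with $\|C_{0,m}-C_0\|_\infty\to0$. This is possible because $C_0$ is interior and the sieve is nested by Corollary~\ref{corollary: increasing A_m}.

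\textbf{Uniform convergence (main obstacle).} This is the step I expect to be hardest. I would write
\[
Q_n-\bar{Q}_n=-\frac{1}{2n}\big[\bm{\epsilon}^T\mathbf{\Sigma}_{C,n}^{-1}\bm{\epsilon}-\mathrm{tr}(\mathbf{\Sigma}_{C,n}^{-1}\mathbf{\Sigma}_{0,n})\big]-\frac{1}{n}(\bm{\beta}_0-\bm{\beta})^T\mathbf{X}_n^T\mathbf{\Sigma}_{C,n}^{-1}\bm{\epsilon}.
\]
For fixed $\bm{\theta}$, the Gaussian quadratic form has variance $\frac{1}{2n^2}\cdot 2\,\mathrm{tr}\big((\mathbf{\Sigma}_{C,n}^{-1}\mathbf{\Sigma}_{0,n})^2\big)=O(1/n)$ by the eigenvalue bounds, and the linear term is likewise $O_P(n^{-1/2})$. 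So pointwise convergence holds by Chebyshev.

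For uniformity I would show stochastic equicontinuity via a Lipschitz bound in $\bar d$. Using $\mathbf{\Sigma}_{C_1}^{-1}-\mathbf{\Sigma}_{C_2}^{-1}=\mathbf{\Sigma}_{C_1}^{-1}(\mathbf{\Sigma}_{C_2}-\mathbf{\Sigma}_{C_1})\mathbf{\Sigma}_{C_2}^{-1}$ and Cauchy--Schwarz,
\[
|Q_n(\bm{\theta}_1)-Q_n(\bm{\theta}_2)|\lesssim\Big(\frac{\|\bm{Y}_n\|^2}{n}+1\Big)\Big(\tfrac{1}{\sqrt n}\|\mathbf{\Sigma}_{C_1,n}-\mathbf{\Sigma}_{C_2,n}\|_F+\|\bm{\beta}_1-\bm{\beta}_2\|\Big),
\]
where $\|\bm{Y}_n\|^2/n=O_P(1)$. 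The Frobenius difference is controlled by \ref{assumption: parameter to covariance}, extended from the centre $C_0$ to general pairs; this extension is the delicate point. Failing that, I would bound it by $\sqrt{n}\,\|C_1-C_2\|_\infty\max_i\#\{j:\|\bm{s}_i-\bm{s}_j\|\text{ small}\}$, using the increasing-domain design.

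Because $\bm{\Theta}_{m}$ grows with $n$, I would then cover $\bm{\Theta}_{m(n)}$ by a finite $\delta$-net in $\bar d$. Its cardinality is of order $(c/\delta)^{m+p}$, since the sieve is parametrized by the simplex-type weight vector, which is compact by Corollary~\ref{corollary: compact A_m}. I would combine this with exponential (Hanson--Wright) tail bounds for the Gaussian quadratic forms instead of Chebyshev. A union bound then works provided $m(n)\log(1/\delta)=o(n)$, which is the implicit growth restriction on $m(n)$ that I would impose.
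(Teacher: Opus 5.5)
Your overall structure matches the paper's. You use the Gaussian KL identity for $\bar Q_n(\bm{\theta}_0)-\bar Q_n(\bm{\theta})$ and quadratic bounds on the scalar KL function to reach $\frac1n\|\mathbf{\Sigma}_{C,n}-\mathbf{\Sigma}_{0,n}\|_F^2$. You use \ref{assumption: parameter to covariance} with Theorem~\ref{theorem: covariance approximation} for the approximating sequence, \ref{assumption: X^TX}, \ref{assumption: limsup max eigen final} and \ref{assumption: covariance to parameter} for the margin, and then the sandwich inequality. Both you and the paper need \ref{assumption: covariance to parameter}, although the statement lists only D1--D8.

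\textbf{The gap.} The problem is your uniform law over the growing sieve $\bm{\Theta}_{m(n)}$, for two concrete reasons.

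First, your Lipschitz bound does not hold as a deterministic inequality. The quantity $\frac1n|\bm{\epsilon}^T(\mathbf{\Sigma}_{C_1,n}^{-1}-\mathbf{\Sigma}_{C_2,n}^{-1})\bm{\epsilon}|$ is controlled by $\frac{\|\bm{\epsilon}\|_2^2}{n}\lambda_{min}(\mathbf{\Sigma}_{C_1,n})^{-1}\lambda_{min}(\mathbf{\Sigma}_{C_2,n})^{-1}\|\mathbf{\Sigma}_{C_1,n}-\mathbf{\Sigma}_{C_2,n}\|_{\mathrm{op}}$. The operator norm can exceed $n^{-1/2}\|\cdot\|_F$ by a factor of up to $\sqrt n$. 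So averaged-Frobenius control of the type in \ref{assumption: parameter to covariance} cannot give equicontinuity, and that assumption is centred at $C_0$ anyway. Sup-norm closeness $\|C_1-C_2\|_\infty\le\delta$ alone yields only $\|\mathbf{\Sigma}_{C_1,n}-\mathbf{\Sigma}_{C_2,n}\|_{\mathrm{op}}\le n\delta$. Your near-neighbour count drops the far pairs, each of which may differ by $\delta$.

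Second, every constant in the net/Hanson--Wright argument involves $\lambda_{min}(\mathbf{\Sigma}_{C,n})^{-1}$ uniformly over $\bm{\Omega}_{m(n)}$, and no such bound is available. \ref{assumption:max and min eigen for A_m,m} holds only for each fixed $m$. In fact $C=L_CA_{m,m}\in\bm{\Omega}_m$ satisfies $\lambda_{min}(\mathbf{\Sigma}_{C,n})\le L_Ch^2/(m+h^2)\to0$ as $m\to\infty$. This follows by interlacing with the $2\times2$ block of two sites at a fixed distance $h$, a configuration nothing in D1--D9 excludes. So ``all eigenvalues in a compact subset of $(0,\infty)$'' fails along $m(n)$. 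Your rate $m(n)\log(1/\delta)=o(n)$ cannot close the argument, because the constants it must beat deteriorate with $m$ at an unquantified rate.

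\textbf{How the paper avoids this.} The paper stays in the coefficient parametrization $(\bm{\beta},\bm{w}_m)$ and proves the uniform law only for fixed $m$. Since $\partial\mathbf{\Sigma}_{C,n}/\partial w_k=\mathbf{A}_{k,m,n}$, Claim~\ref{claim: order of A_k^m} (Schur-product eigenvalue bounds) gives $\lambda_{max}(\mathbf{A}_{k,m,n})\le\lambda_{max}(\mathbf{A}_{m,m,n})$. Hence $\|\mathbf{\Sigma}_{C_1,n}-\mathbf{\Sigma}_{C_2,n}\|_{\mathrm{op}}\le\|\bm{w}_1-\bm{w}_2\|_1\lambda_{max}(\mathbf{A}_{m,m,n})$ and the normalized score is $O_p(1)$. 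Corollary 2.2 of \citet{Newey1991}, with a pointwise strong law from \citet{Cuzick1995}, then gives the uniform law on the compact convex coefficient set. There \ref{assumption:max and min eigen for A_m,m} supplies the lower spectral bound. Letting $m(n)\to\infty$ slowly enough transfers this to $\bm{\Theta}_{m(n)}$ with no entropy rate; this diagonal argument is what the paper's ``cleverly chosen'' sieve amounts to.

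\textbf{The margin.} Use the paper's normalization. Let $y_i$ be the eigenvalues of $\mathbf{\Sigma}_{0,n}^{-1/2}\mathbf{\Sigma}_{C,n}\mathbf{\Sigma}_{0,n}^{-1/2}$. Then $\log y+1/y-1\ge c_K(y-1)^2$ holds on all of $(0,K]$, with $K$ from \ref{assumption:max and min eigen for true} and \ref{assumption: limsup max eigen final}. Also $\sum_i(y_i-1)^2\ge\lambda_{max}(\mathbf{\Sigma}_{0,n})^{-2}\|\mathbf{\Sigma}_{C,n}-\mathbf{\Sigma}_{0,n}\|_F^2$. So only upper spectral bounds along $m(n)$ are needed, which is why \ref{assumption: limsup max eigen final} concerns $\lambda_{max}$ alone. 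Your form $x-1-\log x\ge c(x-1)^2$ with $x_i=1/y_i$ needs the $x_i$ bounded above, which is exactly the missing uniform lower bound.

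The ``upper constant'' in your approximation step likewise needs the eigenvalues of $\mathbf{\Sigma}_{\pi_m(C_0),n}$ bounded away from zero. The paper's ``for $x$ near $1$'' is equally loose there, so that step needs an extra argument in either version.
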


Proof of the theorem is presented in the Appendix \ref{sec:verify}. The identification assumption (\ref{assumption:identification of C and beta}) indicates that essentially there should not exist two distinct covariance functions of which the associated covariance matrices are the same for the spatial sampling, which is clearly minimal.
(\ref{assumption: X^TX}) is trival when there exists a positive semi-definite matrix $\mathbf{\Sigma}_{\bm{X}}$ such that $\frac{1}{n}\mathbf{X}_n^T\mathbf{X}_n \rightarrow \mathbf{\Sigma}_{\bm{X}}$. 
For $m\in\mathbb{N}^+$ it can be deduced from (\ref{assumption:max and min eigen for A_m,m}) that
$\underset{n\rightarrow\infty}{\lim\inf}\;\lambda_{min}(\mathbf{\Sigma}_{C,n})>0$  and  $\underset{n\rightarrow\infty}{\lim\sup}\;\lambda_{max}(\frac{\partial \mathbf{\Sigma}_{C,n}}{\partial w_k})<\infty $ for any $C\in\bm{\Omega}_m$.
Such inequalities together with (\ref{assumption:max and min eigen for true}) are commonly used assumptions associated with spatial sampling in the increasing domain asymptotic framework (see \citep{Bachoc2021} and references therein). 
A necessary condition for the eigenvalues bounded below from zero is that
the minimal distance is bounded away from zero.
For more details, the reader can be referred to the proofs of Proposition D.4 in \citep{Bachoc2014} or of Theorem 5 in \citep{Bachoc2016}.
Moreover, when the model incorporates a nugget effect or measurement errors, then
$$\underset{n\rightarrow\infty}{\lim\inf}\;\lambda_{min}(\mathbf{\Sigma}_{0,n})>0\mbox{  and  }
\underset{n\rightarrow\infty}{\lim\inf}\;\lambda_{min}(\mathbf{\Sigma}_{C,n})>0 \mbox{  for } C\in\bm{\Omega}_m \mbox{, }m\in\mathbb{N}^+$$
naturally holds provided that the nugget or error variances are lower-bounded uniformly. It may take a bit of more work to check the upper bound of the eigenvalues and yet are not prohibitively difficult in some contexts (see examples and proofs in \citep{Shaby2012,Bachoc2014,Bachoc2016}).
The assumption (\ref{assumption: limsup max eigen final}) ensures $\lambda_{max}(\mathbf{\Sigma}_{\hat{C}_{m(n)},n})$ is bounded above from infinity, from which the consistency of $\bm{\hat{\beta}}_{m,n}$ can be deduced.
(\ref{assumption: parameter to covariance}) and
(\ref{assumption: covariance to parameter}) describe the relationship between $||C_0-\hat{C}_{m,n} ||_\infty$ and $\frac{1}{n}||\mathbf{\Sigma}_{\hat{C}_{m},n}-\mathbf{\Sigma}_{0,n}  ||_F^2$ where $||\cdot||_F$ is the Frobenius norm. (\ref{assumption: covariance to parameter}) means that for $C\in\bm{\Omega}$ bounded away from 
$C_0$, there is sufficient information in the spatial locations $\{\bm{s}_1,\cdots,\bm{s}_n\}$ to distinguish between the two covariance. It plays a decisive role in translating consistent estimation of the covariance matrix to covariance function through 
$\frac{1}{n}||\mathbf{\Sigma}_{\hat{C}_{m},n}-\mathbf{\Sigma}_{C_0,n}  ||_F^2 \rightarrow 0$.
Intuitively, for increasing domain asymptotics $C(||\bm{s}_i-\bm{s}_j  ||_2)$ can be quite small for many pairs $i,j\in \{1,2,\cdots,n\}$. This is why the normalization factor is $\frac{1}{n}$ rather than $\frac{1}{n^2}$ (see also in \citep{Keshavarz2016,Bachoc2021}).

\subsection{Numerical Implementation}\label{subsec: numerical implementation}
In this section, we discuss the algorithm of covariance estimation along with numerical techniques in details.
Without loss of generality we assume $E(Y)=0$ and only focus on  covariance estimation.

We introduce range parameter $\rho>0$ and marginal variance $\sigma^2>0$ so that $C_0\in \mathcal{A}$ can be written as  $C_0(h)=\sigma^2 R_0(\frac{h}{\rho})$ with $\sigma^2=C_0(0)$, of which the corresponding approximation covariance function for any fixed m is specified as
\begin{equation}
C_m(h)=\sigma^2\sum_{k=1}^m w_k A_{k,m}(\frac{h}{\rho})
\end{equation}
subject to $\bm{w}_m$ in the simplex $\mathcal{S}_m$.
For covariance functions which decay with distance too slowly or too quickly, large m are imperative. For instance, $m=230$ is required for Matern(1,5,1) and Matern(1,0.3,1) to achieve RAE of 0.05 compared to $m=3$ for Matern(1,1,1) computed through (\ref{eq:approximation_error_with_w}).
By means of $\rho$, we are allowed to scale the curve flexibly, and thus attain desired approximation performance efficiently with a relative small m and make possible otherwise infeasible calculation. 

Given a fixed m, parameters $(\bm{w}_m,\rho,\sigma^2)$ are to estimated by maximizing 
\begin{equation*}
\frac{1}{n}l_{n}(\bm{w}_{m},\rho,\sigma^2;\mathbf{Y}_n) = -\frac{1}{2n}logdet(\sigma^2\mathbf{\Sigma}_{\bm{w}_m,\rho,n})-\frac{1}{2n} \bm{Y}_n^T[\sigma^2\bm{\Sigma}_{\bm{w}_m,\rho,n}]^{-1}\bm{Y}_n + \frac{1}{2n}log(2\pi)
\end{equation*}
over $\{\bm{w}_m\in\mathcal{S}_m, \rho>0, \sigma^2>0\}$,
where $\mathbf{\Sigma}_{\bm{w}_m,\rho,n}=\sum_{k=1}^m w_k \mathbf{A}_{k,m,\rho,n}$ with $(i,j)$ element expressed as $\mathbf{A}_{k,m,\rho,n[i,j]}=A_{k,m}(\frac{||\bm{s}_i - \bm{s}_j ||_2}{\rho})$.
Letting $\frac{\partial \frac{1}{n}l_n}{\partial \sigma^2}=0$ gives us the estimation of $\sigma^2$ as
\begin{equation*}
\hat{\sigma}^2=\frac{\textbf{Y}_n^T[\mathbf{\Sigma}_{w_m,\rho,n}]^{-1}\bm{Y}_n}{n} \label{eq: form of v_hat}.
\end{equation*}
Then $(\bm{w}_m,\rho)$ can be computed by maximizing the profile likelihood $\frac{1}{n}l_n(\bm{w}_m,\rho,\hat{\sigma}^2;\mathbf{Y}_n)$
over $\{\bm{w}_m\in\mathcal{S}_m, \rho>0\}$
through iterative ascending optimization algorithm \ref{alg:estimation of w,rho}.  
\begin{algorithm}[H]
	\caption{Estimation of $(\bm{w}_m,\rho)$ given m and $\epsilon=10^{-3}$}
	\small
	\begin{algorithmic}[1]
		\State $i\gets 0$, $\bm{w}^{(0)}_m \gets (\frac{1}{m},\frac{1}{m},\cdots,\frac{1}{m})$
		\While{$i\leq 1$\textbf{ or }$\frac{\frac{1}{n}l_{n}(\bm{w}_m^{(i)},\rho^{(i)}; \bm{Y}_n)-\frac{1}{n}l_{n}(\bm{w}_m^{(i-1)},\rho^{(i-1)}; \mathbf{Y}_n)}{|\frac{1}{n}l_{n}(\bm{w}_m^{(i)},\rho^{(i)};\mathbf{Y}_n)|} > \epsilon$}
		\State $i\gets i+1$
		\State $\rho^{(i)} \gets \underset{\{\rho>0\}}{argmin}\;\frac{1}{n}l_n(\bm{w}_m^{(i-1)},\rho;\bm{Y}_n)$ \Comment{Given $\bm{w}_m^{(i-1)}$, update $\rho$} \label{step: update rho}
		\State $\bm{w}_m^{(i)} \gets \underset{\{\bm{w}_m\in\mathcal{S}_m\}}{argmin}\;\frac{1}{n} l_{n}(\bm{w}_m,\rho^{(i)};\bm{Y}_n)$\Comment{Given $\rho^{(i)}$, update $\bm{w}_m$} \label{step: update w_m}
		\EndWhile
		\State \textbf{return} $(\bm{w}_m^{(i)},\rho^{(i)})$
	\end{algorithmic}	
	\label{alg:estimation of w,rho} 
	
\end{algorithm}
Intrinsically, we need a precise estimation of $\bm{w}_m$ and a relatively coarse estimation of $\rho$. 
In line \ref{step: update rho} of Algorithm \ref{alg:estimation of w,rho}, $\rho$ is obtained through random search in the range of $[h_{min}, h_{max}]$, where $h_{min}$ and $h_{max}$ are minimum and maximum observed distance.
In line \ref{step: update w_m},  sparse solution of $\bm{w}_m$ is calculated through nonlinear optimization with augmented lagrange method via {\tt Rsolnp} package in R.  
Both updates guarantee a larger log-likelihood value than previous. Moreover, the merits of automatic variable selection together with a clear and succint presentation of covariance estimation make it an effective algorithm in practice.

To implement the procedure above, a data-driven m can be chosen from distinct values in $\{1+[n^a]: a=0.1, 0.15, 0.2,\cdots,0.9\}$ (i.e. at most 17 values). We iteratively run Algorithm \ref{alg:estimation of w,rho} with each candidate m until $\frac{1}{n}l_n(\bm{w}_m,\rho,\sigma^2;\mathbf{Y}_n)$ increases by less than 0.1\% and the iteration with maximum log-likelihood value will be retained as the best solution.

\section{Numerical Illustrations using Simulated Data Sets}\label{sec:simulation}
To assess the finite sample accuracy of our methodology, a simulation study was conducted. For comparison purposes, we performed covariance estimation using our proposed likelihood-based nonparametric method as well as the recent popular nonparametric methods \citep{Choi2013} and \citep{Huang2011}, referred to as {\em Choi method} and {\em Huang method} respectively. Their estimator were obtained through restricted weighted least squares as 
\begin{eqnarray}
\hat{\theta}_{Choi}=&\underset{\theta}{argmin}\sum_{i=1}^L w_{i,Choi} (\hat{C}_{E}(h_i)-C_{\theta}(h_i))^2 \hspace{1cm} \label{eq: Choi nonparametric covariance based on two NP}\\
\hat{\theta}_{Huang}=&\underset{\theta}{argmin}\sum_{i=1}^L w_{i,Huang} (\hat{\gamma}_{E}(h_i)-\gamma_{\theta}(h_i))^2 \hspace{1cm}\label{eq: Huang nonparametric variogram based on two NP}
\end{eqnarray}
where $\hat{C}_{E}(\cdot)$ and $\hat{\gamma}_{E}(\cdot)$ are empirical covariance and empircal semivariogram, $\{h_i\}_{i=1}^L$ is a sequence of distance between $L$ distinct pairs of locations, nonparametric models $C_{\theta}(h)$ and $\gamma_{\theta}(h)$ as well as weights $\{w_{i,Choi}\}_{i=1}^L$ and $\{w_{i,Huang}\}_{i=1}^L$ have different specifications covered in \citep{Choi2013} and \citep{Huang2011}. A modified version of Choi method was also implemented in our simulation, that is to minimize (\ref{eq: Choi nonparametric covariance based on two NP}) with $\{w_{i,Choi}\}_{i=1}^L$ replaced by $\{w_{i,Huang}\}_{i=1}^L$, given by
\begin{equation}
\hat{\theta}_{modified-Choi}=\underset{\theta}{argmin}\sum_{i=1}^L w_{i,Huang} (\hat{C}_{E}(h_i)-C_{\theta}(h_i))^2.
\label{eq: modified Choi nonparametric covariance based on two NP}
\end{equation}
In addition, to show the superiority of nonparametric methods over parametric ones in the aspect of reducing risks of model misspecification, we also estimated through a collection of parametric models including Matern, Cauchy, Gaussian and GenCauchy. 
Four types of parametric methods were adopted, namely, likelihood-based, cov-based using weights of Choi, variogram-based parametric methods and cov-based using weights of Huang, corresponding to the four aforementioned nonparametric methods.
Clearly, parameters of likelihood-based parametric methods were estimated by maximizing log-likelihood function. 
As for rest of the three types,  parameters of those parametric models, denoted as $\theta$, were estimated by minimizing (\ref{eq: Choi nonparametric covariance based on two NP})(\ref{eq: Huang nonparametric variogram based on two NP})(\ref{eq: modified Choi nonparametric covariance based on two NP}) in the same spirit of nonparametric approaches.


Regarding irregular locations,
$n=60$  coordinates were selected randomly over domain $[0,20]\times[0,20]$ beforehand. In each simulation,  as suggested in \citep{Im2007}, we generated $r=200$ independent realizations of Gaussian process at these locations totaling 12000 observations.  The covariance matrix corresponding to this type of dataset is block diagonal, which allows us to have a large number of observations so that the parameters can be well estimated while keeping the computational load at a manageable level.   
Note that a technique was employed here to speed up estimation of cov-based and variogram-based methods.
Since each distance is duplicated a number of times, we calculated average of r empirical covariances and semivariograms through 
\begin{eqnarray*}
	\hat{C}_{E}(h_i)=\frac{1}{r}\sum_{k=1}^{r}\hat{C}_{E}(h_i;Y_{1,k},\cdots,Y_{n,k}) 
	\mbox{   and }
	\hat{\gamma}_{E}(h_i)=\frac{1}{r}\sum_{k=1}^{r}\hat{\gamma}_{E}(h_i;Y_{1,k},\cdots,Y_{n,k}) \label{eq:average of empirical variogram},
\end{eqnarray*}
where $\hat{C}_{E}(h_i;Y_{1,k},\cdots,Y_{n,k})$ and $\hat{\gamma}_{E}(h_i;Y_{1,k},\cdots,Y_{n,k})$ are empirical covariance and semivariogram of $k_{th}$ realization $\{Y_{i,k}\}_{i=1}^{n}$, in order that computational burden can be significantly reduced in model fitting.

To mimic measurement-error free version of stochastic process, given coordinates described above, we generated data in $\mathbb{R}^2$ from Gaussian random fields with mean zero and various isotropic covariance functions valid in $\mathbb{R}^\infty$. We chose five covariance functions with equal variance $C_0(0)=1$ including typical parametric covariance families in Setting 1-4 and a linear combination of two matern models in Setting 5. 

\begin{enumerate}
	\item Setting 1: Matern $(1,1.25,1)$ where
	$C_{Matern}(h;\sigma^2,\rho,\nu) =\sigma^2 \frac{1}{2^{\nu-1} \Gamma(\nu)} (\frac{h}{\rho})^{\nu}  K_{\nu}(\frac{h}{\rho})$; 
	\item Setting 2: Cauchy $(1,0.8)$ where
	$C_{Cauchy}(h;\sigma^2,\rho) = \sigma^2 (1+(\frac{h}{\rho})^2)^{-\frac{1}{2}}$; 
	\item Setting 3: Gaussian $(1,3)$ where
	$C_{Gaussian}(h;\sigma^2,\rho) = \sigma^2 exp(-(\frac{h}{\rho})^2)$;
	\item Setting 4: GenCauchy $(1,0.3,0.5,2)$ where 	$C_{GenCauchy}(h;\sigma^2,\rho,\kappa_1,\kappa_2)=\sigma^2(1+(\frac{h}{\rho})^{\kappa_2})^{-\frac{\kappa_1}{\kappa_2}}$;
	\item Setting 5: Linear combination of two matern models  $(1,1.25,\frac{3\sqrt{2}}{4},1,2)$ where
	\begin{equation*}
	C_{LinearMatern}(h;\sigma^2,\rho_1,\rho_2,\nu_1,\nu_2)=
	\frac{1}{2}C_{Matern}(h;\sigma^2,\rho_1,\nu_1) +\frac{1}{2}C_{Matern}(h;\sigma^2,\rho_2,\nu_2).
	\label{eq:linearmatern}
	\end{equation*}
\end{enumerate}

It is worth pointing out that our basis has a different tail behaviour, namely, $1/(1+h^{2(m-k+1)})$,  from the five simulated settings:
Setting 2 and 4 have a much lower decay of $(1+h^2)^{-\frac{1}{2}}$ and $(1+h^2)^{-\frac{1}{4}}$ while Setting 3 has a faster decay of $exp(-h^2)$. Besides Setting 3 is excluded from $\mathcal{A}$ while Setting 1,2,4 and 5 are not. Despite such challenges, we included those covariances here to test our method. 

In order to assess estimation performance, we refer to several quantities below. 
\begin{itemize}
	\item variance: $\hat{C}_0(0)$ serves as a reasonable metric to quantify estimation performance, specifically for parametric models being exactly $\hat{\sigma}^2$.  Considering $C_0(0)=1$ for all five settings,  bias is computed as $Bias(C_0(0))=\hat{C}_0(0)-1$.
	Notably, variogram-based methods directly estimate $\hat{\gamma}_0(\cdot)$ instead of $\hat{C}_0(\cdot)$. Whereas, followed from $C_0(0)=\underset{h\rightarrow\infty}{\lim}\gamma_0(h)$, it appears to be an appropriate way to obtain the marginal variance estimates $\hat{C}_0(0)\approx\hat{\gamma}_0(h_{max})$ 
	as well as correlation and covariance estimates through $\hat{R}_0(h)=1-\hat{\gamma}_0(h)/\hat{C}_0(0)$ and $\hat{C}_0(h)=\hat{C}_0(0)-\hat{\gamma}_0(h)$.
	
	\item  Correlation function: The scaled version of 2-norm $\frac{||\hat{R}_0-R_0||_2}{\sqrt{h_m}}$ suggested by \citep{Im2007} and sup-norm $||\hat{R}_0-R_0||_{\infty}$ are selected to measure the difference between the true correlation  and its estimates with the aid of some upper bound $h_m$ of each setting.
	
	We focus more on strictly decreasing correlation estimation for $h\in[0,h_m]$ than the tail behaviour on $\left[h_m, \infty\right)$. Hence, the upper bound distance $h_m$ for evaluation is determined as the smallest value $h$ that satisfies $R_0(h)\geq \epsilon$, where $\epsilon$ for Setting 1-5 are $0.001,0.05,0.001,0.15,0.001$ and thus $h_m=10.3, 16,7.9,13.3,10.5$ respectively. The reason for larger choices of $\epsilon$ for Setting 3 and 5 is that slower decrease rates of Cauchy and GenCauchy lead to a larger value of minimum correlation evaluated at maximum distance  $20\sqrt{2}$  in coordinate generation region $[0,20]\times [0,20]$.
	Put it plainly, these two metrics are computed through 
	(\ref{eq: 2-norm correlation}) and (\ref{eq: inf-norm correlation})
	\begin{eqnarray}
	\frac{||\hat{R}_0-R_0||_2}{\sqrt{h_m}}&\approx&\sqrt{\frac{1}{K}\sum_{i=1}^K (R_0(h_i)-\hat{R}_0(h_i))^2} ,\label{eq: 2-norm correlation}\\
	||\hat{R}_0-R_0||_{\infty} &\approx &\underset{\{i=1,\cdots,K\}}{max} |R_0(h_i)-\hat{R}_0(h_i)|,\label{eq: inf-norm correlation}
	\end{eqnarray}
	where $\{h_i=\frac{i}{K}h_m\}_{i=1}^K$ are $K=2000$ equispaced points in the range $[0,h_m]$ so that both metrics share the same upper bound of 2 for all five settings.
	
	\item Covariance function and variogram function: Considering that Choi method and modified Choi method aim to minimize 2-norm error of covariance estimates and Huang method works on variogram estimation directly,
	for reference we also computed the scaled version of 2-norm and inf-norm of covariance and semivariance in the same way as (\ref{eq: 2-norm correlation}) and (\ref{eq: inf-norm correlation}). 
\end{itemize}

We ran $N=100$ monte carlo simulations. In each simulation, four types of nonparametric methods along with parametric ones were applied. To implement Choi method and modified Choi method, we set $m=5$ and $p=3$, the same tuning parameters as in \citep{Choi2013}, where p is the order number and m is the knot number, different from the number of basis m specified in our method.
In regard to Huang method, we stuck to the tuning procedures in \citep{Huang2011}. Specifically, we set $v=10$ and $L=200$, where $v$ is the upper bound of spectrum and $L$ is the number of grids. For smoothing parameter selection, $\lambda_k=10^{u_k}$ was chosen, where $\mu_k=6(k-1)/19$, $k=1,\cdots,20$ and generalized cross-validation approach was followed to determine the appropriate value.

To employ our method, with sample size of 12000, the resulting m was selected from \{2,3,5,7,11,17,
27,43,69,110.26,281,449,$\cdots$\}.
As a result, $84,89,87,84$ and $83$ percent of MC simulations stopped respectively for Setting 1-5 when m is no greater than $43,69,110,449$ and $69$.
We did estimation on a Linux machine with dual AMD processors.
On average, Setting 4 took the longest time of 13 mins per simulation while Setting 1 took the shortest  of 0.8 mins per simulation with 16 cores.
Besides, each simulation took around 3.8 mins for Huang method, 1.2 mins for Choi method and modified Choi method with 2 cores.

Numerical results of nonparametric methods and a comparison with parametric methods based on 100 simulations are presented in Table \ref{table:simulation of average for three NP} and 
Table \ref{table:simulation of relative loss efficiency of three nonparametric methods vs best parametric fit}.  Additional results for parametric methods in Table \ref{table:simulation of average for variogram-based parametric methods}-
\ref{table:simulation of average for cov-based parametric methods using modified weights} and 
boxplots in Fig \ref{fig:simulation of boxplot} are in Appendix \ref{sec:additional results of section 4}.
According to row 16-21 and 23-28 of Table \ref{table:simulation of average for three NP}, it is interesting to notice that in contrast to original Choi method, modified Choi method has obviously better performance in terms of norm errors along with competitive results in estimation of $C_0(0)$.
It implicates the weighting scheme devised by \citep{Huang2011} is more feasible than that of \citep{Choi2013} in our simulation setting probably because the latter one outweights $\hat{C}_E$ with low variability and downweights those with high variability, which doesn't work well 
for irregular-gridded data with replicates. Therefore, we no longer consider Choi method together with cov-based parametric methods using weights of Choi and stick to modified Choi method in the rest of papers. 

Generally speaking, our method yields the smallest mean values of scaled 2-norm and sup-norm errors, except when the underlying model is GenCauchy (Setting 4), Huang method has the smallest yet rather similar mean value of scaled $||\hat{\gamma}_0-\gamma_0||_{2}$. And it is true even for Gaussian covariance outside the space $\mathcal{A}$.

It is seen from the row 8-14 of Table \ref{table:simulation of average for three NP}
that Huang method gives substantially large scaled 2-norm and sup-norm errors of correlation estimation that are 229\%,  71\%, 146\%, 168\%, 224\%  and  191\%, 200\%, 150\%, 96\%, 209\% higher in contrast with our method in Setting 1-5; as for covariance estimation, Huang method engenders scaled norm-2 and sup-norm errors 190\%, 76\%, 70\%, 181\%, 168\% and 
125\%, 180\%,  71\%, 135\%, 121\% greater than those of our method for all settings. It seems forgivable for Huang method to
have larger norm errors regarding correlation and covariance estimation for the reason that Huang method directly focuses on variogram estimation and a not large enough value of $h_{max}$ may cause underestimate of $C_0(0)$ computed through $\hat{\gamma}_0(h_{max})$ especially when the true underlying model is GenCauchy shown in Fig \ref{fig: simulation boxplot of GenCauchy C_zero} in Fig \ref{fig:simulation of boxplot}.
Whereas, in respect to variogram estimation, when the true model is Matern, Cauchy, Gaussian and LinearMatern, Huang method have larger scaled 2-norm and inf-norm errors compared to our method by factors of 2.0, 1.4, 2.9, 1.8 and 2.8, 2.5, 2.8, 2.5; when the true model is GenCauchy, our method has slightly larger scaled 2-norm errors (only 7\%) and smaller sup-norm errors (around 17\%). 

Compared to results of modified Choi method in 15-21 row, our method is better by factors of 1.1, 1.2, 6.7, 2.7, 1.6 and  1.3, 1.4, 8.3, 4.2, 1.9 in terms of scaled 2-norm and inf-norm errors of correlation in Setting 1-5. As for 2-norm errors of covariance estimation, the criterion modified Choi method is seeking to minimize,  our method still outperforms especially when the underlying model is Gaussian and GenCauchy. The reason might be that \citep{Choi2013} provides no particular guidelines of tuning parameter selection for modified Choi method (and Choi method), which can be problematic in capturing too quick or too slow trend.

Besides, by examination of bias of $C_0(0)$ in row 1,8,15 and 21 of Table \ref{table:simulation of average for three NP},
although modified Choi method outperforms under Setting 1,2 and 5, our method has quite stable performance in all five settings with even smaller bias for Setting 3. In addition, as shown in Fig \ref{fig:simulation of boxplot} in Appendix \ref{sec:additional results of section 4}, 
our method produces unbiased estimation of $C_0(0)$ and fairly stable and satisfactory correlation estimation.

Table \ref{table:simulation of relative loss efficiency of three nonparametric methods vs best parametric fit}  
provides a comparison of nonparametric methods and best parametric fits, where negative entries implies better performance of corresponding nonparametric methods than all parametric methods.
Under Setting 1-4, likelihood-based estimations under corresponding true models turns out to be the best parametric fits indicating the superiority of maximum likelihood estimator over cov-based and variogram based estimators. 
Even then,  the advantage of using the correct parametric model is modest compared to our nonparametric method because our method can produce at most 5.55, 3.23 and 2.98 times larger norm errors of correlation, covariance and variogram estimates than the best parametric fits in Setting 1-4.  
When no true parametric model can be used for Setting 5, our method outperforms all the four parametric models including frequently used Matern model in terms of norm errors.

\section{A Case Study based on Precipitation Data}\label{sec:real data}

\begin{figure}[H]
	\centering
	\begin{subfigure}{0.48\textwidth}
		\centering
		\includegraphics[width=\linewidth]{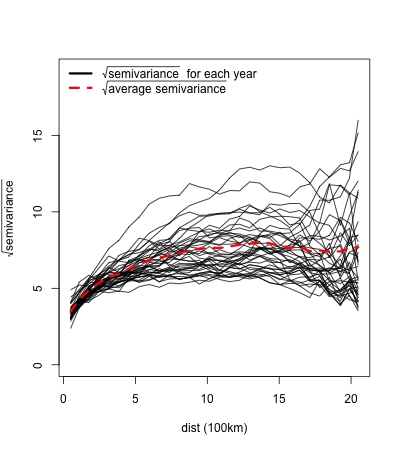}
		\caption{$\sqrt{\text{semivariance}}$ (inch) vs.\ distance (100 km)}
		\label{fig: real data of empirical semivariance}
	\end{subfigure}
	\hspace{5pt}
	\begin{subfigure}{0.48\textwidth}
		\centering
		\includegraphics[width=\linewidth]{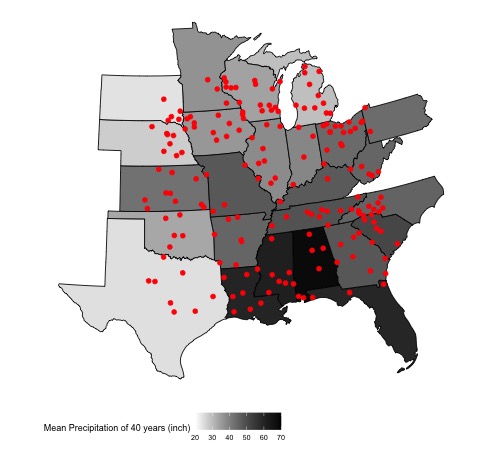}
		\caption{189 stations (dots) in the US together with the mean precipitation surface over forty years}
		\label{fig: real data of heatmap of stations}
	\end{subfigure}
	\caption{(a) $\sqrt{\text{semivariance}}$ (inch) vs.\ distance (100 km), and (b) 189 stations (dots) in the US together with the mean precipitation surface over forty years.}
	\label{fig: real data of empirical and heatmap}
\end{figure}

In this section, we applied the three aforementioned nonparametric and twelve parametric methods to an annual total precipitation dataset obtained from NOAA. 
We chose the study region to be between latitudes 30 and 46 and longitudes -100 and -80. The period of data collection is 1978-2017 in total of $r=40$ years. 
Detailed documentation of this dataset can be found at \url{https://www.ncei.noaa.gov/data/gsoy/doc/GSOYReadme.txt}.  We included 194 stations that had no missing data. To get detrended residuals,  two-way anova model $Y_{ij}=\mu+\alpha_{i}+\beta_{j}+\epsilon_{ij}$ was chosen with adjusted R-squared 0.6788 and significant main effects, where $Y_{ij}$ is the response of Year $j$ at  Station $i$, $\alpha_i$ and $\beta_j$ are station effect and year effect. We consider the residuals as a Gaussian random process.
For each stations, we checked  autocorrelation of residuals, removed 5 stations with significant time dependence and kept the rest $n=189$ stations in total of $rn=7560$ observations.
Thus we modelled the observations from each year as independent realizations of the same process with a different mean. The empirical semivariance in Fig \ref{fig: real data of empirical semivariance} in Fig \ref{fig: real data of empirical and heatmap}  indicates the existence of nugget effect. This is straightforward for our method and Huang method along with likelihood-based and variogram-based parametric methods to include a nugget effect.  
As for modified Choi method and covariance-based parametric models with weights of Huang,  an adaption suggested in \citep{Choi2013} was made to take care of the nugget effect. Concretely, covariance estimation was conducted based on distinct pairs, that is, $\epsilon_{i_1j}\cdot\epsilon_{i_2j}$, with $i_1\neq i_2$, leaving out nugget term. We then subtracted the estimated covariance function at zero from sample variance and got estimation of nugget effect. 
Between these 189 sites shown in Fig \ref{fig: real data of heatmap of stations} in Fig \ref{fig: real data of empirical and heatmap}, the smallest distance is 1.86km, the largest distance is 2120.13km and median distance is 910.16km.

We performed parametric and nonparametric estimation on a 2-core Linux machine with dual AMD processors.
For computation purposes, Huang method and variogram-based parametric methods were manipulated based on a binned version of empirical variogram. 
Following the procedure in Section \ref{sec: estimation method}, our algorithm stopped at m=23 and the model of m=23 was chosen as the best model with maximum log-likelihood value.  Numerical values and plots for candidate m are displayed in Table \ref{table:all m our method Year 40 Station 189} and Fig \ref{fig:all m our method Year 40 Station 189}, which clearly show that our method is quite robust against the choice of m in that when $m\geq 10$, correlation estimations almost overlap and the sum of nugget estimates and $\hat{C}_0(0)$ remain stable around 55.

Table \ref{table:real data results of nonparametric models} and Fig \ref{fig:real data estimated plots of nonparametric methods} present estimated results and plots of three nonparametric methods. 
It is not surprising our method achieved the largest log-likelihood value. In contrast, that of modified Choi method is rather small.
Although it is a bit unfair to compare methods that seek maximizing the likelihood with methods that optimize other criteria, small value of likelihood should give us an idea of how good the estimated function is. Curves of estimated semivariance for our method and Huang method almost overlap when distance is less than $7.5\times 100km$ and are somewhat different from that of modified Choi method.

Table \ref{table:real data results of parametric models} and Fig \ref{fig:real data estimated plots of nonparametric methods and parametric methods} are numerical estimation results and plots of three types of parametric methods. 
As for parametric methods, cov-based parametric methods with weights of Huang have quite different nugget estimates compared to others: those of  Matern and Gaussian are larger than 20 and those of Cauchy and GenCauchy are as small as zero. Moreover, estimated semivarance of cov-based GenCauchy and cov-based Cauchy diverge away from those of the rest parametric methods and have zero estimates of nugget effect, much smaller than others, which coincide with the fact that the former two have smaller likelihood values.  The maximum log-likelihood value among all parametric methods 
is -0.709, smaller than that of our proposed nonparametric method even when the number of basis m is as small as 4. It implies great efficiency of our nonparametric method in maximizing deviance values. For more detailed parametric results, one can refer to Table \ref{table:real data results of parametric models} in Appendix \ref{sec:additional results of section 5}.

\begin{figure}[H]
	\centering
	\begin{minipage}[t]{0.44\textwidth}
		\centering
		\includegraphics[width=\linewidth]{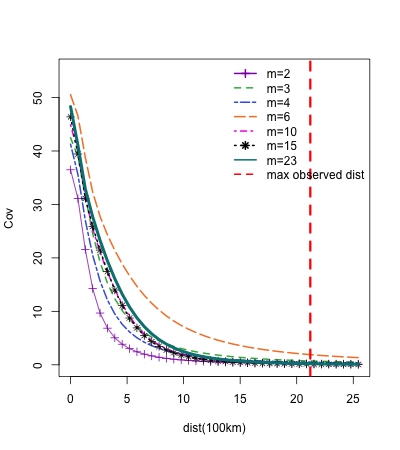}
		\label{fig: real data cov plot of candidate m}
		(a) Estimated covariance
	\end{minipage}
	\hspace{3pt}
	\begin{minipage}[t]{0.44\textwidth}
		\centering
		\includegraphics[width=\linewidth]{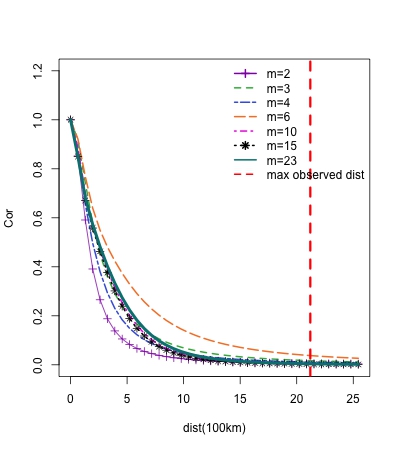}
		\label{fig: real data cor plot of candidate m}
		(b) Correlation for candidate $m$ of our proposed method
	\end{minipage}
	\caption{(a) Estimated covariance and (b) correlation for candidate $m$ of our proposed method. Note: plots for $m=10$ and $m=15$ almost overlap.}
	\label{fig:all m our method Year 40 Station 189}
\end{figure}

\begin{figure}[H]
	\centering
	\begin{minipage}[t]{0.48\textwidth}
		\centering
		\includegraphics[width=\linewidth]{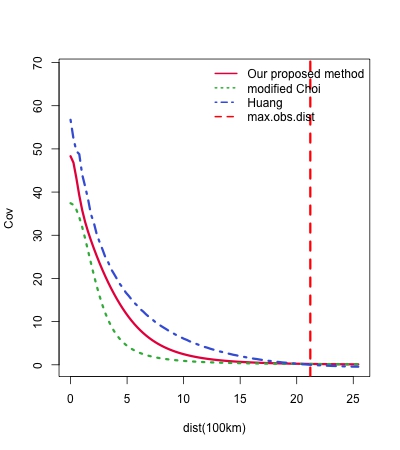}
		
		(a) Estimated covariance
	\end{minipage}
	\hspace{5pt}
	\begin{minipage}[t]{0.48\textwidth}
		\centering
		\includegraphics[width=\linewidth]{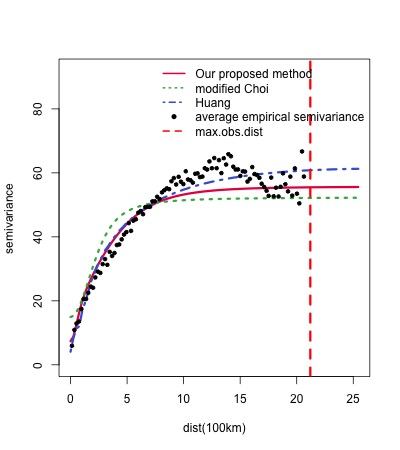}
		
		(b) Semivariance of nonparametric methods based on real data
	\end{minipage}
	\caption{(a) Estimated covariance and (b) semivariance of nonparametric methods based on real data.}
	\label{fig:real data estimated plots of nonparametric methods}
\end{figure}

\begin{figure}[h]
	\centering
	\begin{minipage}[t]{0.48\textwidth}
		\centering
		\includegraphics[width=\linewidth]{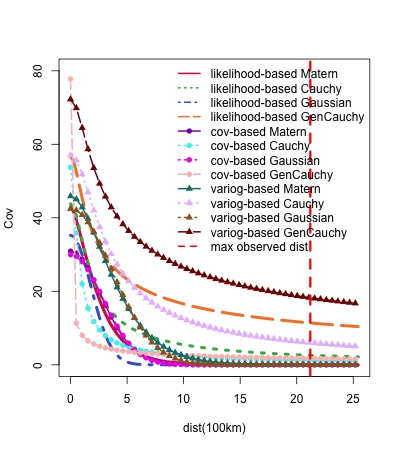}
		\label{fig: real data cov plot of parametric methods}
		(a) Estimated covariance
	\end{minipage}
	\hspace{2pt}
	\begin{minipage}[t]{0.48\textwidth}
		\centering
		\includegraphics[width=\linewidth]{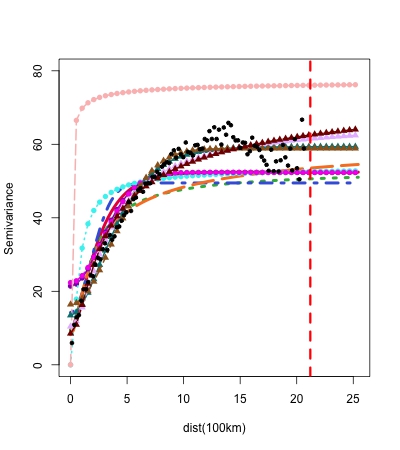}
		\label{fig: real data semivariance plot of parametric methods}
		(b) Semivariance of parametric methods based on real data
	\end{minipage}
	\caption{(a) Estimated covariance and (b) semivariance of parametric methods based on real data.}
	\label{fig:real data estimated plots of nonparametric methods and parametric methods}
\end{figure}

\section{Conclusion}\label{sec:conclusion}
In this article, we propose a new nonparametric model for isotropic covariance functions valid in any dimension.
Our model is capable of uniformly approximating any isotropic covariance functions in $\mathbb{R}^\infty$ with continuous spectral density. To the best of our knowledge, this is the first work that explores $L_\infty$ norm approximation as well as $L_q$ norm approximation under mild conditions.
We also design likelihood-based estimator of isotropic Gaussian processes with irregularly-spaced data.
One advantage is that 
in the spirit of maximizing likelihood it directly involves all the observations with their correlations taken into account, unlike some existing methods which connect observations to covariance or variogram indirectly through empirical estimators. And optimization over a simplex naturally offers us sparse estimations and reduces the worry of overfit. 
It is noted that although Huang method and Choi method are applied to both simulated and real data, their theoretical properties are unknown. In contrast, a theoretical proof of consistency results of our sieve estimator is derived. Another advantage of our method is that we investigate the range parameter estimates, the key point to take care of underlying covariance functions with too slow and too quick decay rate, and present a practical guidance of the choice of m, which makes it a well-applicable algorithm. As shown in Section \ref{sec:simulation}, our method not only outperforms other nonparametric methods substantially in terms of $L_\infty$ and scaled $L_2$ norm errors of correlation, covariance and semivariance estimates in all five settings, but also is superior to all parametric methods when the true covariance is LinearMatern in Setting 5.

Admittedly, in theory our model may not be applicable for all types of isotropic covariance functions, but through extensive numerical simulation studies we show good numerical performances of the proposed method even for those covariance functions that do not belong to domain of covariance functions that our theoretical model allows for.
Besides our proposed method is more time-consuming and computation-expensive for the reason that evaluating the likelihood requires order $n^3$ operations, yet it still seems tolerable considering the gains in estimation accuracy.
Thus, as a part of future explorations, we plan to extend our model with covariance tapering, a popular and well-studied technique enabling the use of sparse matrix algorithms to increase computation efficiency (\citep{Kaufman2008}).
In Section \ref{sec:real data}, we treat precipitation data from different years as independent observations. Hence, another interesting direction is to extend our model to space-time covariance functions to handle more flexiable cases where observations from different years are allowed to be auto-correlated.

	\section*{Disclosure statement}
	The authors report there are no competing interests to declare.

	\section*{Data availability statement}
	The data that supports the findings of this study in Section \ref{sec:real data} are available in the repository  `access' at \url{https://www.ncei.noaa.gov/data/gsoy/}, (\cite{NOAA2020}), along with its detailed documentation found at \url{https://www.ncei.noaa.gov/data/gsoy/doc/GSOYReadme.txt}.

	\section*{Acknowledgement}
	We would like to express our sincere appreciation to Dr.Chunfeng Huang for sharing the code of his paper and providing insightful suggestions. We would also like to thank the referees whose comments helped us to improve the paper substantially and the edior whose assistance is gratefully acknowledged.


\begin{thebibliography}{33}
	\newcommand{\enquote}[1]{`#1'}
	\providecommand{\natexlab}[1]{#1}
	\providecommand{\url}[1]{\normalfont{#1}}
	\providecommand{\urlprefix}{ }
	
	\bibitem[Abramowitz and Stegun(1965)]{Abramowitz1965}
	Abramowitz, M., and Stegun, I. (1965), \emph{Handbook of Mathematical
		Functions}, 9th ed., New York.
	
	\bibitem[Amemiya(1985)]{Amemiya1985}
	Amemiya, T. (1985), \emph{Advanced Econometrics}, Harvard University Press, pp.
	105--158. Asymptotic properties of extremum estimators.
	
	\bibitem[Anderson(1970)]{Anderson1970}
	Anderson, T.W. (1970), \enquote{Estimation of covariance matrices which are
		linear combinations or whose inverse are linear combinations of given
		matrices}, \emph{Essays in Probability and Statistics}, pp. 1--24.
	
	\bibitem[Bachoc(2014)]{Bachoc2014}
	Bachoc, F. (2014), \enquote{Asymptotic analysis for the role of spatial
		sampling for covariance parameter estimation of Gaussian processes},
	\emph{Journal of Multivariate Analysis}, pp. 1--25.
	
	\bibitem[Bachoc(2021)]{Bachoc2021}
	Bachoc, F. (2021), \emph{Oxford Handbook of Innovation}, Springer, Cham, chap.
	Asymptotic analysis of Maximum Likelihood Estimation of Covariance Parameters
	for Gaussian Processes: An Introduction with Proofs, pp. 283--303.
	
	\bibitem[Bachoc and Furrer(2016)]{Bachoc2016}
	Bachoc, F., and Furrer, R. (2016), \enquote{On the smallest eigenvalues of
		covariance matrices of multivariate spatial processes}, \emph{Stat}, 1,
	102--107.
	
	\bibitem[Bernstein(1912)]{Bernstein1912}
	Bernstein, S. (1912), \enquote{Demonstration of a theorem of Weierstrass based
		on the calculus of probabilities}, \emph{Communications of the Kharkov
		Mathematical Society}.
	
	\bibitem[Chen(2007)]{Chen2007}
	Chen, X. (2007), \enquote{Large Sample Sieve Estimation of Semi-Nonparametric
		Models}, in \emph{Handbook of Econometrics}, Vol.~6B, 1st ed., eds.
	J.~Heckman and E.~Leamer, chap.~76.
	
	\bibitem[Choi et~al.(2013)Choi, Li, and Wang]{Choi2013}
	Choi, I., Li, B., and Wang, X. (2013), \enquote{Nonparametric Estimation of
		Spatial and Space-Time Covariance Function}, \emph{Journal of Agricultural,
		Biological, and Environmental Statistics}, 18, 611--630.
	
	\bibitem[Choudhuri et~al.(2004)Choudhuri, Ghosal, and Roy]{Choudhuri2004}
	Choudhuri, N., Ghosal, S., and Roy, A. (2004), \enquote{Bayesian estimation of
		the spectral density of a time series}, \emph{Journal of the American
		Statistical Association}, 99, 1050--1059.
	
	\bibitem[Cressie(1993)]{Cressie1993}
	Cressie, N. (1993), \emph{Statistics for Spatial Data}, John Wiley \& Sons, New
	York.
	
	\bibitem[Cuzik(1995)]{Cuzick1995}
	Cuzik, J. (1995), \enquote{A Strong Law for Weighted Sums of i.i.d Random
		Variables}, \emph{Statistics \& Probability Letters}, 76, 1482--1487.
	
	\bibitem[Efron et~al.(2004)Efron, Hastie, Johnstone, and Tibshirani]{Efron2004}
	Efron, B., Hastie, T., Johnstone, I., and Tibshirani, R. (2004), \enquote{Least
		angle regression}, \emph{Annals of Statistics}, 32, 407--451.
	
	\bibitem[Farouki(2012)]{Farouki2012}
	Farouki, R.T. (2012), \enquote{The Bernstein polynomial basis: a centennial
		retrospective}, \emph{Computer Aided Geometric Design}, 29, 379--419.
	
	\bibitem[Grenander(1981)]{Grenander1981}
	Grenander, U. (1981), \emph{Abstract Inference}, Wiley, New York.
	
	\bibitem[Horn and Johnson(1991)]{Topics1991}
	Horn, R., and Johnson, C. (1991), \emph{Topics in Matrix Analysis}, Cambridge
	University Press.
	
	\bibitem[Huang et~al.(2011)Huang, Hsing, and Cressie]{Huang2011}
	Huang, C., Hsing, T., and Cressie, N. (2011), \enquote{Nonparametric estimation
		of the variogram and its spectrum}, \emph{Biometrika}, 98, 775--789.
	
	\bibitem[Huang and Rossini(1997)]{Huang1997}
	Huang, J., and Rossini, A.J. (1997), \enquote{Sieve Estimation for the
		Proportional-Odds Failure-Time Regression Model With Interval Censoring},
	\emph{Journal of the American Statistical Association}, 92, 960--967.
	
	\bibitem[Im et~al.(2007)Im, Stein, and Zhu]{Im2007}
	Im, H., Stein, M., and Zhu, Z. (2007), \enquote{Semiparametric Estimation of
		Spectral Density With Irregular Observations}, \emph{Journal of the American
		Statistical Association}, 102, 726--735.
	
	\bibitem[Kaufman et~al.(2008)Kaufman, Schervish, and Nychka]{Kaufman2008}
	Kaufman, C.G., Schervish, M.J., and Nychka, D.W. (2008), \enquote{Tapering for
		Likelihood-Based Estimation in Large Spatial Data Sets}, \emph{Journal of the
		American Statistical Association}, pp. 1545--1555.
	
	\bibitem[Keshavarz et~al.(2016)Keshavarz, Scott, and Nguyen]{Keshavarz2016}
	Keshavarz, H., Scott, C., and Nguyen, X. (2016), \enquote{On the consistency of
		inversion-free parameter estimation for Gaussian random fields},
	\emph{Journal of Multivariate Analysis}, 150, 245--266.
	
	\bibitem[Newey(1991)]{Newey1991}
	Newey, W.K. (1991), \enquote{Uniform Convergence in Probability and Stochastic
		Equicontinuity}, \emph{Econometrica}, 59, 1161--1167.
	
	\bibitem[Newey and McFadden(1994)]{Newey1994}
	Newey, W.K., and McFadden, D. (1994), \emph{Handbook of Econometrics}, Elsevier
	Science, chap.~4, pp. 2111--2245. Large sample estimation and hypothesis
	testing.
	
	\bibitem[NOAA(2020)]{NOAA2020}
	NOAA (2020). Global Summary of The Month (GSOY) data retrieved at
	\url{https://www.ncei.noaa.gov/data/gsoy/}.
	
	\bibitem[Piotr et~al.(2017)Piotr, Caroline, and Donald]{Piotr2017}
	Piotr, Z., Caroline, U., and Donald, R. (2017), \enquote{Maximum likelihood
		estimation for linear Gaussian covariance models}, \emph{Journal of Royal
		Statistical Society}, 79, 1269--1292.
	
	\bibitem[Reich and Fuentes(2012)]{Reich2012}
	Reich, B., and Fuentes, M. (2012), \enquote{Nonparametric Bayesian models for a
		spatial covariance}, \emph{Statistical Methodology}, 9, 265--274.
	
	\bibitem[Schoenberg(1938)]{Schoenberg1938}
	Schoenberg, I. (1938), \enquote{Metric spaces and completely monotone
		functions}, \emph{Annals of Mathematics}, 39, 811--841.
	
	\bibitem[Shaby and Ruppert(2012)]{Shaby2012}
	Shaby, B., and Ruppert, D. (2012), \enquote{Tapered Covariance: Bayesian
		Estimation and Asymptotics}, \emph{Journal of Computational and Graphical
		Statistics}, 21, 433--452.
	
	\bibitem[Shen and Wong(1994)]{Shen1994}
	Shen, X., and Wong, W. (1994), \enquote{Convergence rate of sieve estimates},
	\emph{Annals of Statistics}, 22.
	
	\bibitem[Stein(1999)]{Stein1999}
	Stein, M. (1999), \emph{Interpolation of Spatial Data: Some Theory for
		Kriging}, Springer, New York.
	
	\bibitem[van~der Vaart and Wellner(1996)]{Van1996}
	van~der Vaart, A.W., and Wellner, J.A. (1996), \emph{Weak Convergence and
		Empirical Processes}, Springer, New York.
	
	\bibitem[Xue et~al.(2010)Xue, Miao, and Wu]{Xue2010}
	Xue, H., Miao, H., and Wu, H. (2010), \enquote{Sieve estimation of constant and
		time-varying coefficients in nonlinear ordinary differential equation models
		by considering both numerical error and measurement error}, \emph{Annals of
		Statistics}, 38, 2351--2387.
	
  	\bibitem[Wang(2022)]{Wang2022}
  Wang, Y. (2022), \emph{Covariance Function Estimation and Causal Inference Method}, North Carolina State University.
	
	\bibitem[Zhang(2004)]{Zhang2004}
	Zhang, H. (2004), \enquote{Inconsistent Estimation and Asymptotically
		Equivalent Interpolations in Model-based Geostatistics}, \emph{Journal of the
		American Statistical Association}, 99, 250--261.
	
	\bibitem[Zheng et~al.(2010)Zheng, Zhu, and Roy]{Zheng2010}
	Zheng, Y., Zhu, J., and Roy, A. (2010), \enquote{Nonparametric Bayesian
		Inference for the Spectral Density Function of a Random Field},
	\emph{Biometrika}, 97, 238--245.
	
\end{thebibliography}

\bibliographystyle{gNST}

\appendix

\section{Proof of Main Results }
\label{sec:verify}

\subsection{Proof of Approximation Theorem in Theorem 1} 
\label{subsec: proof of covariance approximation}

We first give a theorem that will be used for later proof of Approximation Theorem.

\noindent \textbf{Dini Theorem.}
$A$ is a compact set. $f:A\rightarrow\mathbb{R}$ is continuous. $\{f_n\}_{n=1}^\infty$ is a montone sequence of continuous functions which converge to $f$ pointwise on $A$. Then $f_n$ converges to $f$ uniformly.

\begin{proof}
	We consider two cases.
	\begin{enumerate}
		\item Case 1: $g(s)$ has an upper bound on $(0,1)$, namely $M=sup_{(0,1)}g(s)<\infty$.
		
		Define $g(0)=\underset{s\rightarrow 0}{\lim\inf}\; g(s)$ and $g(1)=\underset{s\rightarrow 1}{\lim\sup}\;g(s)$, thus
		$g(s)$  is continuous and bounded on $[0,1]$. 
		\citep{Bernstein1912} states that given any $\epsilon>0$, there exists $g_m(s)$ of order m taking the form 
		\begin{equation*}
		g_m(s)=\sum_{k=1}^m w_k {m-1\choose k-1} s^{k-1} (1-s)^{m-k}, w_k\geq 0, k=1,\cdots, m; 
		\end{equation*}
		so that $||g_m(s)-g(s)||_\infty <\epsilon$. Then the corresponding approximated covariance function computed through $C_m(h)=\int_0^1 s^{h^2}g_m(s)ds$ has the form
		\begin{equation*}
		C_m(h)=\sum_{k=1}^m w_k \frac{1}{m}\frac{Beta(k+h^2,m-k+1)}{Beta(k,m-k+1)}, w_k\geq 0, k=1,\cdots,m.
		\end{equation*}
		Apparently, $C_m\in \mathcal{A}_m$.
		\begin{enumerate}
			\item For the first part of \textbf{Approximation Theorem}, 
			one can derive	
			\begin{equation}
			|C(h)-C_{m}(h)| 
			\leq |\int_0^1 s^{h^2} ||g(s)-g_m(s)||_\infty ds|
			= \epsilon \frac{1}{h^2+1}.\label{eq:C(h)-C_m(h)} 
			\end{equation}
			for any $h\geq 0$. Hence, for any $\epsilon>0$ there exists $C_m \in\mathcal{A}_m$ such that $||C-C_m||_\infty \leq \epsilon$ when $g(s)$ has an upper bound on $(0,1)$.
			
			\item Consider the second part. First we notice $||C||_2<\infty$ for the reason that $C(h)\leq M\frac{1}{h^2+1}$ leads to $||C||_2\leq M (\int_{0}^{\infty}\frac{1}{(h^2+1)^2}dh)^{\frac{1}{2}}<M\frac{\sqrt{\pi}}{2}$. By (\ref{eq:C(h)-C_m(h)}),
			\begin{align*}
			||C-C_m||_2 &\leq \epsilon (\int_0^\infty (\frac{1}{h^2+1})^2dh)^{\frac{1}{2}}=\epsilon(\frac{h}{2(1+h^2)}+\frac{arctan h}{2}\bigg|_{0}^{\infty})^{\frac{1}{2}}=\epsilon\frac{\sqrt{\pi}}{2}.
			\end{align*}
			Hence, for any $\epsilon^\prime=\epsilon\frac{\sqrt{\pi}}{2}>0$ there exists $C_m \in\mathcal{A}_m$ such that $||C-C_m||_2 \leq \epsilon^\prime$ when $g(s)$ has an upper bound on $(0,1)$.
		\end{enumerate}

		\item Case 2: $g(s)$ has no upper bound on $(0,1)$.
		
		Define a sequence of continuous and bounded functions $\{g_{K}\}_{K=1}^\infty$ as $g_{K}(s)=I_{\{g(s)<K\}}g(s)+I_{\{g(s)\geq K\}}K$. Also define $C_{K}(h)=\int_0^1 s^{h^2}g_{K}(s)ds$ for $K\in\mathbf{N}^+$. Since $\int_{0}^{1}g_{K}(s)ds$ is finite for $K\in\mathbf{N}^+$, $\{C_{K}\}_{K=1}^\infty$ are valid isotropic continuous covariance functions in $\mathbb{R}^\infty$. According to Monotone Convergence theorem, $\{C_K\}_{K=1}^\infty\subset \mathcal{A}$ is a monotonic increasing sequence of functions and converge to $C(h)$ pointwise on $[0,+\infty)$.
		
		Note that  $\underset{h\rightarrow\infty}{\lim} C(h)=0$ because $C(h)=\int_{0}^{1}s^{h^2}g(s)ds$ for $h\geq 0$ is upper bounded by value $C(0)=\int_{0}^{1}s^0 g(s)ds$ and 
		under Dominated Convergence Theorem one have $\underset{h\rightarrow\infty}{\lim}\int_0^1 s^{h^2}g(s)ds = \int_0^1 \underset{h\rightarrow\infty}{\lim}s^{h^2}g(s)ds =0$.  Similarly,  $\underset{h\rightarrow\infty}{\lim}C_K(h)=0$ for $K\in \mathbf{N}^+.$
		Define $C(+\infty)=0$ and $C_K(+\infty)=0,K\in\mathbf{N}^+$. 
		Hence, $\{C_{K}\}_{K=1}^\infty$ is a monotonic increasing sequence of functions and converge to continuous function $C(h)$ pointwise on the compact set $[0,+\infty]$. 
		
		\begin{enumerate}
			\item Consider $L_\infty$ approximation. Following from Dini Theorem, one obtain 
			\begin{align*}
			\underset{K\rightarrow\infty}{\lim}||C-C_{K} ||_\infty=0. 
			\end{align*}
			As a result, for any $\epsilon>0$, there exists $C_{K}$ such that $||C_{K} - C||_\infty<\frac{\epsilon}{2}$.  For such $C_{K}\in \mathcal{A}$, $g_{K}(s)$ has an upper bound $K$ on $[0,1]$. As shown above, there exists $g_{K,m}(s)$ such that
			$||g_{K}-g_{K,m}||_{\infty} <\frac{\epsilon}{2}$ and accordingly $||C_{K}-C_{K,m}||_\infty\leq \frac{\epsilon}{2}$ where $C_{K,m}(h)=\int_0^1 s^{h^2}g_{K,m}(s)ds$.
			By triangle inequality, we obtain $||C-C_{K,m}||_\infty \leq ||C_{K}-C_{K,m}||_\infty + ||C-C_{K}||_\infty\leq \epsilon$.  
			
			\item Consider the second part.  It follows from  $\underset{K\rightarrow\infty}{\lim}C_K(h)=C(h)$  that 
			$\underset{K\rightarrow\infty}{\lim} (C(h)-C_K(h))=0$ for $h\in[0,+\infty]$. 
			Given $||C||_2<\infty$, one have 
			$||C-C_K||_2<\infty$ because $||C-C_K||_2 \leq ||C||_2+||C_K||_2 \leq 2||C||_2 $. By Dominated convergence theorem it suffices to show that  $\underset{K\rightarrow\infty}{\lim}\int_{0}^{\infty} (C(h)-C_K(h))^2dh=0$ and thus $\underset{K\rightarrow\infty}{\lim}|| C-C_K ||_2=0$. 
			For any $\epsilon>0$, there exists $C_K$ so that $||C-C_K||_2<\frac{\epsilon}{2}$. As proved above, for $C_K$ with bounded $g_K(s)$, there exists $C_{K,m}\in\mathcal{A}_m$ satisfying $||C_{K,m}-C_m||_2$. Hence, 
			$||C-C_{K,m}||_2 \leq ||C_{K}-C_{K,m}||_2 + ||C_{0}-C_{K}||_2 \leq \epsilon$
		\end{enumerate}
	\end{enumerate} 
	In sum,we concludes the proof of Approximation Theorem. 
\end{proof}

\subsection{Proof of $L_q$ Approximation in Remark \ref{remark:L_q approximation} }
\label{subsec: Proof of L_q Approximation}

\begin{proof}
	First we notice that  for any $m\in \mathbf{N}^+$, $||A_{m,m}||_q^q = \int_{0}^{\infty}(1+\frac{h^2}{m})^{-q}dh<\infty$ when $q> \frac{1}{2}$. Thus $A_{k,m}\in L_q$ for $q>\frac{1}{2}$ and $1\leq k\leq m$.
	
	Similar to the proof in Appendix \ref{subsec: proof of covariance approximation},  we consider two cases to verify $L_q$ approximation.
	\begin{enumerate}
		\item Case 1: $g(s)$ has an upper bound on $(0,1)$, namely $M=sup_{(0,1)}g(s)<\infty$. Clearly, $C(h)\in L_q$ for the reason that $C(h)=\int_{0}^{1} s^{h^2}g(s)ds\leq \frac{M}{h^2+1}$.  For any $\epsilon$, one can obtain $C_m$ and $g_m$ satisfying (\ref{eq:C(h)-C_m(h)}). Further, one have
		\begin{equation*}
		\int_{0}^{\infty} (C(h)-C_m(h))^q dh \leq   \int_{0}^{\infty} \epsilon^q\frac{1}{(h^2+1)^q} dh = \epsilon^q || \frac{1}{h^2+1} ||_q^q.
		\end{equation*}
		Hence,  for any $\epsilon^\prime = \epsilon|| \frac{1}{h^2+1} ||_q>0$ there exists $C_m\in\mathcal{A}_m$ such that $||C-C_m ||_q <\epsilon^\prime.$
		
		\item Case 2: $g(s)$ has no upper bound on $(0,1).$
		
		For $\{C_K\}_{K=1}^\infty$ defined in Appendix 	\ref{subsec: proof of covariance approximation}, $\{C-C_K\}_{K=1}^\infty$ converge to $0$ pointwise on $[0,+\infty)$ and are upper bounded by $2C_0\in L_q$. Under Dominated convergence theorem it holds that $\underset{K\rightarrow\infty}
		{lim}\int_{0}^{\infty}(C(h)-C_K(h))^qdh = 0$ and thus $\underset{K\rightarrow\infty}{lim}||C-C_K||_q = 0$.
		Accordingly, given any $\epsilon>0$, there exists $C_K$ satisfying $||C-C_K||_q <\frac{\epsilon}{2}$. As mentioned above $g_K$ has an upper bound, where $g_K$ satisfies $C_K=\int_{0}^{1}s^{h^2}g_K(s)ds$. One can always find $C_{K,m}\in\mathcal{A}_m$ such that $||C_K-C_{K,m} ||_q < \frac{\epsilon}{2}$. By triangle inequality, we obtain
		$||C -C_{K,m}|| \leq||C -C_{m}|| + ||C_m -C_{K,m}||\leq \epsilon. $
	\end{enumerate}
	In conclusion, given any $C\in L_q, q > \frac{1}{2}$, for any $\epsilon>0$, there exists a function $C_m\in \mathcal{A}_m$ so that $||  C-C_m||_q<\epsilon$.
\end{proof}

\subsection{Proof of Corollary \ref{corollary: basis of A_m}}
\begin{proof}
	It is sufficient to prove that if there exists $\{w_k\}_{k=1}^{m}$ satisfying $\sum_{k=1}^m w_kA_{k,m}(h)\equiv0$ for $h\in[0,\infty)$, then $w_1\equiv w_2\cdots w_m\equiv0.$ 
	
	Consider $C\in \mathcal{A}$ in the form of 
	\begin{equation}
	C(h) = \int_0^\infty exp(-r^2h^2) f(r)dr\label{eq: Bochner isotropic Inf eq new}.
	\end{equation}
	Letting $\tilde{C}(h)=C(\sqrt{h})$ and $\tilde{f}(r)=f(\sqrt{r})\frac{1}{2}r^{-\frac{1}{2}}$, one can get the laplace transformation:
	\begin{equation*}
	\tilde{C}(h)=\int_0^\infty exp(-sh)\tilde{f}(s)ds.
	\end{equation*}
	Likewise, for $A_{k,m}(h)$ with spectral density
	$f_{k,m}(r) =\frac{2}{Beta(k,m-k+1)}r e^{-kr^2}(1-e^{-r^2})^{m-k}$ in (\ref{eq: Bochner isotropic Inf eq new}), we obtain the laplace transformation:
	\begin{equation}
	\tilde{A}_{k,m}(h)=\int_0^\infty exp(-sh)\tilde{f}_{k,m}(s)ds,\label{eq:laplace transformation basis function}
	\end{equation}
	where $\tilde{A}_{k,m}(h)=A_{k,m}(\sqrt{h})\mbox{ and }\tilde{f}_{k,m}(r)=f_{k,m}(\sqrt{r})\frac{1}{2}r^{-\frac{1}{2}}.$
	
	$\sum_{k=1}^m w_kA_{k,m}(h)\equiv0$ on $ [0,\infty)$ implies $ \sum_{k=1}^m w_k\tilde{A}_{k,m}(h)\equiv0\mbox{ on } [0,\infty)$.  Further, notice that $\sum_{k=1}^m w_k\tilde{A}_{k,m}(h)=\int_0^\infty exp(-sh) \sum_{k=1}^m w_k \tilde{f}_{k,m}(s)ds\equiv0$  by (\ref{eq:laplace transformation basis function}).  Due to uniqueness of laplace transformation, it holds that $\sum_{k=1}^m w_{k} \tilde{f}_{k,m}(r)\equiv 0$ for $r\in[0,\infty)$.  Further, we derive
	\begin{eqnarray}
	\sum_{k=1}^m w_k\tilde{f}_{k,m}(r)&=&\sum_{k=1}^m w_k f_{k,m}(\sqrt{r})\frac{1}{2}r^{-\frac{1}{2}} \\
	&=&\sum_{k=1}^m w_k \frac{1}{Beta(k,m-k+1)}\cdot exp(-kr)(1-exp(-r))^{m-k} \label{eq: eq 1}
	\end{eqnarray}
	Replacing $s$ with  $exp(-r)$ in (\ref{eq: eq 1}) gives
	\begin{align*}
	\sum_{k=1}^m w_k \frac{2}{Beta(k,m-k+1)}\cdot s^{k}(1-s)^{m-k}\equiv 0.
	\end{align*}
	Since Bersterin polynomials are linearly independent, $w_1\equiv w_2\equiv \cdots\equiv 0$. 
	We finish the proof.
\end{proof}

\subsection{ Proof of Corollary \ref{corollary: increasing A_m}}
\begin{proof}		
	\begin{enumerate}
		\item First, we show $\mathcal{A}_m\subset\mathcal{A}$, for $m\in \mathbf{N}^+$. It is equivalent to show that for any $w_k\geq 0,k=1,\cdots,m$, we have $C_m(h)=\sum_{k=1}^m w_k\frac{1}{m} A_{k,m}\in \mathcal{A}$. Recall that such $C_m(h)$ can be written as 
		\begin{align}
		C_m(h)=\int_0^\infty exp(-r^2h^2)f_m(r)dr,\label{eq:C_m to f_m}
		\end{align}
		where $f_m(r)=2rexp(-r^2)\sum_{k=1}^m w_k {m-1\choose k-1} exp(-r^{2}(k-1)) (1-exp(-r^2))^{m-k} $.
		Since $\int_0^\infty f_m(r)dr = \int_0^1 g_m(s)ds= \sum_{k=1}^m w_k {m-1\choose k-1}Beta(k,m-k+1)<\infty$,
		by definition of $\mathcal{A}$, $C_m\in\mathcal{A}$.
		\item Next we show $\mathcal{A}_m \subset \mathcal{A}_{m+1}$ for $m\in \mathbf{N}^+$. 
		For any $C_m(h)=\sum_{k=1}^m w_k\frac{1}{m} A_{k,m}\in \mathcal{A}$ with $g_m(s)$ in the form
		$g_m(s)=\sum_{k=1}^m w_k {m-1\choose k-1} s^{k-1} (1-s)^{m-k}$,
		one can write
		\begingroup
		\allowdisplaybreaks
		\begin{align*}
		g_m(s) &=  \sum_{k=1}^m w_k {m-1 \choose k-1} s^{k-1} (1-s)^{m-k}\\
		&= \sum_{k=1}^m w_k {m-1 \choose k-1} s^{k-1} (1-s)^{m-k}(1-s+s)\\
		&=  \sum_{k=1}^m w_k {m-1 \choose k-1} s^{k-1} (1-s)^{m-k+1}  + \sum_{k=1}^m w_k {m-1 \choose k-1} s^{k} (1-s)^{m-k}\\
		&= \sum_{k=1}^{m+1} c_k {m \choose k-1} s^{k-1} (1-s)^{m+1-k},\\
		\mbox{ where } c_k &= 
		\begin{cases} 
		w_1, &\mbox{if }k=1;\\
		\frac{w_k {m-1 \choose k-1} + w_{k-1} {m-1 \choose k-2}}{ {m \choose k-1} }, & \mbox{if } 2\leq k\leq m; \\ 
		w_m, & \mbox{if } k=m+1 .
		\end{cases}	
		\end{align*}
		\endgroup
		Further, we obtain
		$C_m(h) = \int_0^1 s^{h^2}   \sum_{k=1}^{m+1} c_k {m \choose k-1} s^{k-1} (1-s)^{m+1-k} ds
		=\sum_{k=1}^{m+1} c_k\frac{1}{m+1}A_{k,m+1}$
		with $c_k \geq 0, k=1,\cdots,m+1$, indicating $C_m(h)\in \mathcal{A}_{m+1}$. Therefore, $\mathcal{A}_m\subset\mathcal{A}_{m+1}$.
	\end{enumerate}
\end{proof}

\subsection{Proof of Corollary \ref{corollary: compact A_m}}
\begin{proof}
	Since Corollary 1 implies that there is a continuous bijection between $\mathcal{A}_m$ and the set $\{\bm{w}\in \mathcal{R}^m: w_k\geq 0\mbox{ for }1\leq k\leq m\}$, $\mathcal{A}_m$ is closed under $L_\infty$ norm and $L_q$ norm ($q>1$).
\end{proof}

\subsection{Proof of Consistency Theorem \ref{theorem:estimation consistency}}

First, some useful lemmas are provided for later verification.
\begin{lemma}	\label{lemma:derivative of logdet} 
	$\partial (logdet(\mathbf{X})) = tr(\mathbf{X}^{-1}\partial \mathbf{X})$ 
\end{lemma}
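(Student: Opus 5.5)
The plan is to treat $\mathbf{X}\mapsto \operatorname{logdet}(\mathbf{X})$ as a map on the open cone of positive definite matrices (or, more generally, invertible matrices). I will compute its directional derivative directly from the definition. Fix $\mathbf{X}$ invertible and a perturbation $\partial\mathbf{X}=\mathbf{H}$, and factor
\begin{equation*}
\mathbf{X}+t\mathbf{H}=\mathbf{X}(\mathbf{I}+t\mathbf{X}^{-1}\mathbf{H}).
\end{equation*}
Multiplicativity of the determinant then gives
\begin{equation*}
\operatorname{logdet}(\mathbf{X}+t\mathbf{H})=\operatorname{logdet}(\mathbf{X})+\log\det(\mathbf{I}+t\mathbf{X}^{-1}\mathbf{H}).
\end{equation*}
For positive definite $\mathbf{X}$ and small $t$, the second determinant is close to $1$, so the logarithm is well defined.

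The key step is the first-order expansion $\det(\mathbf{I}+t\mathbf{B})=1+t\,\mathrm{tr}(\mathbf{B})+O(t^2)$ with $\mathbf{B}=\mathbf{X}^{-1}\mathbf{H}$. I will justify it in one of two ways:
\begin{enumerate}
\item Write $\det(\mathbf{I}+t\mathbf{B})=\prod_i(1+t\lambda_i(\mathbf{B}))$ over the (complex) eigenvalues of $\mathbf{B}$, expand the product, and use $\sum_i\lambda_i=\mathrm{tr}(\mathbf{B})$.
\item Alternatively, use the Leibniz expansion: only the identity permutation contributes at order $t$, and its contribution is the sum of diagonal entries of $\mathbf{B}$.
\end{enumerate}
Then $\log(1+t\,\mathrm{tr}(\mathbf{B})+O(t^2))=t\,\mathrm{tr}(\mathbf{B})+O(t^2)$. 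Dividing by $t$ and letting $t\to0$ gives the directional derivative $\mathrm{tr}(\mathbf{X}^{-1}\mathbf{H})$.

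Since $\mathbf{X}\mapsto\operatorname{logdet}(\mathbf{X})$ is smooth (the determinant is a polynomial, positive on the cone), the directional derivative is the Fréchet differential $\mathbf{H}\mapsto \mathrm{tr}(\mathbf{X}^{-1}\mathbf{H})$. In the form used later for parametrized matrices such as $\mathbf{\Sigma}_{C,n}=\sum_k w_k\mathbf{A}_{k,m,n}$, the chain rule then yields
\begin{equation*}
\partial\operatorname{logdet}(\mathbf{\Sigma})/\partial w_k=\mathrm{tr}(\mathbf{\Sigma}^{-1}\partial\mathbf{\Sigma}/\partial w_k).
\end{equation*}

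There is no serious obstacle. The only point needing care is making the $O(t^2)$ remainder uniform enough to pass from directional to total derivative. That follows because $\det(\mathbf{I}+t\mathbf{B})$ is a polynomial in $t$ whose coefficients are continuous in $\mathbf{B}$.
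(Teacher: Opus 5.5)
Your argument is correct. The paper gives no proof of this identity: it lists it in the appendix as a standard matrix-calculus fact, next to the companion rules for $\partial\,\mathrm{tr}$, $\partial(\mathbf{X}^{-1})$ and the product rule. It is used only when differentiating the log-likelihood to obtain $l_{nw_k}$, in the form $\partial\log\det(\mathbf{\Sigma}_{C,n})/\partial w_k=\mathrm{tr}(\mathbf{\Sigma}_{C,n}^{-1}\mathbf{A}_{k,m,n})$.

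Your factorization $\mathbf{X}+t\mathbf{H}=\mathbf{X}(\mathbf{I}+t\mathbf{X}^{-1}\mathbf{H})$, followed by the expansion $\det(\mathbf{I}+t\mathbf{B})=1+t\,\mathrm{tr}(\mathbf{B})+O(t^2)$, is the standard self-contained derivation. It supplies exactly what the paper takes for granted.

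Because $\mathbf{\Sigma}_{C,n}$ is linear in $\bm{w}$, the paper only needs the derivative along the single direction $\mathbf{H}=\mathbf{A}_{k,m,n}$. Your closing discussion of upgrading to a Fréchet derivative, with a uniform remainder, is therefore optional. As you note, it follows immediately from smoothness of $\log\circ\det$ on the open set $\{\det>0\}$.

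One small caveat concerns your parenthetical ``or, more generally, invertible matrices.'' When $\det\mathbf{X}<0$ you must replace $\log\det$ by $\log|\det|$; the formula is unchanged. For the positive definite covariance matrices in the paper the issue does not arise.
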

\begin{lemma}\label{lemma:derivative of trace}
	$\partial(tr(\mathbf{X}))) = tr(\partial\mathbf{X})$ 
\end{lemma}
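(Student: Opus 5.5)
Since the trace is a linear functional on square matrices, this identity follows directly from linearity. The plan is to write $\mathbf{X}=\mathbf{X}(t)$ as a matrix-valued function of a scalar parameter $t$, which will later be a weight $w_k$ or the range parameter $\rho$. We assume each entry $\mathbf{X}_{[ii]}(t)$ is differentiable, and interpret $\partial\mathbf{X}$ as the entrywise derivative $\left(\partial \mathbf{X}_{[ij]}/\partial t\right)_{i,j}$. This is exactly the setting in which the lemma is used alongside Lemma \ref{lemma:derivative of logdet} for derivatives of the log-likelihood (\ref{eq:deviance}), for example with $\partial \mathbf{\Sigma}_{C,n}/\partial w_k=\mathbf{A}_{k,m,n}$.

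The key step is to expand the trace as $tr(\mathbf{X}(t))=\sum_{i=1}^n \mathbf{X}_{[ii]}(t)$, which is a finite sum of scalar differentiable functions. Differentiating term by term with the ordinary sum rule gives $\partial\, tr(\mathbf{X}) = \sum_{i=1}^n \partial \mathbf{X}_{[ii]}$. The right-hand side is, by definition, the sum of the diagonal entries of the matrix $\partial\mathbf{X}$, namely $tr(\partial\mathbf{X})$.

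If a coordinate-free version is wanted instead, the same argument can be phrased through differentials. Linearity gives $tr(\mathbf{X}+d\mathbf{X})-tr(\mathbf{X})=tr(d\mathbf{X})$ exactly, with no higher-order remainder, so the differential of $tr$ at $\mathbf{X}$ is the map $d\mathbf{X}\mapsto tr(d\mathbf{X})$.

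There is no real obstacle here. The only point needing care is to state the differentiability convention for $\partial\mathbf{X}$ clearly, so that the result combines cleanly with Lemma \ref{lemma:derivative of logdet}: $\partial\, tr(\mathbf{X}^{-1}\partial\mathbf{X})=tr(\partial(\mathbf{X}^{-1}\partial\mathbf{X}))$, as needed for the second-order expansions in the consistency proof.
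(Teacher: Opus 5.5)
Your argument is correct. The paper states this lemma without proof as a standard matrix-calculus fact, and differentiating $tr(\mathbf{X})=\sum_i \mathbf{X}_{[ii]}$ term by term (equivalently, using linearity of the trace) is exactly the standard justification.

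One small correction to your framing: in the paper these lemmas only serve to obtain the first derivatives (\ref{eq:derivative of beta})--(\ref{eq:derivative of w}) used in the stochastic-equicontinuity step. The consistency proof never expands the likelihood to second order, so the identity $\partial\, tr(\mathbf{X}^{-1}\partial\mathbf{X})=tr(\partial(\mathbf{X}^{-1}\partial\mathbf{X}))$ is not actually needed there.
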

\begin{lemma}\label{lemma:derivative of inverse}
	$\partial(\mathbf{X}^{-1}) = -\mathbf{X}^{-1}(\partial\mathbf{X})\mathbf{X}^{-1}$ 
\end{lemma}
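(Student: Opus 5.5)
The plan is to obtain the identity by differentiating the defining relation of the inverse, rather than by working with an explicit formula such as the adjugate. Let $\mathbf{X}=\mathbf{X}(t)$ be a matrix-valued function of a scalar parameter $t$ (in our application $t$ is a weight $w_k$ or the range $\rho$). Assume $\mathbf{X}(t)$ is invertible and differentiable entrywise in a neighbourhood of the point of interest. We start from
\begin{equation*}
\mathbf{X}(t)\,\mathbf{X}(t)^{-1}=\mathbf{I}
\end{equation*}
for all such $t$. The right-hand side is constant, so its derivative is the zero matrix.

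The key step is to apply the product rule for matrix products. This holds entrywise, since each entry of $\mathbf{X}\mathbf{X}^{-1}$ is a finite sum of products of scalar functions. It gives
\begin{equation*}
(\partial \mathbf{X})\,\mathbf{X}^{-1}+\mathbf{X}\,\partial(\mathbf{X}^{-1})=\mathbf{0}.
\end{equation*}
Left-multiplying by $\mathbf{X}^{-1}$ then yields $\partial(\mathbf{X}^{-1})=-\mathbf{X}^{-1}(\partial\mathbf{X})\mathbf{X}^{-1}$, which is the claim.

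The only point needing care, and the one I regard as the main (mild) obstacle, is justifying that $t\mapsto \mathbf{X}(t)^{-1}$ is differentiable at all, since the product rule above presupposes it. I would handle this through Cramer's rule: $\mathbf{X}^{-1}=\operatorname{adj}(\mathbf{X})/\det(\mathbf{X})$. The entries of $\operatorname{adj}(\mathbf{X})$ and $\det(\mathbf{X})$ are polynomials in the entries of $\mathbf{X}$, hence differentiable in $t$, and $\det(\mathbf{X})\neq 0$. So each entry of $\mathbf{X}^{-1}$ is a quotient of differentiable functions with nonvanishing denominator.

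In the setting of Theorem~\ref{theorem:estimation consistency}, invertibility of $\mathbf{\Sigma}_{C,n}$ for $C\in\bm{\Omega}_m$ is guaranteed by the positive lower eigenvalue bounds deduced from (\ref{assumption:max and min eigen for A_m,m}). Differentiability in $w_k$ is immediate, since $\mathbf{\Sigma}_{C,n}=\sum_k w_k\mathbf{A}_{k,m,n}$ is linear in the weights, with $\partial\mathbf{\Sigma}_{C,n}/\partial w_k=\mathbf{A}_{k,m,n}$.
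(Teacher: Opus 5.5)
Your argument is correct and is the standard derivation. You differentiate $\mathbf{X}\mathbf{X}^{-1}=\mathbf{I}$ with the product rule (the paper's Lemma~\ref{lemma:derivative of product}), left-multiply by $\mathbf{X}^{-1}$, and use the adjugate formula to know in advance that $\mathbf{X}^{-1}$ is differentiable. The paper gives no proof, listing this as a known matrix-calculus identity, so there is no competing route to compare.

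One small quibble concerns your closing remark, not the lemma. (\ref{assumption:max and min eigen for A_m,m}) bounds $\lambda_{\min}(\mathbf{A}_{m,m,n})$ away from zero only for large $n$. Invertibility of $\mathbf{\Sigma}_{C,n}$ at every fixed $n$ instead comes from strict positive definiteness of the basis functions at distinct locations: each $A_{k,m}$ is a mixture of Gaussian kernels with a positive mixing density.
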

\begin{lemma}\label{lemma:derivative of product}
	$\partial(\mathbf{X}\mathbf{Y}) = (\partial\mathbf{X}) \mathbf{Y} + \mathbf{X}(\partial\mathbf{Y})$ 
\end{lemma}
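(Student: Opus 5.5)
The plan is to reduce the matrix identity to the ordinary scalar product rule, entry by entry. Throughout, $\mathbf{X}=\mathbf{X}(t)$ and $\mathbf{Y}=\mathbf{Y}(t)$ are conformable matrix-valued functions of a scalar parameter $t$. In our application $t$ is a weight $w_k$, the range parameter $\rho$, or $\sigma^2$. Each entry is assumed differentiable in $t$, and $\partial$ acts entrywise, i.e. $(\partial\mathbf{X})_{[i,j]}=\partial \mathbf{X}_{[i,j]}$. This is the same convention used for $\partial\mathbf{\Sigma}_{C,n}/\partial w_k$, whose entries are $A_{k,m}(\|\bm{s}_i-\bm{s}_j\|_2)$.

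First, for fixed indices $(i,j)$, I would write
\begin{equation*}
(\mathbf{X}\mathbf{Y})_{[i,j]}=\sum_{l}\mathbf{X}_{[i,l]}\mathbf{Y}_{[l,j]}.
\end{equation*}
This is a finite sum of products of scalar differentiable functions of $t$. By linearity of differentiation and the scalar Leibniz rule, its derivative is
\begin{equation*}
\sum_l (\partial\mathbf{X}_{[i,l]})\mathbf{Y}_{[l,j]}+\sum_l \mathbf{X}_{[i,l]}(\partial\mathbf{Y}_{[l,j]}).
\end{equation*}
Second, I would recognize these two sums as the $(i,j)$ entries of $(\partial\mathbf{X})\mathbf{Y}$ and $\mathbf{X}(\partial\mathbf{Y})$, respectively. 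Since the identity then holds for every entry, it holds as a matrix identity.

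No step is a genuine obstacle. The only point needing care is that matrix multiplication is non-commutative, so the order of the factors must be kept in both terms. The argument above preserves that order automatically, because it never swaps $\mathbf{X}$ and $\mathbf{Y}$ within a product.

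As a consistency check, and because it is how the lemma will be used later, I would note two consequences. Applying the rule to $\mathbf{X}\mathbf{X}^{-1}=\mathbf{I}$ gives $(\partial\mathbf{X})\mathbf{X}^{-1}+\mathbf{X}\,\partial(\mathbf{X}^{-1})=\mathbf{0}$, which recovers Lemma \ref{lemma:derivative of inverse}. Combining the rule with Lemma \ref{lemma:derivative of trace} gives the derivatives of the quadratic-form and trace terms of the log-likelihood (\ref{eq:deviance}).
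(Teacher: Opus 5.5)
Your entrywise reduction to the scalar Leibniz rule is correct and complete. The paper gives no proof of this lemma; it lists it with Lemmas \ref{lemma:derivative of logdet}--\ref{lemma:derivative of inverse} as a standard matrix-calculus identity, and your argument is the usual justification for it. One minor caveat applies to your side remark only: recovering Lemma \ref{lemma:derivative of inverse} from $\mathbf{X}\mathbf{X}^{-1}=\mathbf{I}$ presupposes that $\mathbf{X}^{-1}$ is differentiable. This holds by the adjugate formula $\mathbf{X}^{-1}=\mathrm{adj}(\mathbf{X})/\det\mathbf{X}$ whenever $\det\mathbf{X}\neq 0$.
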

\begin{lemma}\label{lemma: eigen and trace}
	$\mathbf{A}$ is a semipositive definite matrix and $\mathbf{B}$ is a symmetric matrix. Then $\lambda_{min}(\mathbf{B})tr(\mathbf{A}) \leq tr(\mathbf{A}\mathbf{B}) \leq \lambda_{max}(\mathbf{B})tr(\mathbf{A})$
\end{lemma}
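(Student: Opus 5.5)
The plan is to reduce both inequalities to the nonnegativity of the trace of a positive semidefinite matrix. I will use the spectral decomposition of $\mathbf{A}$ together with the Rayleigh quotient bounds for the symmetric matrix $\mathbf{B}$.

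First, since $\mathbf{A}$ is symmetric positive semidefinite, write $\mathbf{A}=\sum_{i=1}^n a_i \mathbf{u}_i\mathbf{u}_i^T$. Here $a_i\geq 0$ and $\{\mathbf{u}_i\}$ is an orthonormal basis of eigenvectors, so that $tr(\mathbf{A})=\sum_i a_i$. By linearity and the cyclic property of the trace,
\begin{equation*}
tr(\mathbf{A}\mathbf{B})=\sum_{i=1}^n a_i\, tr(\mathbf{u}_i\mathbf{u}_i^T\mathbf{B})=\sum_{i=1}^n a_i\, \mathbf{u}_i^T\mathbf{B}\mathbf{u}_i .
\end{equation*}

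Second, because $\mathbf{B}$ is symmetric, the Rayleigh quotient bound gives $\lambda_{min}(\mathbf{B})\leq \mathbf{u}^T\mathbf{B}\mathbf{u}\leq\lambda_{max}(\mathbf{B})$ for every unit vector $\mathbf{u}$. This bound follows by expanding $\mathbf{u}$ in an orthonormal eigenbasis of $\mathbf{B}$. Multiplying each such bound by the nonnegative weight $a_i$ preserves the inequalities. Summing over $i$ then yields $\lambda_{min}(\mathbf{B})\sum_i a_i\leq tr(\mathbf{A}\mathbf{B})\leq\lambda_{max}(\mathbf{B})\sum_i a_i$, which is the claim.

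An equivalent route is to note that $\mathbf{B}-\lambda_{min}(\mathbf{B})\mathbf{I}$ and $\lambda_{max}(\mathbf{B})\mathbf{I}-\mathbf{B}$ are positive semidefinite. Then $tr(\mathbf{A}^{1/2}\mathbf{P}\mathbf{A}^{1/2})\geq 0$ for each such $\mathbf{P}$, and cycling the trace gives the result.

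There is no real obstacle here. The only point needing care is that the sign of each weight $a_i$ is nonnegative, and this is exactly where the semidefiniteness of $\mathbf{A}$ is used. Without it, the inequalities could reverse term by term.
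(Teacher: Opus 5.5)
Your proof is correct and complete. Writing $\mathbf{A}=\sum_i a_i\mathbf{u}_i\mathbf{u}_i^T$ with $a_i\geq 0$ turns $tr(\mathbf{A}\mathbf{B})$ into a nonnegative combination $\sum_i a_i\,\mathbf{u}_i^T\mathbf{B}\mathbf{u}_i$ of Rayleigh quotients of $\mathbf{B}$. Your alternative route via $tr(\mathbf{A}^{1/2}\mathbf{P}\mathbf{A}^{1/2})\geq 0$ is equally valid.

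The paper gives no proof of this lemma; it lists it alongside the matrix-calculus identities as a standard fact, so your argument supplies the usual justification. If ``semipositive definite'' is read without assuming symmetry, apply your argument to the symmetric part $(\mathbf{A}+\mathbf{A}^T)/2$. This suffices because the skew part has zero trace and zero trace against any symmetric $\mathbf{B}$.
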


\begin{lemma}[Theorem 5.3.4 in \citep{Topics1991}]\label{lemma: eigen range of A1 to Am}
	For n by n non-negative definite matrices $\mathbf{A}$ and $\mathbf{B}$,  Schur product denoted as $\circ$ is element-wise matrix product. Then 
	\begin{align*}
	\lambda_{min}(\mathbf{A})\lambda_{min}(\mathbf{B}) & \leq \lambda_{min}(\mathbf{A}) \underset{k}{min}\mathbf{B}_{[k,k]} \leq \lambda_{min}(\mathbf{A}\circ \mathbf{B})\\
	\lambda_{max}(\mathbf{A} \circ \mathbf{B})& \leq \lambda_{max}(\mathbf{A}) \underset{k}{max}\mathbf{B}_{[k,k]} \leq \lambda_{max}(\mathbf{A})\lambda_{max}(\mathbf{B})
	\end{align*}	
\end{lemma}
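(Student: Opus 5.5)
The plan is to reduce every inequality in Lemma \ref{lemma: eigen range of A1 to Am} to the Schur product theorem: if $\mathbf{P}$ and $\mathbf{Q}$ are non-negative definite, so is $\mathbf{P}\circ\mathbf{Q}$. The outer inequalities in each line come from comparing diagonal entries with extreme eigenvalues. For the middle inequalities I will compare $\mathbf{A}$ with multiples of the identity in the Loewner order, and then take the Schur product with $\mathbf{B}$.

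First I would establish the Schur product theorem.
\begin{enumerate}
\item Write $\mathbf{Q}=\sum_{i}\mathbf{b}_i\mathbf{b}_i^T$ via its spectral decomposition, absorbing the non-negative eigenvalues into the vectors $\mathbf{b}_i$.
\item For any $\mathbf{x}\in\mathbb{R}^n$, the identity $\mathbf{x}^T(\mathbf{P}\circ \mathbf{b}\mathbf{b}^T)\mathbf{x}=(\mathbf{x}\circ\mathbf{b})^T\mathbf{P}(\mathbf{x}\circ\mathbf{b})\geq 0$ holds.
\item Summing over $i$ gives $\mathbf{P}\circ\mathbf{Q}\succeq 0$.
\end{enumerate}
Second, since $\mathbf{e}_k^T\mathbf{B}\mathbf{e}_k=\mathbf{B}_{[k,k]}$, the Rayleigh quotient gives $\lambda_{min}(\mathbf{B})\leq \min_k\mathbf{B}_{[k,k]}$ and $\max_k\mathbf{B}_{[k,k]}\leq\lambda_{max}(\mathbf{B})$. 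Multiplying by $\lambda_{min}(\mathbf{A})\geq 0$, respectively $\lambda_{max}(\mathbf{A})\geq 0$, yields the two outer inequalities. Non-negativity of these eigenvalues is where non-negative definiteness of $\mathbf{A}$ is used.

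For the middle inequalities, note that $\mathbf{A}-\lambda_{min}(\mathbf{A})\mathbf{I}\succeq 0$ and $\lambda_{max}(\mathbf{A})\mathbf{I}-\mathbf{A}\succeq 0$. Applying the Schur product theorem with $\mathbf{B}$, and using $\mathbf{I}\circ\mathbf{B}=\mathrm{diag}(\mathbf{B}_{[1,1]},\ldots,\mathbf{B}_{[n,n]})$, gives
\begin{equation*}
\lambda_{min}(\mathbf{A})\,\mathrm{diag}(\mathbf{B}_{[k,k]})\preceq \mathbf{A}\circ\mathbf{B}\preceq \lambda_{max}(\mathbf{A})\,\mathrm{diag}(\mathbf{B}_{[k,k]}).
\end{equation*}
The diagonal matrix is bounded between $\min_k\mathbf{B}_{[k,k]}\,\mathbf{I}$ and $\max_k\mathbf{B}_{[k,k]}\,\mathbf{I}$. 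Again we need $\lambda_{min}(\mathbf{A})\geq0$ (and $\lambda_{max}(\mathbf{A})\geq 0$) so that multiplying through preserves the order. Taking Rayleigh quotients of these Loewner inequalities gives $\lambda_{min}(\mathbf{A})\min_k\mathbf{B}_{[k,k]}\leq\lambda_{min}(\mathbf{A}\circ\mathbf{B})$ and $\lambda_{max}(\mathbf{A}\circ\mathbf{B})\leq\lambda_{max}(\mathbf{A})\max_k\mathbf{B}_{[k,k]}$.

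The only step with real content is the Schur product theorem, and the rank-one decomposition above handles it directly. Everything else is bookkeeping with the Loewner order and checking that the multiplying constants are non-negative.
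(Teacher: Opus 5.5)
Your proof is correct: the rank-one argument establishes the Schur product theorem, and Schur-multiplying the Loewner bounds $\lambda_{min}(\mathbf{A})\mathbf{I}\preceq\mathbf{A}\preceq\lambda_{max}(\mathbf{A})\mathbf{I}$ by $\mathbf{B}$, using $\lambda_{min}(\mathbf{A})\geq 0$ and $\lambda_{max}(\mathbf{A})\geq 0$, gives all four inequalities. The paper gives no argument of its own and only cites Theorem 5.3.4 of Horn and Johnson; your route is the standard proof of that result, so it makes the citation self-contained.
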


Before we proceed, we need to demonstrate the following claims.

\begin{claim}\label{claim: order of A_k^m}The two inequality below hold. For $n\in\mathbb{N}^+$,
	\begin{align*}
	\lambda_{max}(\mathbf{A}_{1,m,n})&\leq \lambda_{max}(\mathbf{A}_{2,m,n})\cdots\leq \lambda_{max}(\mathbf{A}_{m,m,n})\\
	\lambda_{min}(\mathbf{A}_{m,m,n})&\leq \lambda_{min}(\mathbf{A}_{2,m,n})\cdots\leq \lambda_{min}(\mathbf{A}_{1,m,n})
	\end{align*}
\end{claim}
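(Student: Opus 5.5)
The plan is to exploit the product structure of the basis functions, $A_{k,m}(h)=\prod_{j=k}^m(1+h^2/j)^{-1}$, which gives for $1\leq k\leq m-1$ the factorization
\begin{equation*}
A_{k,m}(h)=\left(1+\frac{h^2}{k}\right)^{-1}A_{k+1,m}(h).
\end{equation*}
Evaluating this at the pairwise distances $\|\bm{s}_i-\bm{s}_j\|_2$ turns it into an entrywise (Schur) product of $n\times n$ matrices:
\begin{equation*}
\mathbf{A}_{k,m,n}=\mathbf{A}_{k+1,m,n}\circ\mathbf{B}_{k,n},
\end{equation*}
where $\mathbf{B}_{k,n[i,j]}=(1+\|\bm{s}_i-\bm{s}_j\|_2^2/k)^{-1}$. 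The whole claim then reduces to one comparison between consecutive indices $k$ and $k+1$, which I will obtain from Lemma \ref{lemma: eigen range of A1 to Am}. The two chains follow by chaining these comparisons for $k=m-1,m-2,\dots,1$.

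To apply Lemma \ref{lemma: eigen range of A1 to Am}, both factors must be non-negative definite.
\begin{itemize}
\item $\mathbf{A}_{k+1,m,n}$ is a covariance matrix of a valid isotropic covariance function, since $A_{k+1,m}\in\mathcal{A}_m\subset\mathcal{A}$ by Corollary \ref{corollary: increasing A_m}. Hence it is non-negative definite.
\item For $\mathbf{B}_{k,n}$, use the Gaussian-mixture representation
\begin{equation*}
\left(1+\frac{h^2}{k}\right)^{-1}=\int_0^\infty e^{-t}\exp\!\left(-\frac{t}{k}h^2\right)dt.
\end{equation*}
This is of the form (\ref{eq: Bochner isotropic Inf eq}) after the substitution $r^2=t/k$, so it is a valid isotropic covariance in $\mathbb{R}^\infty$, and hence in $\mathbb{R}^d$. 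Therefore $\mathbf{B}_{k,n}$ is non-negative definite.
\end{itemize}
Moreover, every diagonal entry of $\mathbf{B}_{k,n}$ equals $1$, because the distance is zero on the diagonal.

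Now apply Lemma \ref{lemma: eigen range of A1 to Am} with $\mathbf{A}=\mathbf{A}_{k+1,m,n}$ and $\mathbf{B}=\mathbf{B}_{k,n}$. Since $\min_i\mathbf{B}_{[i,i]}=\max_i\mathbf{B}_{[i,i]}=1$, the lemma gives
\begin{equation*}
\lambda_{min}(\mathbf{A}_{k+1,m,n})\leq\lambda_{min}(\mathbf{A}_{k,m,n})
\quad\text{and}\quad
\lambda_{max}(\mathbf{A}_{k,m,n})\leq\lambda_{max}(\mathbf{A}_{k+1,m,n}).
\end{equation*}
Iterating from $k=m-1$ down to $k=1$ yields both chains of the claim, for every $n$ and any configuration of locations $S_n$.

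The only step requiring care is verifying positive semidefiniteness of the factor $\mathbf{B}_{k,n}$, so that the Schur product lemma applies. The mixture representation above settles this directly. Everything else is bookkeeping with the factorization.
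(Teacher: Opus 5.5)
Your proposal is correct and follows essentially the same route as the paper: the one-step Schur factorization $\mathbf{A}_{k,m,n}=\mathbf{A}_{k+1,m,n}\circ\mathbf{B}_{k,n}$, where $\mathbf{B}_{k,n}$ has unit diagonal, combined with the Horn--Johnson eigenvalue bound for Hadamard products (Lemma \ref{lemma: eigen range of A1 to Am}) and iterated over $k$. Your one addition is an explicit justification that $\mathbf{B}_{k,n}$ is non-negative definite, via the Gaussian scale-mixture representation. The paper simply asserts this, so your version fills a step the paper leaves to the reader.
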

\begin{proof}
	For $1 \leq k\leq m$, define matrix $\{\mathbf{B}_{k,m,n}\}_{k=1}^m$ with $(i,j)$ element
	$\mathbf{B}_{k,m,n[ij]} = \frac{k}{k+ ||s_i-s_j||_2^2}$ and thus $\mathbf{A}_{k,m,n} = \mathbf{B}_{k,m,n}\circ \mathbf{B}_{k+1,m,n} \circ \cdots \circ \mathbf{B}_{m,m,n}$, where $\circ$ is Schur product.
	Note that $\{\mathbf{B}_{k,m,n}\}^m_{k=1}$ are nonnegative definite and diag$(\mathbf{B}_{k,m,n})=(1,\cdots,1)^T.$ By Lemma \ref{lemma: eigen range of A1 to Am}, $	\lambda_{max}(\mathbf{A}_{k-1,m,n})=\lambda_{max}(\mathbf{A}_{k,m,n}\circ\mathbf{B}_{k-1,m,n}) \leq \lambda_{max}(\mathbf{A}_{k,m,n})$ and  $	\lambda_{min}(\mathbf{A}_{k-1,m,n})=\lambda_{min}(\mathbf{A}_{k,m,n}\circ\mathbf{B}_{k-1,m,n}) \geq \lambda_{min}(\mathbf{A}_{k,m,n})$ for $2\leq k\leq m$. Therefore, Claim \ref{claim: order of A_k^m} holds.
\end{proof}

\begin{claim}\label{claim: max and min eigen of C and derivative} For $C\in\bm{\Omega}_m$ with $C=\sum_{k=1}^{m}w_k A_{k,m}$ and $n\in\mathbb{N}^+$,
	\begin{align*}
	&L_C\cdot\lambda_{min}(\mathbf{A}_{m,m,n})\leq \lambda(\mathbf{\Sigma}_{C,n})\leq U_{C}\cdot \lambda_{max}(\mathbf{A}_{m,m,n}),\\
	&\lambda_{min}(\mathbf{A}_{m,m,n})\leq \lambda(\frac{\partial \mathbf{\Sigma}_{C,n}}{\partial w_k})\leq  \lambda_{max}(\mathbf{A}_{m,m,n}),
	\end{align*}
	where $\lambda(\mathbf{\Sigma}_{C,n})$ is the eigenvalue of $\mathbf{\Sigma}_{C,n}$.
\end{claim}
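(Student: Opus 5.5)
The plan is to treat $\mathbf{\Sigma}_{C,n}$ as a nonnegative combination of the basis matrices and apply Rayleigh quotients. Write $C=\sum_{k=1}^m w_kA_{k,m}$ with $w_k\geq 0$. Then
\[
\mathbf{\Sigma}_{C,n}=\sum_{k=1}^m w_k\mathbf{A}_{k,m,n}, \qquad \frac{\partial \mathbf{\Sigma}_{C,n}}{\partial w_k}=\mathbf{A}_{k,m,n}.
\]
Each $\mathbf{A}_{k,m,n}$ is nonnegative definite, being the covariance matrix of the valid covariance function $A_{k,m}$ (Corollary \ref{corollary: increasing A_m}). For any unit vector $\mathbf{x}$ we therefore have
\[
\mathbf{x}^T\mathbf{\Sigma}_{C,n}\mathbf{x}=\sum_k w_k\,\mathbf{x}^T\mathbf{A}_{k,m,n}\mathbf{x}.
\]
Each term is bracketed between $w_k\lambda_{min}(\mathbf{A}_{k,m,n})$ and $w_k\lambda_{max}(\mathbf{A}_{k,m,n})$.

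Claim \ref{claim: order of A_k^m} gives $\lambda_{min}(\mathbf{A}_{k,m,n})\geq\lambda_{min}(\mathbf{A}_{m,m,n})$ and $\lambda_{max}(\mathbf{A}_{k,m,n})\leq\lambda_{max}(\mathbf{A}_{m,m,n})$ for every $k$. Substituting these into the Rayleigh bound yields
\[
\Big(\sum_k w_k\Big)\lambda_{min}(\mathbf{A}_{m,m,n})\leq \mathbf{x}^T\mathbf{\Sigma}_{C,n}\mathbf{x}\leq\Big(\sum_k w_k\Big)\lambda_{max}(\mathbf{A}_{m,m,n}).
\]

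Next, identify $\sum_k w_k$. Since $A_{k,m}(0)=1$, we have $\sum_k w_k=C(0)$. Because $C$ is a covariance function, $|C(h)|\leq C(0)$; indeed $C$ is decreasing here, by the representation $\int s^{h^2}g(s)ds$. Hence $C(0)=\|C\|_\infty$. Membership in $\bm{\Omega}$ then gives $L_C\leq\sum_k w_k\leq U_C$. Taking the min and max over unit $\mathbf{x}$ gives the first display, for every eigenvalue $\lambda(\mathbf{\Sigma}_{C,n})$.

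For the derivative bound, $\partial\mathbf{\Sigma}_{C,n}/\partial w_k=\mathbf{A}_{k,m,n}$. Every eigenvalue of this matrix lies in $[\lambda_{min}(\mathbf{A}_{k,m,n}),\lambda_{max}(\mathbf{A}_{k,m,n})]$. By Claim \ref{claim: order of A_k^m}, that interval is contained in $[\lambda_{min}(\mathbf{A}_{m,m,n}),\lambda_{max}(\mathbf{A}_{m,m,n})]$.

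The only step requiring care is the normalization. The sieve class $\mathcal{A}_m$ defined in Theorem \ref{theorem: covariance approximation} carries no $\frac{1}{m}$ factor, whereas \eqref{eq: C_m} does. I will fix the convention $C=\sum_k w_kA_{k,m}$ stated in the claim, so that $C(0)=\sum_k w_k$ exactly. Everything else is a direct consequence of Claim \ref{claim: order of A_k^m} and the convexity of Rayleigh quotients under nonnegative combinations.
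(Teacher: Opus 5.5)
Your proof is correct and follows the paper's argument, which the paper only sketches as a consequence of Assumption~(\ref{assumption: compact and convex}) and Claim~\ref{claim: order of A_k^m}. You write $\mathbf{\Sigma}_{C,n}=\sum_k w_k\mathbf{A}_{k,m,n}$, bound each Rayleigh quotient using the eigenvalue ordering of Claim~\ref{claim: order of A_k^m}, and use $\sum_k w_k=C(0)=\|C\|_\infty\in[L_C,U_C]$. One step should be said explicitly: replacing $\sum_k w_k$ by $L_C$ or $U_C$ needs $\lambda_{min}(\mathbf{A}_{m,m,n})\geq 0$ and $\lambda_{max}(\mathbf{A}_{m,m,n})\geq 0$, which your nonnegative-definiteness remark already supplies.
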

\begin{proof}
	One can easily prove using Assumption (\ref{assumption: compact and convex}) 
	and Claim \ref{claim: order of A_k^m}.
\end{proof}

We use $l_n(\bm{\theta})$ and $E_0 l_n(\bm{\theta})$ to denote log-likelihood function and its expectation under $P_{\bm{\theta}_0}$,
where explicit dependence on the data is suppressed for notional convenience.
For $\bm{\theta}=(\bm{\beta},C)\in\bm{\Omega}$, 
\begin{align}
l_{n}(\bm{\theta})=&-\frac{n}{2}log(2\pi)-\frac{1}{2}logdet(\mathbf{\Sigma}_{C,n})-\frac{1}{2}(\bm{Y}_n-\mathbf{X}_n\bm{\beta})^T[\mathbf{\Sigma}_{C,n}]^{-1}(\bm{Y}_n-\mathbf{X}_n\bm{\beta}),
\label{eq:log likelihood}
\end{align}
Now factorize the true covariance matrix as  $\mathbf{\Sigma}_{0,n}=\mathbf{\Gamma}_n^T\mathbf{\Gamma}_n$ so that  $\bm{Z}_n\sim N(\bm{0},\mathbf{I}_n)$, where
\begin{align}
\bm{Z}_n=[\mathbf{\Gamma}_n^T]^{-1}(\bm{Y}_n-\mathbf{X}_n\bm{\beta}_0). \label{eq:Z_n equation}
\end{align}
Further (\ref{eq:log likelihood}) can be rewritten as
\begin{align*}
l_{n}(\bm{\theta}) 
=&-\frac{n}{2}log(2\pi)-\frac{1}{2}logdet(\mathbf{\Sigma}_{C,n})-\frac{1}{2}(\bm{\beta}_0-\bm{\beta})^T\mathbf{X}_n^{T} \mathbf{\Sigma}_{C,n}^{-1} \mathbf{X}_n(\bm{\beta}_0-\bm{\beta})
\\ 
&-\frac{1}{2}\bm{Z}^T_n \mathbf{\Gamma}_n \mathbf{\Sigma}_{C,n}^{-1}\mathbf{\Gamma}_n^{T}\bm{Z}_n- (\bm{\beta}_0-\bm{\beta})^T\mathbf{X}_n^{T}\mathbf{\Sigma}_{C,n}^{-1} \bm{\Gamma}_n^T \bm{Z}_n
\\
=&-\frac{n}{2}log(2\pi)-\frac{1}{2}logdet(\mathbf{\Sigma}_{C,n}) -\frac{1}{2}(\bm{\beta}_0-\bm{\beta})^T \mathbf{X}_n^{T} \mathbf{\Sigma}_{C,n}^{-1} \mathbf{X}_n(\bm{\beta}_0-\bm{\beta})
\\
&-\frac{1}{2}\sum_{i=1}^{n}\lambda_{i,n}\chi_i^2- (\bm{\beta}_0-\bm{\beta})^T\mathbf{X}_n^{T}[\mathbf{\Sigma}_{C,n}]^{-1} \bm{\Gamma}_n^T \bm{Z}_n ,
\end{align*}
where $\lambda_{1,n},\cdots, \lambda_{n,n}$ are the eigenvalues of $\mathbf{\Gamma}_n [\mathbf{\Sigma}_{C,n}]^{-1}\mathbf{\Gamma}_n^T$ and $\chi_i^2$ are $ iid$ random variable in $ \chi^2(1)$ distribution. 
Now that $\sum_{i=1}^{n}\lambda_{i,n}=tr(\mathbf{\Gamma}_n \mathbf{\Sigma}_{C,n}^{-1}\mathbf{\Gamma}_n^T)=tr(\mathbf{\Sigma}_{C,n}^{-1}\mathbf{\Sigma}_{0,n})$, 
\begin{align}
l_n(\bm{\theta})=&-\frac{n}{2}log(2\pi)-\frac{1}{2}logdet( \mathbf{\Sigma}_{C,n})-\frac{1}{2}(\bm{\beta}_0-
\bm{\beta})^T \mathbf{X}_n^{T} \mathbf{\Sigma}_{C,n}^{-1} \mathbf{X}_n(\bm{\beta}_0-
\bm{\beta})   \nonumber
\\
&-\frac{1}{2}tr(\mathbf{\Sigma}_{C,n}^{-1}\mathbf{\Sigma}_{0,n})-\frac{1}{2}\sum_{i=1}^{n}\lambda_{i,n}(\chi_i^2-1) -(\bm{\beta}_0-\bm{\beta})^T\mathbf{X}_n^{T}[\mathbf{\Sigma}_{C,n}]^{-1} \mathbf{\Gamma}_n^T \bm{Z}_n.
\label{eq:form of log likelihood}
\end{align}
And it is apparent that 
\begin{align}
E_0l_n(\bm{\theta})
=&-\frac{n}{2}log(2\pi)-\frac{1}{2}logdet(\mathbf{\Sigma}_{C,n})-\frac{1}{2}(\bm{\beta}_0-
\bm{\beta})^T \mathbf{X}_n^{T} \mathbf{\Sigma}_{C,n}^{-1} \mathbf{X}_n(\bm{\beta}_0-
\bm{\beta})  \nonumber
\\
&-\frac{1}{2}tr(\mathbf{\Sigma}_{C,n}^{-1}\mathbf{\Sigma}_{0,n}).
\label{eq: expectation of log likelihood}
\end{align}
For $m\in\mathbb{N}^+$, consider $\bm{\theta}=(\bm{\beta},\sum_{k=1}^{m}w_kA_{k,m})\in\bm{\Theta}_m$, the log-likelihood function can be regarded as a continuous function of $(\bm{\beta},\bm{w})$. 
The first derivative of $l_n$ with respect to $(\bm{\beta},\bm{w})$ is denoted as $(l_{n\bm{\beta}}^T,l_{n\bm{w}}^T)^T$. It is easy to get
\begin{align}
l_{n\bm{\beta}}&=\mathbf{X}_n^T \mathbf{\Sigma}_{C,n}^{-1} (\bm{Y}_n - \mathbf{X}_n\bm{\beta}) \label{eq:derivative of beta}\\
l_{nw_k} &=\frac{1}{2}(\bm{Y}_n-\mathbf{X}_n\bm{\beta})^{T} \mathbf{\Sigma}_{C,n}^{-1}\frac{\partial \mathbf{\Sigma}_{C,n}}{\partial w_k} \mathbf{\Sigma}_{C,n}^{-1} (\bm{Y}_n-\mathbf{X}_n\bm{\beta}) -\frac{1}{2}tr(\mathbf{\Sigma}_{C,n}^{-1}\frac{\partial \mathbf{\Sigma}_{C,n}}{\partial w_k}), \label{eq:derivative of w}
\end{align}
where $\frac{\partial \mathbf{\Sigma}_{C,n}}{\partial w_k}=\mathbf{A}_{k,m,n}$ for $k=1,\cdots,m.$
Plugging (\ref{eq:Z_n equation}) in these two equations above gives us
\begin{align}
l_{n\bm{\beta}}=&\mathbf{X}_n^T [\mathbf{\Sigma}_{C,n}]^{-1} \mathbf{\Gamma}_n^T\bm{Z}_n+\mathbf{X}_n^T \mathbf{\Sigma}_{C,n}^{-1}\mathbf{X}_n(\bm{\beta}_0-\bm{\beta}) \label{eq: first derivative of beta}
\\
l_{nw_k} =&
\frac{1}{2}\bm{Z}_n^T\cdot \mathbf{\Gamma}_n \mathbf{\Sigma}_{C,n}^{-1} \frac{\partial \mathbf{\Sigma}_{C,n}}{\partial w_k}  \mathbf{\Sigma}_{C,n}^{-1} \mathbf{\Gamma}_n^T\cdot \bm{Z}_n+
(\bm{\beta}_0-\bm{\beta})^T\mathbf{X}_n^T  \mathbf{\Sigma}_{C,n}^{-1} \frac{\partial \mathbf{\Sigma}_{C,n}}{\partial w_k} \mathbf{\Sigma}_{C,n}^{-1} \mathbf{\Gamma}_n^T\bm{Z}_n+ \nonumber\\
&
\frac{1}{2}(\bm{\beta}_0-\bm{\beta})^T \mathbf{X}_n^T \mathbf{\Sigma}_{C,n}^{-1} \frac{\partial \mathbf{\Sigma}_{C,n}}{\partial w_k}\mathbf{\Sigma}_{C,n}^{-1} \mathbf{X}_n (\bm{\beta}_0-\bm{\beta})-\frac{1}{2} tr(\mathbf{\Sigma}_{C,n}^{-1}\frac{\partial \mathbf{\Sigma}_{C,n}}{\partial w_k}).
\nonumber
\end{align}
Similar to (\ref{eq:form of log likelihood}), one have
\begin{align}
l_{nw_k} =&\frac{1}{2}\sum_{j=1}^{n}\lambda_{j,k,n} (\chi_j^2-1)+\frac{1}{2}tr(\mathbf{\Sigma}_{C,n}^{-1} \frac{\partial \mathbf{\Sigma}_{C,n}}{\partial w_k} \mathbf{\Sigma}_{C,n}^{-1} \mathbf{\Sigma}_{0,n})
+ (\bm{\beta}_0-\bm{\beta})^T\mathbf{X}_n^T  \mathbf{\Sigma}_{C,n}^{-1} \frac{\partial \mathbf{\Sigma}_{C,n}}{\partial w_k} \mathbf{\Sigma}_{C,n}^{-1} \mathbf{\Gamma}_n^T \bm{Z}_n+ \nonumber
\\
&\frac{1}{2}(\bm{\beta}_0-\bm{\beta})^T \mathbf{X}_n^T \mathbf{\Sigma}_{C,n}^{-1} \frac{\partial \mathbf{\Sigma}_{C,n}}{\partial w_k} \mathbf{\Sigma}_{C,n}^{-1}  \mathbf{X}_n (\bm{\beta}_0-\bm{\beta})-\frac{1}{2} tr(\mathbf{\Sigma}_{C,n}^{-1}\frac{\partial \mathbf{\Sigma}_{C,n}}{\partial w_k}),
\label{eq: first derivative of w}
\end{align}
where $\{\lambda_{j,k,n},j=1,\cdots,n\}$ are the eigenvalues of $\mathbf{\Gamma}_n\mathbf{\Sigma}_{C,n}^{-1}\frac{\partial \mathbf{\Sigma}_{C,n}}{\partial w_k} \mathbf{\Sigma}_{C,n}^{-1} \mathbf{\Gamma}_n^T$ satisfying $\sum_{j=1}^{n} \lambda_{j,k,n}=tr(\mathbf{\Gamma}_n\mathbf{\Sigma}_{C,n}^{-1} \frac{\partial \mathbf{\Sigma}_{C,n}}{\partial w_k} \mathbf{\Sigma}_{C,n}^{-1} \mathbf{\Gamma}_n^T )=tr(\mathbf{\Sigma}_{C,n}^{-1} \frac{\partial \mathbf{\Sigma}_{C,n}}{\partial w_k} \mathbf{\Sigma}_{C,n}^{-1} \mathbf{\Sigma}_{0,n})$
and $\chi^2_j$ are $i.i.d$ random variables in $\chi^2(1)$ distribution.


We start by showing Claim \ref{claim: uniform convergence} that $\frac{1}{n}l_n(\bm{\theta})$ converges uniformly to its expectation among $\bm{\theta}_m\in\bm{\Theta_m}$ for any $m\in\mathbb{N}^+$ and Claim \ref{claim: unique global optimizer} that the expectation has a unique maximizer at $\bm{\theta}_0$. Then we can get $E_0\frac{1}{n}l_n(\hat{\bm{\theta}}_{m,n})-E_0\frac{1}{n}l_n(\bm{\theta}_0)\overset{P_{\bm{\theta}_0}}\longrightarrow 0$ and further prove $\bar{d}(\hat{\bm{\theta}}_{m,n},\bm{\theta}_0)\overset{P_{\bm{\theta}_0}}\longrightarrow 0$.
\begin{claim}[Uniform convergence] \label{claim: uniform convergence}
	Under Assumption (\ref{assumption: compact and convex})(\ref{assumption: X^TX})(\ref{assumption:max and min eigen for true})(\ref{assumption:max and min eigen for A_m,m})
	it holds that
	\begin{align}
	\underset{n\rightarrow\infty}{lim} \underset{\bm{\theta}\in\bm{\Theta}_m}{sup} \frac{1}{n}l_n(\bm{\theta})-E_0 \frac{1}{n}l_n(\bm{\theta})=0, \mbox{ for any }m\in\mathbb{N}^+.
	\label{eq:uniform convergence}
	\end{align}
\end{claim}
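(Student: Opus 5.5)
From (\ref{eq:form of log likelihood}) and (\ref{eq: expectation of log likelihood}), for fixed $m$ and $\bm{\theta}=(\bm{\beta},C)\in\bm{\Theta}_m$ the centered criterion is
\begin{equation*}
\frac{1}{n}l_n(\bm{\theta})-E_0\frac{1}{n}l_n(\bm{\theta}) = -\frac{1}{2n}\bm{Z}_n^T\mathbf{M}_n(C)\bm{Z}_n+\frac{1}{2n}tr(\mathbf{M}_n(C)) - \frac{1}{n}(\bm{\beta}_0-\bm{\beta})^T\mathbf{X}_n^T\mathbf{\Sigma}_{C,n}^{-1}\mathbf{\Gamma}_n^T\bm{Z}_n,
\end{equation*}
where $\mathbf{M}_n(C)=\mathbf{\Gamma}_n\mathbf{\Sigma}_{C,n}^{-1}\mathbf{\Gamma}_n^T$. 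The plan is to show, for each fixed $m$, that this quantity tends to zero in probability uniformly over $\bm{\Theta}_m$. We read the claim, as it will be used later, in the $P_{\bm{\theta}_0}$-probability sense, with absolute value inside the supremum.

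We proceed in the usual two steps: pointwise convergence, then stochastic equicontinuity on the compact set $\bm{\Theta}_m$ (compact by D1 and Corollary \ref{corollary: compact A_m}, via the continuous bijection with the weights $\bm{w}$). Newey's (1991) theorem then upgrades pointwise to uniform convergence.

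\emph{Pointwise convergence.} By Claim \ref{claim: max and min eigen of C and derivative} together with D6, there is $n_0$ such that for $n\geq n_0$ and all $C\in\bm{\Omega}_m$, the eigenvalues of $\mathbf{\Sigma}_{C,n}$ lie in a fixed interval $[c_1,c_2]\subset(0,\infty)$. By D5 the same holds for $\mathbf{\Sigma}_{0,n}$. Hence the eigenvalues $\lambda_{i,n}$ of $\mathbf{M}_n(C)$ are bounded by a constant $K$ uniformly in $C$ and $n$. It follows that
\begin{equation*}
Var\Big(\frac{1}{2n}\sum_i\lambda_{i,n}(\chi_i^2-1)\Big)=\frac{1}{2n^2}\sum_i\lambda_{i,n}^2\leq \frac{K^2}{2n}\to 0 .
\end{equation*}
For the cross term, write $\bm{v}=\mathbf{\Gamma}_n\mathbf{\Sigma}_{C,n}^{-1}\mathbf{X}_n(\bm{\beta}_0-\bm{\beta})$. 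Its variance is $\frac{1}{n^2}\|\bm{v}\|_2^2\leq \frac{1}{n^2}\lambda_{max}(\mathbf{\Sigma}_{0,n})c_1^{-2}\lambda_{max}(\mathbf{X}_n^T\mathbf{X}_n)(2U_{\bm{\beta}})^2=O(1/n)$ by D1 and D4. Chebyshev's inequality then gives pointwise convergence.

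\emph{Stochastic equicontinuity.} We bound the gradient of the centered criterion in $(\bm{\beta},\bm{w})$ by an $O_P(1)$ random variable that does not depend on $\bm{\theta}$. By convexity of $\bm{\Theta}_m$ and the mean value theorem, this yields a Lipschitz bound $|G_n(\bm{\theta}_1)-G_n(\bm{\theta}_2)|\leq B_n\|\cdot\|$ with $B_n=O_P(1)$, which is exactly Newey's condition. We use the expressions (\ref{eq: first derivative of beta}) and (\ref{eq: first derivative of w}) minus their expectations.

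For the $\bm{w}$-derivative, the random part is $\frac{1}{2n}\bm{Z}_n^T\mathbf{N}_{k,n}(C)\bm{Z}_n - \frac{1}{2n}tr(\mathbf{N}_{k,n}(C))$, where $\mathbf{N}_{k,n}(C)=\mathbf{\Gamma}_n\mathbf{\Sigma}_{C,n}^{-1}\mathbf{A}_{k,m,n}\mathbf{\Sigma}_{C,n}^{-1}\mathbf{\Gamma}_n^T$. By Claim \ref{claim: max and min eigen of C and derivative} this matrix has operator norm at most $K'$ uniformly. So the first term is bounded by $\frac{K'}{2n}\|\bm{Z}_n\|^2$ and the second by $K'/2$, giving $O_P(1)$ because $\frac1n\|\bm{Z}_n\|^2\to1$. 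The cross terms are bounded by $\frac{C}{n}\|\mathbf{X}_n\|_{op}\|\bm{Z}_n\|=O_P(1)$ using D4. The $\bm{\beta}$-derivative, $-\frac1n\mathbf{X}_n^T\mathbf{\Sigma}_{C,n}^{-1}\mathbf{\Gamma}_n^T\bm{Z}_n$, is handled the same way.

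The main obstacle is making the eigenvalue bounds hold uniformly over $C\in\bm{\Omega}_m$ for all large $n$, including near the boundary of the simplex (some $w_k=0$). Claim \ref{claim: max and min eigen of C and derivative} handles this, since it only uses $L_C\leq\sum w_k\leq U_C$ and Claim \ref{claim: order of A_k^m}. If needed, we instead take a finite $\delta$-net of $\bm{\Theta}_m$ and apply pointwise convergence on the net plus the Lipschitz bound, which avoids invoking Newey's theorem directly.
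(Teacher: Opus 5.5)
Your proposal is correct and takes essentially the same route as the paper. Both arguments use the same decomposition of $\frac{1}{n}l_n(\bm{\theta})-E_0\frac{1}{n}l_n(\bm{\theta})$, pointwise convergence together with an $O_P(1)$ gradient bound on the convex compact weight space, and Corollary 2.2 of Newey (1991), with eigenvalue control from Claim~\ref{claim: max and min eigen of C and derivative} and D4--D6. The differences are minor: you use Chebyshev instead of Cuzick's strong law for the pointwise step, and you bound the gradient of the centered criterion by $\bm{\theta}$-free operator-norm bounds such as $\frac{K'}{2n}\|\bm{Z}_n\|_2^2$, which makes the uniformity of $B_n$ over $\bm{\Theta}_m$ more explicit than the paper's term-by-term almost-sure argument.
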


\begin{proof}
	For fixed $m\in\mathbb{N}^+$, there is one-to-one correspondence between $\bm{\Omega}_m$ and the vector space $\bm{w}_m\in \{(w_1,\cdots,w_m): L_C \leq \sum_{k=1}^{m}w_k\leq U_C,w_k\geq 0,k=1,\cdots,m\}$. For $\bm{\theta}\in\bm{\Theta}_m$ both $\frac{1}{n}l_n(\bm{\theta})$ and 
	$E_0\frac{1}{n}l_n(\bm{\theta})$ can be regarded as continuous functions of $(\bm{\beta},\bm{w}_m)$.
	Therefore it suffices to verify the three conditions below and apply Corollary 2.2 in \citep{Newey1991} to obtain the uniform convergence in (\ref{eq:uniform convergence}). 
	\begin{enumerate}[label=B\arabic*]
		\item \label{condition: compact} (Compactness) The space $\{(\bm{\beta},\bm{w}_m):(\bm{\beta},\sum_{k=1}^m w_kA_{k,m})\in \bm{\Theta}_m)\}$ is compact.
		\item \label{condition: pointwise} (Pointwise convergence) For each $\bm{\theta}\in \{(\bm{\beta},\bm{w}_m):(\bm{\beta},\sum_{k=1}^m w_kA_{k,m})\in \bm{\Theta}_m)\}$, $\frac{1}{n}l_n(\bm{\theta})-E_0\frac{1}{n}l_n(\bm{\theta})=o_{p}(1)$.
		\item \label{condition: stochastic equicontinuity} (Stochastic equicontinuity) The space $\{(\bm{\beta},\bm{w}_m):(\bm{\beta},\sum_{k=1}^m w_kA_{k,m})\in \bm{\Theta}_m)\}$ is convex. $\frac{1}{n}(l_{n\bm{\beta}}, l_{n\bm{w}})$ is bounded by some random variable $B_n$ satisfying $B_n=O_p(1)$.
	\end{enumerate}
	Under Assumption (\ref{assumption: compact and convex})
	, it is natural that $\{(\bm{\beta},\bm{w}_m):(\bm{\beta},\sum_{k=1}^m w_kA_{k,m})\in \bm{\Theta}_m)\}$ is convex and compact.
	By the equation (\ref{eq:form of log likelihood}) and (\ref{eq: expectation of log likelihood}), one can get
	\begin{align}
	\frac{1}{n}l_n(\bm{\theta})-E_0\frac{1}{n}l_n(\bm{\theta})
	=&-\frac{1}{n} (\bm{\beta}_0-\bm{\beta})^T\mathbf{X}_n^{T}\mathbf{\Sigma}_{C,n}^{-1} \mathbf{\Gamma}_n^T \bm{Z}_n -\frac{1}{2n}\sum_{i=1}^{n}\lambda_{i,n}(\chi_i^2-1). \label{eq: 1/n log likelihood-expectation}
	\end{align}
	We wish to show the right hand side of (\ref{eq: 1/n log likelihood-expectation}) converges to zero to get (\ref{condition: pointwise}). 
	If both $\underset{n}{\sup}\lambda_{max}(\mathbf{\Gamma}_n \mathbf{\Sigma}_{C,n}^{-1}\mathbf{\Gamma}_n^T)$ and 
	$\underset{n}{\sup}\frac{1}{n}||\mathbf{\Gamma}_n \mathbf{\Sigma}_{C,n}^{-1} \bm{X}_n  (\bm{\beta}_0-\bm{\beta})||_2^2$ are bounded, then we apply Theorem 1.1 of \citep{Cuzick1995} to get the results of almost sure convergence to zero. By Claim \ref{claim: max and min eigen of C and derivative} together with Assumption (\ref{assumption: X^TX}) and Assumption (\ref{assumption:max and min eigen for A_m,m})
	, it is true that
	\begin{align*}
	\lambda_{max}(\mathbf{\Gamma}_n\mathbf{\Sigma}_{C,n}^{-1}\mathbf{\Gamma}_n^T) \leq
	\lambda_{max}(\mathbf{\Gamma}_n\cdot\lambda_{max}(\mathbf{\Sigma}_{C,n}^{-1})\mathbf{I}_n\cdot\mathbf{\Gamma}_n) \leq  1/\lambda_{min}(\mathbf{\Sigma}_{C,n})\cdot\lambda_{max}(\mathbf{\Sigma}_{0,n}) <\infty
	\end{align*}
	and 
	\begin{align*}
	&\frac{1}{n}(\bm{\beta_0}-\bm{\beta})^T\mathbf{X}_n^{T}\mathbf{\Sigma}_{C,n}^{-1} \mathbf{\Gamma}_n^T \cdot\mathbf{\Gamma}_n\mathbf{\Sigma}_{C,n}^{-1} \bm{X}_n  (\bm{\beta}_0-\bm{\beta})  \nonumber\\
	\leq &\frac{1}{n}\lambda_{max}(\mathbf{X}_n^{T}\mathbf{\Sigma}_{C,n}^{-1}\mathbf{\Sigma}_{0,n} \mathbf{\Sigma}_{C,n}^{-1}\bm{X}_n)\cdot || \bm{\beta}_0-\bm{\beta} ||_2^2  \nonumber
	\\
	\leq &\frac{1}{n}\lambda_{max}(\mathbf{X}_n^T\mathbf{X}_n) \lambda_{max}(\mathbf{\Sigma}_{0,n})\cdot 1/\lambda^2_{min}(\mathbf{\Sigma}_{C,n})
	\cdot || \bm{\beta}_0-\bm{\beta} ||_2^2  \nonumber\\
	<&\infty . 
	\end{align*}
	Thus we get that
	\begin{align*}
	\frac{1}{n}l_n(\bm{\theta})-E_0\frac{1}{n}l_n(\bm{\theta})\overset{P_{\bm{\theta}_0}}{\longrightarrow} 0   \mbox{   pointwise.}
	\end{align*}
	
	Consider (\ref{condition: stochastic equicontinuity}). According to the equation (\ref{eq: first derivative of beta}),
	\begin{align}
	\frac{1}{n}l_{n\bm{\beta}}&=\frac{1}{n}\mathbf{X}_n^T \mathbf{\Sigma}_{C,n}^{-1} \mathbf{\Gamma}_n^T\bm{Z}_n+
	\frac{1}{n}\mathbf{X}_n^T \mathbf{\Sigma}_{C,n}^{-1}\mathbf{X}_n(\bm{\beta}_0-\bm{\beta}).
	\label{eq:1/n first derivative w.r.t beta}
	\end{align}
	We wish to verify in the right hand side of (\ref{eq:1/n first derivative w.r.t beta}) the first term converges to zero almost surely and the second term is bounded.
	Denote $\mathbf{X}_n^T \mathbf{\Sigma}_{C,n}^{-1}\mathbf{\Gamma}_n^T$ as $\mathbf{M}_n$ and the $r_{th}$ row as $\mathbf{M}_{nr\cdot}$ for $r=1,\cdots,p$ so that 
	\begin{align*}
	\mathbf{M}_n&=
	\begin{pmatrix}
	\mathbf{M}_{n1\cdot}^T\\
	\vdots \\
	\mathbf{M}_{np\cdot}^T
	\end{pmatrix}.
	\end{align*}
	By Theorem 1.1 in \citep{Cuzick1995}, if $\underset{n}{\sup}\frac{1}{n}||\mathbf{M}_{nr\cdot}||_2^2<\infty $, then
	\begin{align*}
	\begin{pmatrix}
	\frac{1}{n}\mathbf{M}_{n1\cdot}^T\mathbf{Z}_n\\
	\vdots \\
	\frac{1}{n}\mathbf{M}_{np\cdot}^T\mathbf{Z}_n
	\end{pmatrix}
	\overset{a.s}{\longrightarrow}
	\begin{pmatrix}
	0\\
	\vdots\\
	0
	\end{pmatrix}.
	\end{align*}

	Note that 
	$
	\underset{n}{\sup}\frac{1}{n}||\mathbf{M}_{nr\cdot}||_2^2=\underset{n}{\sup}\frac{1}{n} (\mathbf{M}_n\mathbf{M}_n^T)_{[r,r]}\leq \lambda_{max}(\frac{1}{n} \mathbf{X}_n^T\mathbf{\Sigma}_{C,n}^{-1}\mathbf{\Sigma}_{0,n}\mathbf{\Sigma}_{C,n}^{-1}\mathbf{X}_n)
	$, where $(\mathbf{M}_n\mathbf{M}_n^T)_{[r,r]}$ is the $(r,r)$ element of matrix $\mathbf{M}_n\mathbf{M}_n^T$. Since
	\begin{align*}
	&\frac{1}{n}\lambda_{max}( \mathbf{X}_n^T\mathbf{\Sigma}_{C,n}^{-1}\mathbf{\Sigma}_{0,n}\mathbf{\Sigma}_{C,n}^{-1}\mathbf{X}_n)
	\leq  \lambda_{max}(\mathbf{\Sigma}_{0,n})\lambda_{max}(\frac{1}{n}\mathbf{X}_n^T \mathbf{X}_n) \cdot1/\lambda^2_{min}(\mathbf{\Sigma}_{C,n})\\
	&||\frac{1}{n}\mathbf{X}_n^T \mathbf{\Sigma}_{C,n}^{-1}\mathbf{X}_n(\bm{\beta}_0-\bm{\beta})||_2^2 \leq \lambda^2_{max}(\frac{1}{n} \mathbf{X}_n^T \mathbf{X}_n)\cdot 1/\lambda^2_{min}(\mathbf{\Sigma}_{C,n}) \cdot|| \bm{\beta}_0-\bm{\beta} ||_2^2, 
	\end{align*}
	then by Claim \ref{claim: max and min eigen of C and derivative} and Assumption (\ref{assumption: X^TX})(\ref{assumption:max and min eigen for true})(\ref{assumption:max and min eigen for A_m,m}), both two terms are bounded. 
	
	Next from the equation (\ref{eq: first derivative of w}), one obtain
	\begin{align}
	\frac{1}{n}l_{nw_k}&=\frac{1}{2n}tr(\mathbf{\Sigma}_{C,n}^{-1} \frac{\partial \mathbf{\Sigma}_{C,n}}{\partial w_k} \mathbf{\Sigma}_{C,n}^{-1} \mathbf{\Sigma}_{0,n})+
	\frac{1}{2n}(\bm{\beta}_0-\bm{\beta})^T \mathbf{X}_n^T \mathbf{\Sigma}_{C,n}^{-1} \frac{\partial \mathbf{\Sigma}_{C,n}}{\partial w_k} \mathbf{\Sigma}_{C,n}^{-1}  \mathbf{X}_n (\bm{\beta}_0-\bm{\beta}) \nonumber\\
	&-\frac{1}{2n} tr(\mathbf{\Sigma}_{C,n}^{-1}\frac{\partial \mathbf{\Sigma}_{C,n}}{\partial w_k})
	+\frac{1}{n} (\bm{\beta}_0-\bm{\beta})^T\mathbf{X}_n^T  \mathbf{\Sigma}_{C,n}^{-1} \frac{\partial \mathbf{\Sigma}_{C,n}}{\partial w_k} \mathbf{\Sigma}_{C,n}^{-1} \mathbf{\Gamma}_n^T \bm{Z}_n + \frac{1}{2n}\sum_{j=1}^{n}\lambda_{j,k,n} (\chi_j^2-1).
	\label{eq:first derivative w.r.t theta}
	\end{align}
	If in the right hand side of (\ref{eq:first derivative w.r.t theta}) the last two terms converge to zero almost surely and the first three terms are bounded, $\frac{1}{n}l_{n\bm{w}_k}$ is bounded.

	For the last two terms, again by Theorem 1.1 in \citep{Cuzick1995}, we wish $\lambda_{max}(\mathbf{\Gamma}_n\mathbf{\Sigma}_{C,n}^{-1}\frac{\partial \mathbf{\Sigma}_{C,n}}{\partial w_k} \mathbf{\Sigma}_{C,n}^{-1} \mathbf{\Gamma}_n^T)$ and $\underset{n}{\sup}\frac{1}{n}||\mathbf{\Gamma}_n      \mathbf{\Sigma}_{C,n}^{-1} \frac{\partial \mathbf{\Sigma}_{C,n}}{\partial w_k}\mathbf{\Sigma}_{C,n}^{-1}\mathbf{X}_n(\bm{\beta}-\bm{\beta}_0)||_2^2$ are bounded, which is true given the reason that by Claim \ref{claim: max and min eigen of C and derivative} and Assumption (\ref{assumption: compact and convex})(\ref{assumption:max and min eigen for true})(\ref{assumption:max and min eigen for A_m,m}) 
	\begin{align*}
	&	\lambda_{max}(\mathbf{\Gamma}_n[\mathbf{\Sigma}_{C,n}]^{-1}\frac{\partial \mathbf{\Sigma}_{C,n}}{\partial w_k} [\mathbf{\Sigma}_{C,n}]^{-1} \mathbf{\Gamma}_n^T) \leq \lambda_{max}(\frac{\partial \mathbf{\Sigma}_{C,n}}{\partial w_k}) \lambda_{max}(\mathbf{\Sigma}_{0,n}) \lambda^{-2}_{min}(\mathbf{\Sigma}_{C,n})<\infty\\
	&	\frac{1}{n} (\bm{\beta}-\bm{\beta}_0)^T\mathbf{X}_n^T  \mathbf{\Sigma}_{C,n}^{-1} \frac{\partial \mathbf{\Sigma}_{C,n}}{\partial w_k} \mathbf{\Sigma}_{C,n}^{-1} \mathbf{\Gamma}_n^T\cdot  \mathbf{\Gamma}_n
	\mathbf{\Sigma}_{C,n}^{-1} \frac{\partial \mathbf{\Sigma}_{C,n}}{\partial w_k} \mathbf{\Sigma}_{C,n}^{-1} \mathbf{X}_n
	(\bm{\beta}-\bm{\beta}_0) \leq \lambda_{max}(\mathbf{\Sigma}_{0,n})\times\\ &\hspace{5cm}\lambda^{-4}_{min}(\mathbf{\Sigma}_{C,n}) 
	\lambda^2_{max}(\frac{\partial \mathbf{\Sigma}_{C,n}}{\partial w_k}) \lambda_{max}(\frac{1}{n}\mathbf{X}_n^T\mathbf{X}_n)||  \bm{\beta}_0-\bm{\beta}||_2^2 <\infty.
	\end{align*}

	For the first three terms, one can derive
	\begin{align*}
	&\frac{1}{n}tr(\mathbf{\Sigma}_{C,n}^{-1} \frac{\partial \mathbf{\Sigma}_{C,n}}{\partial w_k} \mathbf{\Sigma}_{C,n}^{-1} \mathbf{\Sigma}_{0,n}) 
	\leq \frac{1}{n}tr(\frac{\partial \mathbf{\Sigma}_{C,n}}{\partial w_k} ) \lambda_{max}( \mathbf{\Sigma}_{C,n}^{-1} \mathbf{\Sigma}_{0,n} \mathbf{\Sigma}_{C,n}^{-1}) 
	\leq \frac{1}{n}tr(\frac{\partial \mathbf{\Sigma}_{C,n}}{\partial w_k} ) \lambda^{-2}_{min}(\mathbf{\Sigma}_{C,n})\lambda_{max}(\mathbf{\Sigma}_{0,n}) 
	\\
	&\frac{1}{n}(\bm{\beta}-\bm{\beta}_0)^T \mathbf{X}_n^T \mathbf{\Sigma}_{C,n}^{-1} \frac{\partial \mathbf{\Sigma}_{C,n}}{\partial w_k} \mathbf{\Sigma}_{C,n}^{-1}  \mathbf{X}_n (\bm{\beta}-\bm{\beta}_0) 
	\leq \lambda_{max}(\frac{\partial \mathbf{\Sigma}_{C,n}}{\partial w_k})\lambda^{-2}_{min}(\mathbf{\Sigma}_{C,n})\lambda_{max}(\frac{1}{n}\mathbf{X}_n^T\mathbf{X}_n)||\bm{\beta}_0 -\bm{\beta}  ||_2^2 
	\\
	&\frac{1}{n}tr(\mathbf{\Sigma}_{C,n}^{-1}\frac{\partial \mathbf{\Sigma}_{C,n}}{\partial w_k}) 
	\leq \lambda^{-1}_{min}(\mathbf{\Sigma}_{C,n})\frac{1}{n}tr(\frac{\partial \mathbf{\Sigma}_{C,n}}{\partial w_k}),
	\end{align*}
	which are all bounded given Claim \ref{claim: max and min eigen of C and derivative} and Assumption (\ref{assumption: compact and convex})(\ref{assumption: X^TX})(\ref{assumption:max and min eigen for true})(\ref{assumption:max and min eigen for A_m,m}) 
	together with $\frac{1}{n}tr(\frac{\partial \mathbf{\Sigma}_{C,n}}{\partial w_k} )=1$. 
	Thus $\frac{1}{n}l_{n\bm{\beta}}$ and $\frac{1}{n}l_{n\bm{w}}$ are dominated by some random variable $B_n=O_p(1)$, $E[B_n]<\infty$.  Henceforth, we can apply Corollary 2.2 in (\citep{Newey1991}) to get that $\frac{1}{n}l_n$ is stochastically equicontinous and 
	\begin{align*}
	\underset{\bm{\theta}\in\bm{\Theta}_m}{sup}\frac{1}{n}l_n(\bm{\theta})-E_0\frac{1}{n}l_n(\bm{\theta})=o_p(1). 
	\end{align*}
\end{proof}

\begin{claim}\label{claim: unique global optimizer}Under Assumption (\ref{assumption:identification of C and beta})
	, $E_0\frac{1}{n}l_n(\bm{\theta})$ has a unique global maximum among $\bm{\Theta}$ at $\bm{\theta_0}$.
\end{claim}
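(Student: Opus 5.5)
The plan is to compare $E_0\frac{1}{n}l_n(\bm{\theta})$ with $E_0\frac{1}{n}l_n(\bm{\theta}_0)$ through the explicit expression (\ref{eq: expectation of log likelihood}) and to read the gap as a Kullback--Leibler divergence between two Gaussian laws. At $\bm{\theta}_0$ the quadratic term in $\bm{\beta}$ vanishes and $tr(\mathbf{\Sigma}_{0,n}^{-1}\mathbf{\Sigma}_{0,n})=n$. For any $\bm{\theta}=(\bm{\beta},C)\in\bm{\Theta}$ this gives
\begin{align*}
E_0 l_n(\bm{\theta}_0)-E_0 l_n(\bm{\theta})
=&\frac{1}{2}\Big[tr(\mathbf{\Sigma}_{C,n}^{-1}\mathbf{\Sigma}_{0,n})-logdet(\mathbf{\Sigma}_{C,n}^{-1}\mathbf{\Sigma}_{0,n})-n\Big]\\
&+\frac{1}{2}(\bm{\beta}_0-\bm{\beta})^T\mathbf{X}_n^T\mathbf{\Sigma}_{C,n}^{-1}\mathbf{X}_n(\bm{\beta}_0-\bm{\beta}).
\end{align*}
The right-hand side is exactly $KL\big(N_n(\mathbf{X}_n\bm{\beta}_0,\mathbf{\Sigma}_{0,n})\,\|\,N_n(\mathbf{X}_n\bm{\beta},\mathbf{\Sigma}_{C,n})\big)$.

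Next, show that each bracket is nonnegative. Let $\mu_1,\dots,\mu_n>0$ be the eigenvalues of $\mathbf{\Sigma}_{C,n}^{-1/2}\mathbf{\Sigma}_{0,n}\mathbf{\Sigma}_{C,n}^{-1/2}$. These exist because $\mathbf{\Sigma}_{C,n}$ is positive definite, by Claim \ref{claim: max and min eigen of C and derivative} and Assumption (\ref{assumption:max and min eigen for A_m,m}) for sieve elements, or directly from $C\in\bm{\Omega}$ having a continuous spectral density. The first bracket then equals $\sum_{i=1}^n(\mu_i-\log\mu_i-1)$. Since $x-\log x-1\geq 0$ with equality only at $x=1$, this term is nonnegative and vanishes iff all $\mu_i=1$, that is, iff $\mathbf{\Sigma}_{C,n}=\mathbf{\Sigma}_{0,n}$. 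The second bracket is a quadratic form in the positive definite $\mathbf{\Sigma}_{C,n}^{-1}$. It is nonnegative and vanishes iff $\mathbf{X}_n\bm{\beta}=\mathbf{X}_n\bm{\beta}_0$.

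Uniqueness then follows directly. If $\bm{\theta}\neq\bm{\theta}_0$, Assumption (\ref{assumption:identification of C and beta}) gives $\mathbf{\Sigma}_{C,n}\neq\mathbf{\Sigma}_{0,n}$ or $\mathbf{X}_n\bm{\beta}\neq\mathbf{X}_n\bm{\beta}_0$. Either way one of the two brackets is strictly positive. Hence $E_0\frac{1}{n}l_n(\bm{\theta})<E_0\frac{1}{n}l_n(\bm{\theta}_0)$, and $\bm{\theta}_0$ is the unique global maximizer over $\bm{\Theta}$.

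The only delicate point is invertibility of $\mathbf{\Sigma}_{C,n}$ for an arbitrary $C\in\bm{\Omega}$, not just for sieve elements, so that $l_n$ is well defined. I would settle it by noting that a nonzero continuous spectral density makes the quadratic form $\sum a_ia_jC(\|\bm{s}_i-\bm{s}_j\|_2)$ an integral of $|\sum a_ie^{i\bm{r}^T\bm{s}_i}|^2$ against a density. This integral is strictly positive for distinct locations. A remark is also worth making: this argument gives strict inequality for each fixed $n$ but no uniform separation margin in $n$. The consistency proof will therefore rely on Assumption (\ref{assumption: covariance to parameter}) for the quantitative gap, with Claim \ref{claim: unique global optimizer} only supplying identification.
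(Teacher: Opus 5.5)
Your proposal is correct and follows the paper's proof essentially verbatim. The paper writes the gap as $\frac{1}{2n}\sum_i(\log\lambda_i+\lambda_i^{-1}-1)$ plus the same quadratic form in $\bm{\beta}_0-\bm{\beta}$, with $\lambda_i$ the eigenvalues of $\mathbf{\Sigma}_{0,n}^{-1/2}\mathbf{\Sigma}_{C,n}\mathbf{\Sigma}_{0,n}^{-1/2}$; these are the reciprocals of your $\mu_i$, so your function $x-\log x-1$ is the same one. The paper then invokes D3 exactly as you do. Your separate argument that $\mathbf{\Sigma}_{C,n}$ is positive definite for a non-sieve $C\in\bm{\Omega}$ is a useful addition, since the paper only cites Claim 2, which covers $C\in\bm{\Omega}_m$.
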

\begin{proof}
	For any $\bm{\theta}=(\bm{\beta},C)\in \bm{\Theta}$,
	\begin{align}
	&E_0\frac{1}{n}l_n(\bm{\theta}_0)-E_0\frac{1}{n}l_n(\bm{\theta})  \nonumber\\
	=&\frac{1}{2n}logdet(\mathbf{\Sigma}_{C,n})-\frac{1}{2n} logdet(\mathbf{\Sigma}_{0,n})+\frac{1}{2n}tr(\mathbf{\Sigma}_{C,n}^{-1}\mathbf{\Sigma}_{0,n})-\frac{1}{2n}tr(\mathbf{\Sigma}_{0,n}^{-1}\mathbf{\Sigma}_{0,n})  \nonumber \\
	&+\frac{1}{2n}(\bm{\beta}_0-\bm{\beta})^T \mathbf{X}_n^{T} \mathbf{\Sigma}_{C,n}^{-1}\mathbf{X}_n(\bm{\beta}_0-\bm{\beta})
	\nonumber \\
	=&\frac{1}{2n} logdet(\mathbf{\Sigma}_{0,n}^{-\frac{1}{2}}\mathbf{\Sigma}_{C,n} \mathbf{\Sigma}_{0,n}^{-\frac{1}{2}}) + \frac{1}{2n}tr([\mathbf{\Sigma}_{0,n}^{-\frac{1}{2}}\mathbf{\Sigma}_{C,n} \mathbf{\Sigma}_{0,n}^{-\frac{1}{2}}]^{-1}) - \frac{1}{2n}\cdot n  \nonumber \\
	&+\frac{1}{2n}(\bm{\beta}_0-\bm{\beta})^T \mathbf{X}_n^{T} \mathbf{\Sigma}_{C,n}^{-1}\mathbf{X}_n(\bm{\beta}_0-\bm{\beta})
	\nonumber \\
	=&\frac{1}{2n}\sum_{i=1}^{n}(log\lambda_{i}+\frac{1}{\lambda_{i}}-1)+\frac{1}{2n}(\bm{\beta}_0-\bm{\beta})^T \mathbf{X}_n^{T} \mathbf{\Sigma}_{C,n}^{-1}\mathbf{X}_n(\bm{\beta}_0-\bm{\beta}), 
	\label{eq:Expectation minus}
	\end{align}
	where $\{\lambda_{i},i=1,\cdots,n\}$ are the eigenvalues of $\mathbf{\Sigma}_{0,n}^{-\frac{1}{2}}\mathbf{\Sigma}_{C,n} \mathbf{\Sigma}_{0,n}^{-\frac{1}{2}}$ of which the positive definiteness is ensured by Claim \ref{claim: max and min eigen of C and derivative} and
	Assumption (\ref{assumption:max and min eigen for true})(\ref{assumption:max and min eigen for A_m,m}).
	$f(x)=logx+\frac{1}{x}-1\geq 0$ for any $x>0$ and the equality holds if and only if $x=1$.
	Accordingly, $E_0\frac{1}{n}l_n(\bm{\theta}_0)-E_0\frac{1}{n}l_n(\bm{\theta})\geq 0$ and the equality holds if and only if $\mathbf{\Sigma}_{0,n}=\mathbf{\Sigma}_{C,n}$ and $\mathbf{ X}_n\bm{\beta}=\mathbf{X}_n\bm{\beta}_0$ and hence by Assumption (\ref{assumption:identification of C and beta}), 
	$\bm{\theta}_0$ is the unique global optimizer.
\end{proof}

Noticing that the Taylor expansion of $f(x)$ at $x=1$ gives
\begin{align}
f(x)=\frac{1}{2}(x-1)^2+o((x-1)^3), \label{eq:Taylor expansion of f}
\end{align}
one can always find a scalar $M_1>0$ such that for $x$ near $1$
\begin{align*}
f(x)\leq M_1(x-1)^2.
\end{align*}
Thus following from the equation (\ref{eq:Expectation minus}) for any $C\in\Omega$ and all $n\in \mathbb{N}$ one obtain
\begin{align*}
&E_0\frac{1}{n}L_n(\bm{\beta}_0,C_0) -E_0\frac{1}{n}L_n(\bm{\beta}_0,C)\\
\leq & \frac{M_1}{2n}\sum_{i=1}^n (\lambda_{i}(\mathbf{\Sigma}_{0,n}^{-\frac{1}{2}}\mathbf{\Sigma}_{C,n}\mathbf{ \Sigma}_0^{-\frac{1}{2}})-1)^2
\\
= &\frac{M_1}{2n}\sum_{i=1}^{n}tr((\mathbf{\Sigma}_{0,n}^{-\frac{1}{2}}\mathbf{\Sigma}_{C,n}\mathbf{ \Sigma}_0^{-\frac{1}{2}}-\mathbf{ I}_n)^2) 
\\
= &
\frac{M_1}{2n}tr(\mathbf{\Sigma}_{0,n}^{-\frac{1}{2}}(\mathbf{\Sigma}_{0,n}-\mathbf{\Sigma}_{C,n})\mathbf{\Sigma}_{0,n}^{-1}(\mathbf{\Sigma}_{0,n}-\mathbf{\Sigma}_{C,n})\mathbf{\Sigma}_{0,n}^{-\frac{1}{2}})
\\
\leq& \frac{M_1}{2}\lambda^{-2}_{min} (\mathbf{\Sigma}_{0,n})\cdot\frac{1}{n}
tr((\mathbf{\Sigma}_{0,n}-\mathbf{\Sigma}_{C,n})^2)\\
\leq&  \frac{M_1}{2}\lambda^{-2}_{min} (\mathbf{\Sigma}_{0,n})\cdot
\underset{1\leq i\leq n}{sup}\sum_{j=1}^n (C_0(|| \bm{s}_i-\bm{s}_j ||_2)-C(|| \bm{s}_i-\bm{s}_j ||_2)^2,
\end{align*}
where $\{\lambda_{i}(\mathbf{\Sigma}_{0,n}^{-\frac{1}{2}}\mathbf{\Sigma}_{C,n}\mathbf{ \Sigma}_0^{-\frac{1}{2}}),i=1\cdots,n\}$ are the eigenvalues of $\mathbf{\Sigma}_{0,n}^{-\frac{1}{2}}\mathbf{\Sigma}_{C,n}\mathbf{ \Sigma}_0^{-\frac{1}{2}}$.
By Approximation Theorem, one can find a sequence of functions $\{\pi_{m}(C_0):\pi_{m}(C_0)\in\Omega_m\}$ satisfying $||C_0-\pi_{m}(C_{0})||_\infty\rightarrow 0$ which by Assumption (\ref{assumption: parameter to covariance}) 
leads to,
\begin{align}\label{eq:sieve convergence}
\underset{n,m\rightarrow\infty}{\lim}E_0\frac{1}{n}L_n(\bm{\beta}_0,C_0) -E_0\frac{1}{n}L_n(\bm{\beta}_0,\pi_m({C_0}))= 0.
\end{align}

We can obtain the inequality regarding the sieve MLE $(\hat{\bm{\beta}}_{m,n},\hat{C}_{m,n})$ as
\begin{align*}
&\frac{1}{n}l_n(\bm{\beta}_0,\pi_m(C_0)) -E_0\frac{1}{n}l_n(\bm{\beta}_0,\pi_m(C_0)) + 
E_0\frac{1}{n}l_n(\bm{\beta}_0,\pi_m(C_0))-E_0\frac{1}{n}l_n(\bm{\beta}_0,C_0) 
\\
\leq & \frac{1}{n}l_n(\hat{\bm{\beta}}_{m,n},\hat{C}_{m,n}) - E_0\frac{1}{n}l_n(\bm{\beta}_0,C_0)
\\
\leq & \frac{1}{n}l_n(\hat{\bm{\beta}}_{m,n},\hat{C}_{m,n}) - E_0\frac{1}{n}l_n(\hat{\bm{\beta}}_{m,n},\hat{C}_{m,n}).
\end{align*}
The first and second inequality hold due to the fact that $(\hat{\bm{\beta}}_{m,n},\hat{C}_{m,n})$ is the maximizer of $\frac{1}{n}l_n(\bm{\theta})$ among $\bm{\Theta}_m$ and $(\bm{\beta}_0,C_0)$ is the maximizer of $E_0\frac{1}{n}l_n(\bm{\theta})$ among $\bm{\Theta}$. For any $m$,
$\frac{1}{n}l_n(\bm{\beta}_0,\pi_m(C_0)) -E_0\frac{1}{n}l_n(\bm{\beta}_0,\pi_m(C_0))\overset{P_{\bm{\theta}_0}}{\longrightarrow} 0$ and $\frac{1}{n}l_n(\hat{\bm{\beta}}_{m,n},\hat{C}_{m,n}) - E_0\frac{1}{n}l_n(\hat{\bm{\beta}}_{m,n},\hat{C}_{m,n})\overset{P_{\bm{\theta}_0}}{\longrightarrow}0$ are results of (\ref{eq:uniform convergence}) by Claim \ref{claim: uniform convergence}. By (\ref{eq:sieve convergence}) the sieve space can be cleverly chosen so that $E_0\frac{1}{n}l_n(\bm{\beta}_0,\pi_{m}(C_0))-E_0\frac{1}{n}l_n(\bm{\beta}_0,C_0) \rightarrow 0$. Hence
\begin{align*}
\frac{1}{n}l_n(\hat{\bm{\beta}}_{m,n},\hat{C}_{m,n}) - E_0\frac{1}{n}l_n(\bm{\beta}_0,C_0)\overset{P_{\bm{\theta}_0}}\longrightarrow 0.
\end{align*}
Moreover, because
\begin{align*}
&   |E_0\frac{1}{n}l_n(\hat{\bm{\beta}}_{m,n},\hat{C}_{m,n}) - 
E_0\frac{1}{n}l_n(\bm{\beta}_{0},C_0)| \\
\leq& 	|E_0\frac{1}{n}l_n(\hat{\bm{\beta}}_{m,n},\hat{C}_{m,n}) - \frac{1}{n}l_n(\hat{\bm{\beta}}_{m,n},\hat{C}_{m,n})| + |\frac{1}{n}l_n(\hat{\bm{\beta}}_{m,n},\hat{C}_{m,n})-
E_0\frac{1}{n}l_n(\bm{\beta}_{0},C_0)|,
\end{align*}
one can conclude
\begin{align}
E_0\frac{1}{n}l_n(\hat{\bm{\beta}}_{m,n},\hat{C}_{m,n}) - 
E_0\frac{1}{n}l_n(\bm{\beta}_{0},C_0)\overset{P_{\bm{\theta}_0}}\longrightarrow 0.\label{eq: converge in P}
\end{align}


The Taylor expansion in (\ref{eq:Taylor expansion of f}) indicates there exists a scalar $M_2>0$ such that
\begin{align*}
f(x)\geq M_2(x-1)^2.
\end{align*}
for $x$ near 1. Consider a sequence $\{\bm{\theta}_n=(\bm{\beta}_n,C_n),C_n\in\bm{\Omega}_{m(n)}\}_{n=1}^\infty \subset \bm{\Theta}$ satisfying $\bar{d}(\bm{\theta}_n,\bm{\theta}_0)>\epsilon$ for some $\epsilon>0$. From the equation (\ref{eq:Expectation minus}) we have
\begin{align*}
&E_0\frac{1}{n}l_n(\bm{\theta}_0) -E_0\frac{1}{n}l_n(\bm{\theta}_n)
\\
\geq &\frac{1}{2n}(\bm{\beta}_0-\bm{\beta})^T \mathbf{X}_n^{T}\mathbf{\Sigma}_{C_n,n}^{-1} \mathbf{X}_n(\bm{\beta}_0-\bm{\beta})+
\frac{M_2}{2n} tr((\mathbf{I}_n-\mathbf{\Sigma}_{0,n}^{-\frac{1}{2}}\mathbf{\Sigma}_{C_n,n}\mathbf{\Sigma}_{0,n}^{-\frac{1}{2}})^2)
\\
\geq & \frac{1}{2} ||\bm{\beta}_0-\bm{\beta} ||_2^2 \lambda_{min}(\frac{1}{n}\mathbf{X}_n\mathbf{X}_n)\lambda^{-1}_{max}(\mathbf{\Sigma}_{C_n,n})
+\frac{M_2}{2}\lambda_{min}(\mathbf{\Sigma}_{0,n}^{-2})\cdot \frac{1}{n}tr((\mathbf{\Sigma}_{0,n}-\mathbf{\Sigma}_{C_n,n})^2)
\\
\geq & \frac{1}{2} ||\bm{\beta}_0-\bm{\beta} ||_2^2 \lambda_{min}(\frac{1}{n}\mathbf{X}_n\mathbf{X}_n)|| C_n||_\infty^{-1}\lambda^{-1}_{max}(\mathbf{A}_{m(n),m(n),n})
+\frac{M_2}{2}\lambda_{min}(\mathbf{\Sigma}_{0,n}^{-2})\cdot \frac{1}{n}tr((\mathbf{\Sigma}_{0,n}-\mathbf{\Sigma}_{C_n,n})^2),
\end{align*}
where the last inequality holds due to Claim \ref{claim: max and min eigen of C and derivative}.
Given $\bar{d}(\bm{\theta},\bm{\theta}_0)>\epsilon$,  it is true that $||\bm{\beta}_0-\bm{\beta} ||_2\geq \frac{\epsilon}{2}$ or $|| C_0-C_{n} ||_\infty \geq \frac{\epsilon}{2}$ and furthermore,
\begin{align*}
E_0\frac{1}{n}l_n(\bm{\theta}_0) -E_0\frac{1}{n}l_n(\bm{\theta}_n) \geq min( \frac{\epsilon^2}{4} M_4, \frac{M_2}{2} \cdot M_3)>0,
\end{align*}
where by Assumption (\ref{assumption: compact and convex})(\ref{assumption: X^TX})(\ref{assumption: limsup max eigen final})(\ref{assumption: covariance to parameter}), 
the scalars $M_3$ and $M_4$ take the form
\begin{align*}
M_3&=\lambda_{min}(\mathbf{\Sigma}_{0,n}^{-2})\cdot 	\underset{n\rightarrow\infty}{\lim\inf} \underset{
	\begin{subarray}{c}
	C\in\bm{\Omega}\\
	|| C-C_0 ||_\infty\geq \epsilon/2
	\end{subarray}}
{\inf}\frac{1}{n}\sum_{i,j=1}^n(C(|| \bm{s}_i-\bm{s}_j ||_2)-C_0(||\bm{s}_i-\bm{s}_j ||_2))^2>0\\
M_4&=\frac{1}{2}\underset{n}{\inf}\;\lambda_{min}(\frac{1}{n}\mathbf{X}_n\mathbf{X}_n)\cdot U_C^{-1}\cdot[\underset{n\rightarrow\infty}{\lim\sup}\;\lambda_{max}(\mathbf{A}_{m(n),m(n),n})]^{-1}>0. 
\end{align*}
Accordingly, $\sup_{\{\bar{d}(\bm{\theta},\bm{\theta}_0)\geq \epsilon,\bm{\theta}\in \Omega_m,m\in\mathbb{N}^+\}}E_0\frac{1}{n}l_n(\bm{\theta}_0)-E_0\frac{1}{n}l_n(\bm{\theta})>0$. Finally from (\ref{eq: converge in P}), we see that $\bar{d}((\hat{\beta}_{m,n},\hat{C}_{m,n}),(\bm{\beta}_0,C_0))\leq \epsilon$ with probability approaching 1.

\newpage
\section{Tables}
\subsection{Tables in the main text}

\begin{table}[H]
	\centering
	\caption{Simulation results of nonparametric methods.}
	\label{table:simulation of average for three NP}
	
	\resizebox{\textwidth}{!}{%
		\begin{tabular}{ccccccc}
			\toprule
			& & \multicolumn{5}{c}{Truth}\\
			\cline{3-7}
			& & Setting 1 & Setting 2 & Setting 3 & Setting 4 & Setting 5 \\
			& & Matern & Cauchy & Gaussian & GenCauchy & LinearMatern \\
			\midrule
			& Bias($C_0(0)$)  & 0.33 (1.98)&-0.39 (2.18)&1.56 (3.98)&-1.44 (3.04)&0.29 (2.17) \\
			Our NP & $||\hat{R}_0-R_0||_2$  & 1.3 (0.96)&1.74 (0.8)&0.97 (1.58)&3.22 (3.15)&1.23 (1.13) \\
			method & $||\hat{R}_0-R_0||_{\infty}$ & 2.32 (1.23)&2.56 (0.89)&1.5 (2.29)&6.16 (3.37)&2.02 (1.54) \\
			& $||\hat{C}_0-C_0||_2$ & 1.57 (1.01)&1.96 (0.86)&1.83 (3.16)&3.66 (3.25)&1.6 (1.17) \\
			& $||\hat{C}_0-C_0||_{\infty}$ & 3.13 (1.47)&3.16 (1.22)&2.66 (3.87)&6.66 (3.68)&3 (1.57) \\
			& $||\hat{\gamma}_0-\gamma_0||_{2}$ & 1.64 (0.92)&1.68 (1.01)&0.67 (0.57)&2.3 (2.41)&1.71 (1.03) \\
			& $||\hat{\gamma}_0-\gamma_0||_{\infty}$  & 2.57 (1.16)&2.54 (1.22)&1.12 (1.18)&5.37 (2.84)&2.52 (1.35) \\
			\rule{0pt}{5ex}
			& Bias($C_0(0)$)  & 2.91 (2.95)&-2.45 (2.51)&0.47 (3.48)&-9.94 (2.96)&2.82 (2.93) \\
			Huang & $||\hat{R}_0-R_0||_2$   & 4.28 (1.32)&2.98 (0.78)&2.39 (1.47)&8.62 (2.24)&3.98 (1.3) \\
			& $||\hat{R}_0-R_0||_{\infty}$  & 6.74 (1.63)&7.67 (2.26)&3.75 (2.07)&12.09 (2.05)&6.25 (1.55) \\
			& $||\hat{C}_0-C_0||_2$  & 4.56 (1.58)&3.44 (1.07)&3.11 (1.78)&10.29 (2.46)&4.29 (1.61) \\
			& $||\hat{C}_0-C_0||_{\infty}$ & 7.05 (1.96)&8.84 (2.96)&4.55 (2.17)&15.65 (2.94)&6.62 (1.99) \\
			& $||\hat{\gamma}_0-\gamma_0||_{2}$ & 3.28 (0.82)&2.29 (0.73)&1.95 (1.05)&2.13 (0.61)&3.12 (0.83) \\
			& $||\hat{\gamma}_0-\gamma_0||_{\infty}$ & 7.27 (1.96)&6.28 (1.5)&3.19 (1.6)&6.49 (1.34)&6.42 (1.44) \\
			\rule{0pt}{5ex}
			& Bias($C_0(0)$)   & 0 (2) & -0.02 (2.22) & 4.45 (2.84) & -0.28 (1.98) & 0.07 (2.21) \\
			modified- & $||\hat{R}_0-R_0||_2$   & 1.45 (0.36) & 2.08 (0.44) & 6.49 (0.61) & 8.61 (1.3) & 1.95 (0.33) \\
			Choi & $||\hat{R}_0-R_0||_{\infty}$  & 2.95 (1) & 3.48 (1.32) & 12.4 (1.82) & 25.68 (0.63) & 3.84 (1.1) \\
			& $||\hat{C}_0-C_0||_2$  & 1.69 (0.5) & 2.26 (0.62) & 5.78 (0.65) & 8.56 (1.84) & 2.2 (0.48) \\
			& $||\hat{C}_0-C_0||_{\infty}$ & 3.59 (1.35) & 3.99 (1.68) & 10.58 (2.15) & 25.44 (2.16) & 4.67 (1.59) \\
			& $||\hat{\gamma}_0-\gamma_0||_{2}$  & 1.88 (0.73) & 2.26 (0.97) & 7.37 (1.26) & 8.68 (1) & 2.37 (0.76) \\
			& $||\hat{\gamma}_0-\gamma_0||_{\infty}$ & 3.23 (0.97) & 3.79 (1.37) & 14.65 (2.2) & 25.72 (0.49) & 4.12 (1.11) \\
			\rule{0pt}{5ex}
			& Bias($C_0(0)$) & 0.07 (2.02) & 0.02 (2.21)&1.04 (2.58) & -0.08 (2.73)&0.34 (2.3) \\
			original Choi & $||\hat{R}_0-R_0||_2$  & 4.23 (5.11)&3.68 (2.84)&21.42 (8.19)&7.49 (1.13)&8.63 (8.43) \\
			method & $||\hat{R}_0-R_0||_{\infty}$ & 9.09 (7.87)&8.48 (7.25)&29.54 (10.79)&12 (3.41)&14.1 (11.92) \\
			& $||\hat{C}_0-C_0||_2$  & 4.35 (5.16)&3.83 (2.98)&21.95 (8.53)&7.64 (1.35)&8.87 (8.73) \\
			& $||\hat{C}_0-C_0||_{\infty}$ & 9.32 (8.03)&8.7 (7.62)&30.08 (11.19)&12.46 (3.95)&14.46 (12.44) \\
			& $||\hat{\gamma}_0-\gamma_0||_{2}$  & 4.42 (5.06)&3.85 (2.82)&20.95 (7.87)&7.43 (1.67)&8.62 (8.03) \\
			& $||\hat{\gamma}_0-\gamma_0||_{\infty}$ & 9.13 (7.78)&8.55 (7.06)&29.04 (10.41)&11.75 (3.37)&14.03 (11.62) \\
			\bottomrule
		\end{tabular}%
	}
	
	\vspace{0.5em}
	\begin{minipage}{0.95\textwidth}
		\footnotesize
		Note: The entries in the table show the means and standard deviations, in parentheses, of scaled 2-norm and inf-norm errors of correlation, covariance and semivariance estimates as well as bias of $\hat{C}_0(0)$ based on 100 MC simulations. Columns correspond to true models and rows correspond to estimation methods. The entries are in units of $10^{-2}$.
	\end{minipage}
\end{table}

\begin{table}[H]
	\centering
	\caption{Relative loss of efficiency of nonparametric methods vs best parametric fits.}
	\label{table:simulation of relative loss efficiency of three nonparametric methods vs best parametric fit}
	
	\resizebox{\textwidth}{!}{%
		\begin{tabular}{lcccccc}
			\toprule
			& & \multicolumn{5}{c}{Truth}\\
			\cline{3-7}
			& & Setting 1 & Setting 2 & Setting 3 & Setting 4 & Setting 5 \\
			& & Matern & Cauchy & Gaussian & GenCauchy & LinearMatern \\
			\midrule
			Our NP & Bias($C_0(0)$) & 13.98 & 19.49 & 22.24 & 37.33 & 2.95 \\
			method & $||\hat{R}_0-R_0||_{2}$ & 2.53 & 4.82 & 4.49 & 5.55 & -0.14 \\
			& $||\hat{R}_0-R_0||_{\infty}$ & 2.3 & 2.65 & 3.87 & 5.24 & -0.35 \\
			& $||\hat{C}_0-C_0||_{2}$ & 1.03 & 2.23 & 1.13 & 3.23 & 0.02 \\
			& $||\hat{C}_0-C_0||_{\infty}$ & 0.81 & 0.86 & 0.68 & 2.24 & -0.19 \\
			& $||\hat{\gamma}_0-\gamma_0||_{2}$ & 0.32 & 0.49 & 0.19 & 1.35 & -0.11 \\
			& $||\hat{\gamma}_0-\gamma_0||_{\infty}$ & 0.59 & 0.8 & 0.4 & 2.98 & -0.24 \\
			\rule{0pt}{4.5ex}
			Huang & Bias($C_0(0)$) & 131.06 & 127.71 & 5.99 & 264 & 37.45 \\
			& $||\hat{R}_0-R_0||_{2}$ & 10.62 & 8.97 & 12.53 & 16.52 & 1.79 \\
			& $||\hat{R}_0-R_0||_{\infty}$ & 8.57 & 9.94 & 11.18 & 11.25 & 1.02 \\
			& $||\hat{C}_0-C_0||_{2}$ & 4.9 & 4.68 & 2.62 & 10.87 & 1.73 \\
			& $||\hat{C}_0-C_0||_{\infty}$ & 3.08 & 4.21 & 1.87 & 6.62 & 0.79 \\
			& $||\hat{\gamma}_0-\gamma_0||_{2}$ & 1.64 & 1.03 & 2.5 & 1.17 & 0.63 \\
			& $||\hat{\gamma}_0-\gamma_0||_{\infty}$ & 3.48 & 3.45 & 3.01 & 3.81 & 0.92 \\
			\rule{0pt}{4.5ex}
			modified- & Bias($C_0(0)$) & -0.85 & -0.19 & 65.13 & 6.36 & -0.11 \\
			Choi & $||\hat{R}_0-R_0||_{2}$ & 2.95 & 5.96 & 35.67 & 16.5 & 0.37 \\
			& $||\hat{R}_0-R_0||_{\infty}$ & 3.19 & 3.98 & 39.27 & 25.01 & 0.24 \\
			& $||\hat{C}_0-C_0||_{2}$ & 1.18 & 2.73 & 5.73 & 8.88 & 0.4 \\
			& $||\hat{C}_0-C_0||_{\infty}$ & 1.08 & 1.36 & 5.67 & 11.39 & 0.26 \\
			& $||\hat{\gamma}_0-\gamma_0||_{2}$ & 0.51 & 1 & 12.21 & 7.86 & 0.23 \\
			& $||\hat{\gamma}_0-\gamma_0||_{\infty}$ & 0.99 & 1.69 & 17.42 & 18.06 & 0.24 \\
			\bottomrule
		\end{tabular}%
	}
	
	\vspace{0.5em}
	\begin{minipage}{0.95\textwidth}
		\footnotesize
		Note: The entries in the table are relative loss of efficiency of scaled 2-norm and inf-norm errors of correlation, covariance, and semivariance estimates, as well as absolute bias. For example, regarding the inf-norm error of correlation estimation, when the true covariance is Setting 1 (Matern), the best among the 12 parametric fits is the likelihood-based estimation of the Matern model with the minimum mean value of inf-norm errors. Thus, the relative loss of efficiency of our proposed nonparametric method is computed through $\frac{||\hat{R}_0-R_0||_{\infty,\text{our proposed method}}}{||\hat{R}_0-R_0||_{\infty,\text{best parametric fit}}}-1$. The entries are in units of $1$.
	\end{minipage}
\end{table}

\begin{table}[H]
	\centering
	\caption{Case study results of nonparametric methods.}
	\label{table:real data results of nonparametric models}
	
	\small
	\begin{tabular}{lccc}
		\toprule
		& Our method & Huang & modified Choi \\
		\midrule
		time(mins) & 19.26 & 0.21 & 0.45 \\
		log-likelihood/(rn) & -0.703 & -0.725 & -1.690 \\
		nugget($\mathrm{inch}^2$) & 7.38 & 4.09 & 14.93 \\
		$C_0(0)$ ($\mathrm{inch}^2$) & 48.3 & 56.74 & 37.42 \\
		\bottomrule
	\end{tabular}
\end{table}

\begin{table}[H]
	\centering
	\caption{Results for candidate $m$ of our proposed method.}
	\label{table:all m our method Year 40 Station 189}
	
	\small
	\begin{tabular}{lccccccc}
		\toprule
		$m$ & 2 & 3 & 4 & 6 & 10 & 15 & 23 \\
		\midrule
		time(mins) & 0.55 & 0.92 & 1.14 & 1.45 & 2.89 & 3.86 & 8.45 \\
		log-likelihood/(rn) & -0.717 & -0.712 & -0.708 & -0.709 & -0.705 & -0.703 & -0.703 \\
		nugget($\mathrm{inch}^2$) & 9.07 & 11.08 & 8.68 & 10.36 & 9.17 & 7.62 & 7.38 \\
		$C_0(0)$ ($\mathrm{inch}^2$) & 36.48 & 42.48 & 41.27 & 50.53 & 44.88 & 46.38 & 48.3 \\
		\bottomrule
	\end{tabular}
\end{table}

\subsection{Additional Results of Section 4}
\label{sec:additional results of section 4}
\begin{table}[H]
	\centering
	\scriptsize
	\caption{Simulation results of variogram-based parametric methods.}
	\begin{tabular}{clccccc}
		\hline\hline
		&                       & \multicolumn{5}{c}{Truth}\\
		\cline{3-7}
		&                       & Setting 1   & Setting 2    &  Setting 3  &    Setting 4    & Setting 5  \\    
		&                       & Matern   &  Cauchy      & Gaussian     & GenCauchy     & LinearMatern \\
		\hline         
		Model 1 &  Bias($C_0(0)$)    & 0.17 (2.22)&-7.61 (2.04)&0.31 (2.64)&-17.23 (1.7)&0.31 (2.41) \\ 
		Matern  &     $||\hat{R}_0-R_0||_2$   & 0.87 (0.68)&7.71 (0.29)&1.27 (0.59)&17.28 (0.4)&1.43 (0.31) \\ 
		&     $||\hat{R}_0-R_0||_{\infty}$   & 1.69 (1.37)&11.13 (0.71)&2.31 (1.09)&21.87 (0.77)&3.1 (0.82) \\ 
		&      $||\hat{C}_0-C_0||_2$  & 1.25 (0.87)&7.94 (0.5)&1.84 (0.86)&17.78 (0.59)&1.72 (0.68) \\ 
		&     $||\hat{C}_0-C_0||_{\infty}$  & 2.53 (1.63)&11.42 (0.86)&3.27 (1.49)&22.48 (0.85)&3.73 (1.55) \\ 
		&       $||\hat{\gamma}_0-\gamma_0||_{2}$    & 1.5 (0.92)&3.55 (0.6)&1.86 (0.88)&5.51 (0.59)&1.92 (0.82) \\ 
		&    $||\hat{\gamma}_0-\gamma_0||_{\infty}$  & 2.26 (1.44)&8.93 (1.56)&3.04 (1.22)&16.89 (1.38)&3.34 (0.95) \\ 
		\rule{0pt}{4.5ex}    
		Model 2 &  Bias($C_0(0)$)  & 8.57 (2.67)&0.08 (2.48)&13.97 (3.24)&-11.24 (2.03)&10.23 (2.88) \\ 
		Cauchy  &     $||\hat{R}_0-R_0||_2$   & 8.55 (0.51)&0.85 (0.66)&13.97 (0.43)&10.09 (0.65)&9.75 (0.5) \\ 
		&     $||\hat{R}_0-R_0||_{\infty}$    & 11.28 (0.77)&1.99 (1.54)&19.3 (0.97)&11.58 (0.72)&12.95 (0.78) \\ 
		&     $||\hat{C}_0-C_0||_2$   & 9.74 (1.19)&1.28 (0.96)&16.05 (1.45)&11.34 (0.97)&11.33 (1.28) \\ 
		&    $||\hat{C}_0-C_0||_{\infty}$  & 12.61 (1.29)&3.22 (2.24)&22.48 (1.68)&13 (1.22)&14.75 (1.34) \\ 
		&      $||\hat{\gamma}_0-\gamma_0||_{2}$    & 4.16 (0.8)&1.47 (0.97)&9.45 (0.69)&2.86 (0.62)&4.73 (0.74) \\ 
		&     $||\hat{\gamma}_0-\gamma_0||_{\infty}$  & 9.32 (1.6)&2.2 (1.26)&17.9 (1.71)&10.32 (1.59)&10.36 (1.4) \\ 
		\rule{0pt}{4.5ex}    
		Model 3 &  Bias($C_0(0)$)   & -0.87 (2.11)&-8.93 (1.91)&0.09 (2.61)&-19.4 (1.53)&-0.75 (2.3) \\ 
		Gaussian  &   $||\hat{R}_0-R_0||_2$   & 5.1 (0.21)&10.53 (0.14)&0.77 (0.63)&30.32 (3.84)&4.36 (0.21) \\ 
		&     $||\hat{R}_0-R_0||_{\infty}$    & 12.3 (1.29)&18.22 (0.74)&1.33 (1.1)&91.39 (23.39)&9.99 (1.13) \\ 
		&     $||\hat{C}_0-C_0||_2$   & 5 (0.62)&10.3 (0.26)&1.54 (0.95)&30.31 (3.87)&4.31 (0.67) \\ 
		&     $||\hat{C}_0-C_0||_{\infty}$  & 11.77 (2.48)&18.35 (0.94)&2.66 (1.55)&91.44 (23.26)&9.67 (2.33) \\ 
		&       $||\hat{\gamma}_0-\gamma_0||_{2}$   & 5.11 (0.61)&6.08 (0.44)&1.62 (0.97)&16.77 (3.44)&4.47 (0.66) \\ 
		&    $||\hat{\gamma}_0-\gamma_0||_{\infty}$  & 12.49 (1.03)&17.16 (1.42)&2.45 (1.39)&73.65 (18.59)&10.16 (0.86) \\ 
		\rule{0pt}{4.5ex}    
		Model 4  &  Bias($C_0(0)$)   & 23.67 (3.56)&14.32 (3.37)&35.04 (4.23)&0.16 (2.73)&27.47 (3.77) \\ 
		GenCauchy  &     $||\hat{R}_0-R_0||_2$   & 18.56 (0.86)&10.77 (0.99)&23.64 (0.76)&1.1 (0.88)&20.33 (0.8) \\ 
		&     $||\hat{R}_0-R_0||_{\infty}$   & 23.07 (1.12)&12.35 (1.1)&32.83 (1.14)&2.21 (1.75)&25.63 (1.01) \\ 
		&    $||\hat{C}_0-C_0||_2$   & 24.53 (2.33)&14.17 (2.11)&35.55 (2.99)&1.69 (1.34)&28.2 (2.42) \\ 
		&    $||\hat{C}_0-C_0||_{\infty}$   & 29.49 (2.34)&16.24 (2.55)&45.38 (2.82)&3.67 (2.75)&33.8 (2.36) \\ 
		&      $||\hat{\gamma}_0-\gamma_0||_{2}$   & 7.2 (0.81)&2.99 (0.65)&12.32 (0.71)&1.26 (0.74)&7.53 (0.73) \\ 
		&     $||\hat{\gamma}_0-\gamma_0||_{\infty}$  & 18.41 (1.91)&10.08 (2.18)&23.92 (1.83)&2.15 (1.27)&18.2 (1.61) \\ 
		\hline\hline
	\end{tabular}  
	\\
	\vspace{1ex}
	{\raggedright \scriptsize{Note:The entries in the table show the means and standard deviations, in parenthesis, of scaled 2-norm and inf-norm errors of correlation, covariance and semivariance estimates as well as bias of $\hat{C}_0(0)$ based on 100 mc simulations. Columns correspond to true models and rows correspond to estimation methods. The entries are in units of $10^{-2}$.}\par}
	\label{table:simulation of average for variogram-based parametric methods}
\end{table}

\begin{table}[H]
	\centering
	\scriptsize
	\caption{Simulation results of likelihood-based parametric methods.}
	\begin{tabular}{clccccc}
		\hline\hline
		&                       & \multicolumn{5}{c}{Truth}\\
		\cline{3-7}
		&                       & Setting 1   & Setting 2    &  Setting 3  &    Setting 4    & Setting 5  \\    
		&                       & Matern   &  Cauchy      & Gaussian     & GenCaucy     & LinearMatern \\
		\hline    
		Model 1  & Bias($C_0(0)$)  & 0.02 (1.93)&-0.12 (2.09)&5.37 (2.19)&0.82 (2.66)&0.34 (2.15) \\ 
		Matern                  & $||\hat{R}_0-R_0||_2$ & 0.37 (0.27)&8.44 (0.17)&3.25 (0.26)&17.98 (0.19)&2.03 (0.44) \\ 
		&        $||\hat{R}_0-R_0||_{\infty}$                & 0.7 (0.52)&12.76 (0.35)&5.28 (0.41)&23.21 (0.38)&3.57 (0.84) \\ 
		&      $||\hat{C}_0-C_0||_2$  & 0.77 (0.54)&8.47 (0.22)&5.19 (1.1)&18 (0.2)&2.27 (0.64) \\ 
		&      $||\hat{C}_0-C_0||_{\infty}$ & 1.73 (1.11)&12.77 (0.42)&7.13 (1.27)&23.19 (0.44)&3.96 (1.05) \\ 
		&    $||\hat{\gamma}_0-\gamma_0||_{2}$   & 1.24 (0.85)&8.36 (1.69)&1.26 (0.38)&18.71 (2.22)&2.15 (0.84) \\ 
		&    $||\hat{\gamma}_0-\gamma_0||_{\infty}$  & 1.62 (1.07)&12.65 (1.77)&2.56 (0.84)&24.01 (2.31)&3.52 (1.02) \\ 
		\rule{0pt}{4.5ex} 
		Model 2 & Bias($C_0(0)$)  & 11.69 (2.23)&0.02 (1.83)&121.24 (6.24)&-8.98 (1.65)&16.15 (2.52) \\ 
		Cauchy                 & $||\hat{R}_0-R_0||_2$  & 11.63 (0.38)&0.3 (0.24)&36.35 (0.34)&11.24 (0.28)&14.13 (0.37) \\ 
		&       $||\hat{R}_0-R_0||_{\infty}$ & 15.14 (0.43)&0.7 (0.56)&47.77 (0.42)&12.91 (0.37)&18.41 (0.43) \\ 
		&      $||\hat{C}_0-C_0||_2$   & 14.55 (1.02)&0.61 (0.48)&113.78 (4.82)&12.29 (0.46)&18.88 (1.24) \\ 
		&      $||\hat{C}_0-C_0||_{\infty}$ & 17.73 (0.96)&1.7 (1.22)&124.77 (5.35)&14.23 (0.6)&22.69 (1.17) \\ 
		&    $||\hat{\gamma}_0-\gamma_0||_{2}$  & 3.66 (1)&1.13 (0.83)&7.68 (0.8)&3.96 (1.09)&3.86 (0.92) \\ 
		&     $||\hat{\gamma}_0-\gamma_0||_{\infty}$ & 6.1 (1.29)&1.41 (0.96)&19.04 (1.63)&6.79 (0.62)&6.69 (1.24) \\ 
		\rule{0pt}{4.5ex} 
		Model 3 & Bias($C_0(0)$)  & -0.78 (1.83)&-0.23 (2.05)&0.07 (1.98)&1.05 (2.73)&-2.67 (1.95) \\ 
		Gaussian                & $||\hat{R}_0-R_0||_2$  & 8.73 (0.22)&12.66 (0.11)&0.18 (0.13)&21.4 (0.1)&9.3 (0.22) \\ 
		&        $||\hat{R}_0-R_0||_{\infty}$  & 19.96 (0.55)&26.13 (0.38)&0.31 (0.23)&33.26 (0.33)&20.51 (0.52) \\ 
		&      $||\hat{C}_0-C_0||_2$  & 8.79 (0.29)&12.68 (0.14)&0.86 (0.65)&21.42 (0.09)&9.57 (0.39) \\ 
		&       $||\hat{C}_0-C_0||_{\infty}$ & 20.04 (0.67)&26.14 (0.43)&1.59 (1.19)&33.24 (0.38)&20.86 (0.71) \\ 
		&    $||\hat{\gamma}_0-\gamma_0||_{2}$  & 8.4 (0.86)&12.52 (1.61)&0.56 (0.42)&22.36 (2.41)&8.12 (0.66) \\ 
		&     $||\hat{\gamma}_0-\gamma_0||_{\infty}$  & 19.26 (1.41)&25.91 (1.79)&0.8 (0.59)&34.29 (2.46)&18.18 (1.42) \\ 
		\rule{0pt}{4.5ex} 
		Model 4 & Bias($C_0(0)$)  & 45.56 (3.3)&20.88 (2.57)&264.21 (11.11)&0.04 (1.93)&57.48 (3.82) \\ 
		GenCauchy                 & $||\hat{R}_0-R_0||_2$  & 28.04 (0.47)&14.38 (0.47)&50.04 (0.26)&0.49 (0.4)&31.29 (0.43) \\ 
		&       $||\hat{R}_0-R_0||_{\infty}$  & 33.86 (0.52)&16.6 (0.57)&64.63 (0.32)&0.99 (0.8)&38.03 (0.48) \\ 
		&     $||\hat{C}_0-C_0||_2$   & 47.48 (2.14)&20.82 (1.36)&252.29 (9.52)&0.87 (0.72)&58.7 (2.56) \\ 
		&     $||\hat{C}_0-C_0||_{\infty}$ & 52.96 (2.21)&24.57 (1.83)&266.51 (10.3)&2.05 (1.51)&64.88 (2.65) \\ 
		&    $||\hat{\gamma}_0-\gamma_0||_{2}$ & 4.56 (0.5)&2.75 (0.53)&10.52 (0.94)&0.98 (0.67)&5.02 (0.36) \\ 
		&     $||\hat{\gamma}_0-\gamma_0||_{\infty}$  & 8.09 (1.02)&4.93 (1.01)&25.58 (1.83)&1.35 (0.79)&9.59 (1.39) \\ 
		\hline\hline
	\end{tabular}  
	\\
	\vspace{1ex}
	{\raggedright \scriptsize{Note: The entries in the table show the means and standard deviations, in parenthesis, of scaled 2-norm and inf-norm errors of correlation, covariance and semivariance estimates as well as bias of $\hat{C}_0(0)$ based on 100 mc simulations. Columns correspond to true models and rows correspond to estimation methods. The entries are in units of $10^{-2}$.} \par}
	\label{table:simulation of average for likelihood-based parametric methods}
\end{table}

\begin{table}[H]
	\centering
	\scriptsize
	\caption{Simulation results of cov-based parametric methods using original weights of Choi method.}
	\begin{tabular}{clccccc}
		\hline\hline
		&                       & \multicolumn{5}{c}{Truth}\\
		\cline{3-7}
		&                       & Setting 1   & Setting 2    &  Setting 3  &    Setting 4    & Setting 5  \\    
		&                       & Matern   &  Cauchy      & Gaussian     & GenCauchy    & LinearMatern \\
		\hline              
		Model 1 &Bias($C_0(0)$)  & 0.06 (2.01)&0 (2.19)&0.77 (2.36)&-0.17 (2.62)&0.29 (2.25) \\ 
		Matern              & $||\hat{R}_0-R_0||_2$  & 4.66 (8.33)&7.9 (1.77)&19.89 (12.44)&15.85 (0.75)&7.5 (7.21) \\ 
		&       $||\hat{R}_0-R_0||_{\infty}$  & 9.93 (11.22)&16.07 (5.69)&29.22 (16.38)&32.49 (2.53)&15.49 (11.77) \\ 
		&      $||\hat{C}_0-C_0||_2$ & 4.77 (8.35)&7.97 (1.89)&20.3 (12.57)&15.92 (1.37)&7.71 (7.46) \\ 
		&     $||\hat{C}_0-C_0||_{\infty}$   & 10.15 (11.27)&16.17 (6.3)&29.69 (16.57)&32.36 (4.37)&15.82 (12.25) \\ 
		&     $||\hat{\gamma}_0-\gamma_0||_{2}$  & 4.92 (8.28)&7.88 (2.27)&19.61 (12.33)&15.68 (1.33)&7.68 (6.89) \\ 
		&       $||\hat{\gamma}_0-\gamma_0||_{\infty}$    & 9.95 (11.17)&16.07 (5.52)&28.91 (16.23)&32.53 (2.22)&15.43 (11.54) \\ 
		\rule{0pt}{4.5ex} 
		Model 2 & Bias($C_0(0)$)   & 0.08 (2.03)&0.03 (2.22)&0.95 (2.54)&-0.11 (2.69)&0.39 (2.37) \\ 
		Cauchy              & $||\hat{R}_0-R_0||_2$  & 11.86 (8.45)&3 (4.11)&31.49 (10.1)&8.65 (0.83)&17.49 (8.56) \\ 
		&      $||\hat{R}_0-R_0||_{\infty}$ & 15.34 (9.98)&6.31 (7.4)&41.95 (12.43)&23.64 (2.64)&22.62 (10.68) \\ 
		&      $||\hat{C}_0-C_0||_2$  & 11.95 (8.47)&3.18 (4.28)&32.01 (10.3)&8.78 (1.37)&17.75 (8.91) \\ 
		&    $||\hat{C}_0-C_0||_{\infty}$  & 15.41 (10.01)&6.94 (7.65)&42.46 (12.67)&23.57 (4.55)&22.89 (11.11) \\ 
		&    $||\hat{\gamma}_0-\gamma_0||_{2}$    & 11.81 (8.47)&3.36 (3.8)&31.12 (10.02)&8.55 (1.17)&17.31 (8.2) \\ 
		&    $||\hat{\gamma}_0-\gamma_0||_{\infty}$   & 15.31 (9.98)&6.48 (7.1)&41.52 (12.32)&23.67 (2.32)&22.49 (10.32) \\ 
		\rule{0pt}{4.5ex} 
		Model 3 & Bias($C_0(0)$)  & 0.05 (2.01)&-0.02 (2.18)&0.76 (2.35)&-0.19 (2.6)&0.28 (2.24) \\ 
		Gaussian              & $||\hat{R}_0-R_0||_2$  & 6.6 (7.77)&10.63 (1.15)&19.13 (12.53)&19.69 (0.68)&8.15 (6.47) \\ 
		&       $||\hat{R}_0-R_0||_{\infty}$ & 13.85 (9.73)&20.56 (4.52)&28.61 (16.53)&37.46 (2.93)&16.83 (10.76) \\ 
		&       $||\hat{C}_0-C_0||_2$  & 6.68 (7.77)&10.68 (1.23)&19.53 (12.65)&19.75 (1.23)&8.35 (6.7) \\ 
		&      $||\hat{C}_0-C_0||_{\infty}$   & 14.11 (9.81)&20.87 (5.02)&29.08 (16.72)&37.31 (4.86)&17.18 (11.23) \\ 
		&   $||\hat{\gamma}_0-\gamma_0||_{2}$   & 6.78 (7.77)&10.63 (1.9)&18.87 (12.41)&19.48 (1.85)&8.2 (6.28) \\ 
		&       $||\hat{\gamma}_0-\gamma_0||_{\infty}$   & 13.88 (9.7)&20.51 (4.46)&28.31 (16.4)&37.5 (2.7)&16.74 (10.59) \\ 
		\rule{0pt}{4.5ex} 
		Model 4 & Bias($C_0(0)$)  & 0.07 (2.03)&0.03 (2.23)&0.95 (2.6)&-0.08 (2.73)&0.4 (2.4) \\ 
		GenCauchy               & $||\hat{R}_0-R_0||_2$  & 19.98 (8.38)&9.91 (4.84)&39.25 (8.8)&2.12 (1.89)&27.12 (8.97) \\ 
		&       $||\hat{R}_0-R_0||_{\infty}$ & 25.28 (9.34)&16.57 (5.67)&51.86 (10.55)&4.08 (3.42)&33.35 (10.23) \\ 
		&     $||\hat{C}_0-C_0||_2$  & 20.07 (8.42)&10.05 (5.02)&39.82 (9.07)&2.56 (2.26)&27.42 (9.42) \\ 
		&      $||\hat{C}_0-C_0||_{\infty}$   & 25.58 (9.35)&17 (6.13)&52.44 (10.86)&5.28 (4.1)&33.67 (10.7) \\ 
		&     $||\hat{\gamma}_0-\gamma_0||_{2}$    & 19.92 (8.35)&9.86 (4.66)&38.94 (8.71)&2.15 (1.5)&26.96 (8.6) \\ 
		&       $||\hat{\gamma}_0-\gamma_0||_{\infty}$   & 24.97 (9.38)&16.22 (5.56)&51.49 (10.43)&4.12 (2.92)&33.19 (9.84) \\ 
		\hline\hline
	\end{tabular}  
	\\
	\vspace{1ex}
	{\raggedright \scriptsize{Note: The entries in the table show the means and standard deviations, in parenthesis, of scaled 2-norm and inf-norm errors of correlation, covariance and semivariance estimates as well as bias of $\hat{C}_0(0)$ based on 100 mc simulations. Columns correspond to true models and rows correspond to estimation methods. The entries are in units of $10^{-2}$.} \par}
	\label{table:simulation of average for cov-based parametric methods using original weights of Choi method}
\end{table}

\begin{table}[H]
	\centering
	\scriptsize
	\caption{Simulation results of cov-based parametric methods using modified weights.}
	\begin{tabular}{clccccc}
		\hline\hline
		&                       & \multicolumn{5}{c}{Truth}\\
		\cline{3-7}
		&                       & Setting 1   & Setting 2    &  Setting 3  &    Setting 4    & Setting 5  \\    
		&                       & Matern   &  Cauchy      & Gaussian     & GenCauchy    & LinearMatern \\
		\hline             
		Model 1  & Bias($C_0(0)$)    & -0.07 (2.12) & -11.11 (2.47) & 0.91 (2.79) & -15.87 (5.77) & -0.15 (2.41) \\ 
		Matern               & $||\hat{R}_0-R_0||_2$  & 0.76 (0.67) & 9.04 (1.29) & 1.31 (0.54) & 26.8 (6.16) & 1.44 (0.34) \\ 
		&     $||\hat{R}_0-R_0||_{\infty}$   & 1.52 (1.44) & 21.69 (3.59) & 2.47 (1.07) & 46.42 (4.63) & 3.36 (0.94) \\ 
		&     $||\hat{C}_0-C_0||_2$  & 1.1 (0.71) & 7.08 (0.58) & 1.8 (0.85) & 18.49 (2.01) & 1.57 (0.61) \\ 
		&      $||\hat{C}_0-C_0||_{\infty}$    & 2.39 (1.4) & 14.63 (2.52) & 3.45 (1.56) & 33.1 (1.64) & 3.7 (1.51) \\ 
		&       $||\hat{\gamma}_0-\gamma_0||_{2}$   & 1.6 (0.91) & 11.25 (2.99) & 2.13 (1.09) & 31.63 (8.17) & 2.04 (0.82) \\ 
		&       $||\hat{\gamma}_0-\gamma_0||_{\infty}$   & 2.36 (1.45) & 25.39 (4) & 3.38 (1.43) & 48.74 (5.25) & 3.7 (0.91) \\ 
		\rule{0pt}{4.5ex} 
		Model 2 & Bias($C_0(0)$)    & 1.25 (2.09) & -0.05 (2.1) & 3.64 (2.76) & -2.09 (1.59) & 1.88 (2.29) \\ 
		Cauchy               & $||\hat{R}_0-R_0||_2$  & 9.72 (0.5) & 1.06 (0.92) & 17.16 (0.66) & 15.22 (2.26) & 10.8 (0.51) \\ 
		&     $||\hat{R}_0-R_0||_{\infty}$  & 24.38 (1.75) & 2.48 (2.14) & 34.86 (1.63) & 35.09 (2.72) & 26.02 (1.71) \\ 
		&      $||\hat{C}_0-C_0||_2$   & 9.56 (0.68) & 1.31 (1.08) & 16.43 (0.99) & 14.4 (2.49) & 10.54 (0.71) \\ 
		&       $||\hat{C}_0-C_0||_{\infty}$     & 23.82 (2.35) & 3.4 (2.43) & 33.22 (2.45) & 33.31 (3.42) & 25.16 (2.39) \\ 
		&      $||\hat{\gamma}_0-\gamma_0||_{2}$   & 9.83 (0.54) & 1.64 (0.94) & 17.83 (0.79) & 15.82 (2.23) & 10.96 (0.56) \\ 
		&        $||\hat{\gamma}_0-\gamma_0||_{\infty}$   & 25.07 (1.59) & 2.76 (1.72) & 36.86 (1.75) & 35.41 (2.63) & 27.04 (1.62) \\ 
		\rule{0pt}{4.5ex} 
		Model 3 & Bias($C_0(0)$)     & -4.99 (1.86) & -16.38 (2.85) & 0.11 (2.77) & -6.22 (7.84) & -4.73 (2.13) \\ 
		Gaussian               & $||\hat{R}_0-R_0||_2$   & 5.61 (0.46) & 12.87 (1.59) & 0.8 (0.68) & 22.13 (6.51) & 4.85 (0.47) \\ 
		&     $||\hat{R}_0-R_0||_{\infty}$ & 14.44 (1.45) & 32.66 (3.94) & 1.39 (1.18) & 43.1 (5.09) & 11.89 (1.4) \\ 
		&    $||\hat{C}_0-C_0||_2$   & 4.47 (0.56) & 9.61 (0.47) & 1.55 (0.96) & 19.63 (0.88) & 3.8 (0.58) \\ 
		&      $||\hat{C}_0-C_0||_{\infty}$    & 10.84 (2.1) & 20.12 (2.42) & 2.78 (1.65) & 38.39 (3.91) & 8.9 (1.9) \\ 
		&      $||\hat{\gamma}_0-\gamma_0||_{2}$   & 6.62 (0.93) & 15.39 (3.55) & 1.82 (1.13) & 20.91 (7.42) & 6.01 (1.06) \\ 
		&      $||\hat{\gamma}_0-\gamma_0||_{\infty}$   & 15.67 (1.44) & 36.28 (4.32) & 2.71 (1.54) & 43.79 (5.6) & 13.15 (1.42) \\ 
		\rule{0pt}{4.5ex} 
		Model 4 & Bias($C_0(0)$)    & 0.05 (2.01) & 0.09 (2.23) & 0.15 (2.54) & -0.07 (2) & 0.07 (2.18) \\ 
		GenCauchy               & $||\hat{R}_0-R_0||_2$   & 21.65 (0.71) & 9.42 (0.88) & 30.18 (1) & 1.92 (1.61) & 22.51 (0.74) \\ 
		&     $||\hat{R}_0-R_0||_{\infty}$   & 60.73 (2.14) & 35.02 (3.85) & 61.69 (2.03) & 3.88 (3.25) & 59.24 (2.12) \\ 
		&     $||\hat{C}_0-C_0||_2$   & 21.65 (0.79) & 9.44 (1.02) & 30.16 (1.18) & 2.24 (1.92) & 22.5 (0.85) \\ 
		&      $||\hat{C}_0-C_0||_{\infty}$   & 60.71 (2.39) & 34.94 (4.58) & 61.63 (2.39) & 4.76 (3.81) & 59.21 (2.41) \\ 
		&      $||\hat{\gamma}_0-\gamma_0||_{2}$   & 21.71 (0.73) & 9.47 (0.87) & 30.27 (1.11) & 1.77 (1.1) & 22.58 (0.77) \\ 
		&       $||\hat{\gamma}_0-\gamma_0||_{\infty}$   & 60.76 (2.15) & 35.03 (3.38) & 61.78 (2.24) & 3.78 (2.82) & 59.28 (2.2) \\ 
		\hline\hline
	\end{tabular}  
	\\
	\vspace{1ex}
	{\raggedright \scriptsize{Note: The entries in the table show the means and standard deviations, in parenthesis, of scaled 2-norm and inf-norm errors of correlation, covariance and semivariance estimates as well as bias of $\hat{C}_0(0)$ based on 100 mc simulations. Columns correspond to true models and rows correspond to estimation methods. The entries are in units of $10^{-2}$.}\par}
	\label{table:simulation of average for cov-based parametric methods using modified weights}
\end{table}

\newpage
\subsection{Additional Results of Section 5}
\label{sec:additional results of section 5}
\begin{table}[h]
	\centering
	\scriptsize
	\caption{Case study results of parametric methods.}
	\begin{tabular}{clcccc}
		\hline\hline
		&	                               &  Matern &    Cauchy &  Gaussian & GenCauchy  \\ \hline
		likelihood-
		& nugget($inch^2$) & 9.45 & 9.38 & 14.23 & 8.2  \\ 
		based 	& $C_0(0)$($inch^2$) & 42.98 & 43.82 & 35.25 & 56.71 \\
		& log-likelihood/(rn) &-0.709 & -0.710 &-0.731 & -0.709 \\
		& range(100km) &  1.69 & 1.26 & 2.55 & 0.86  \\ 
		& other parameters & 1 & - & - & (0.5,2)\\ 
		& time(mins)& 15.764 & 7.166 & 7.192 & 12.935  \\
		\rule{0pt}{5ex}    
		variogram-
		&  	nugget($inch^2$) & 13.47 & 10.43 & 16.42 & 8.52  \\ 
		based 		& $C_0(0)$($inch^2$) & 45.89 & 57.07 & 42.45 & 72.25  \\ 
		& log-likelihood/(rn) & -0.740 & -0.725 & -0.775 & -0.720\\
		& range(100km)  &  1.89 & 2.24 & 5.16 & 1.35 \\     
		&  other parameters   &  2   &   -   &  -    & (0.5,2) \\
		& time(mins) & 0.04 & 0.05 & 0.05 & 0.05 \\
		\rule{0pt}{5ex}    
		cov-	& nugget($inch^2$)  & 21.42 & 0 & 22.37 & 0  \\ 
		based	&$C_0(0)$ ($inch^2$) &  30.92 & 53.74 & 29.98 & 77.83 \\ 
		&  log-likelihood/(rn) & -0.749 & -1.085 & -0.756 & -1.016\\
		&	range(100km) & 0.63 & 0.46 & 4.02 & 0.01 \\ 
		&  other parameters & 10 & - & - & (0.5,2) \\ 
		& time(mins) & 4.44 & 2.88 & 1.62 & 8.18 \\ 
		\hline\hline
	\end{tabular}\\
	\vspace{1ex}
	{\raggedright \scriptsize{Note: Columns correspond to parametric model and rows specifies the estimation method. The entries ’other parameters‘ refer to the smoothness parameter $\nu$ for Matern and $\kappa_1,\kappa_2$ for GenCauchy. The entries 'range' refer to $\rho$ of four parametric models. }\par}
	\label{table:real data results of parametric models}
\end{table}

\newpage
\section{Figures}

\subsection{Additional Figures in Section 4}
\begin{figure}[H]
	\centering
	
	\begin{subfigure}{0.19\textwidth}
		\centering
		\includegraphics[width=\linewidth,height=4.8cm]{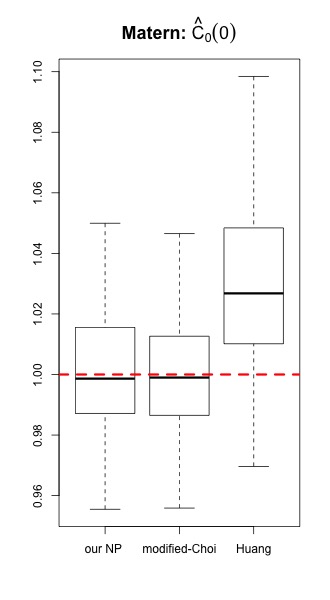}
		\caption{}
		\label{fig: simulation boxplot of Matern C_zero}
	\end{subfigure}
	\begin{subfigure}{0.19\textwidth}
		\centering
		\includegraphics[width=\linewidth,height=4.8cm]{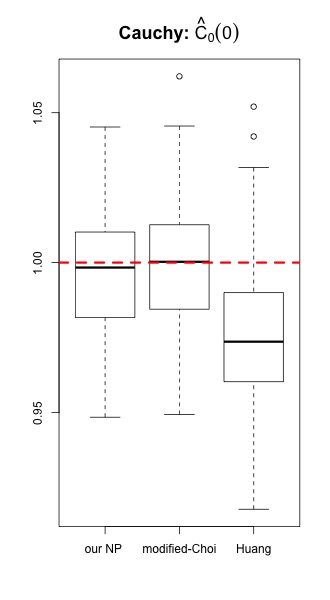}
		\caption{}
		\label{fig: simulation boxplot of Cauchy C_zero}
	\end{subfigure}
	\begin{subfigure}{0.19\textwidth}
		\centering
		\includegraphics[width=\linewidth,height=4.8cm]{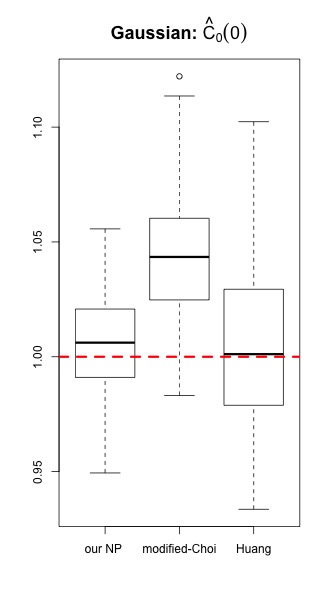}
		\caption{}
		\label{fig: simulation boxplot of Gaussian C_zero}
	\end{subfigure}
	\begin{subfigure}{0.19\textwidth}
		\centering
		\includegraphics[width=\linewidth,height=4.8cm]{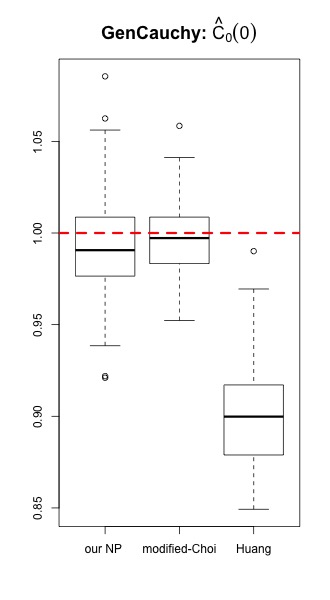}
		\caption{}
		\label{fig: simulation boxplot of GenCauchy C_zero}
	\end{subfigure}
	\begin{subfigure}{0.19\textwidth}
		\centering
		\includegraphics[width=\linewidth,height=4.8cm]{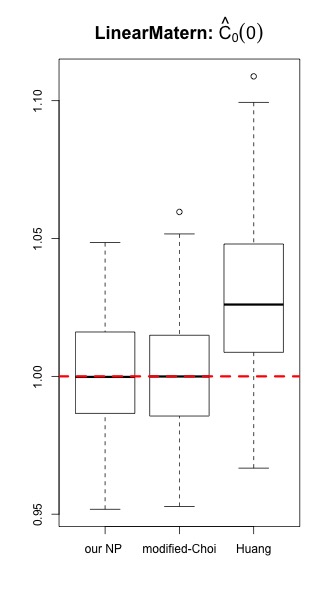}
		\caption{}
		\label{fig: simulation boxplot of LinearMatern C_zero}
	\end{subfigure}
	
	\vspace{4pt}
	
	\begin{subfigure}{0.19\textwidth}
		\centering
		\includegraphics[width=\linewidth,height=4.8cm]{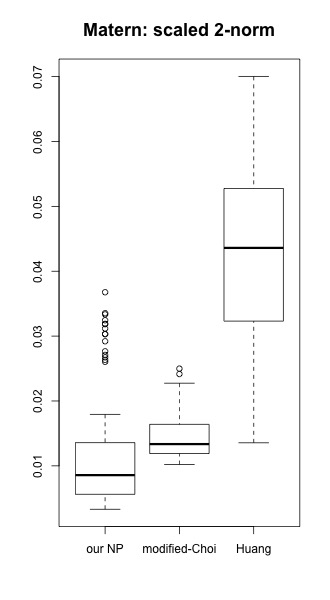}
		\caption{}
		\label{fig: simulation boxplot of Matern norm2_cor}
	\end{subfigure}
	\begin{subfigure}{0.19\textwidth}
		\centering
		\includegraphics[width=\linewidth,height=4.8cm]{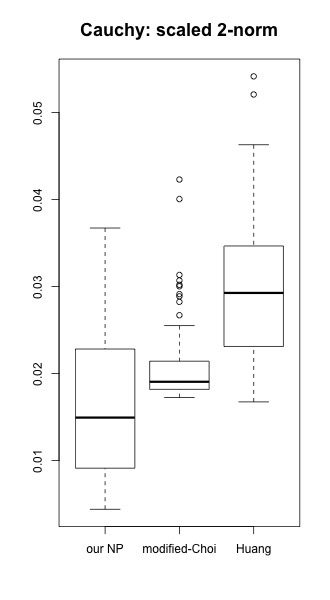}
		\caption{}
		\label{fig: simulation boxplot of Cauchy norm2_cor}
	\end{subfigure}
	\begin{subfigure}{0.19\textwidth}
		\centering
		\includegraphics[width=\linewidth,height=4.8cm]{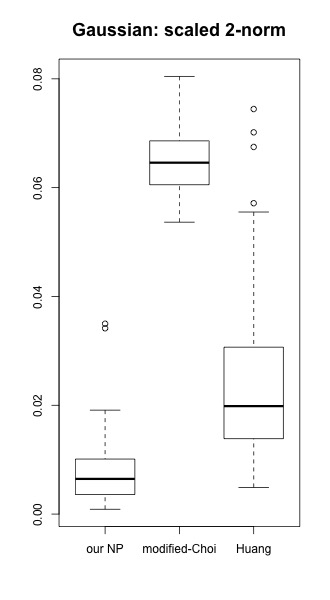}
		\caption{}
		\label{fig: simulation boxplot of Gaussian norm2_cor}
	\end{subfigure}
	\begin{subfigure}{0.19\textwidth}
		\centering
		\includegraphics[width=\linewidth,height=4.8cm]{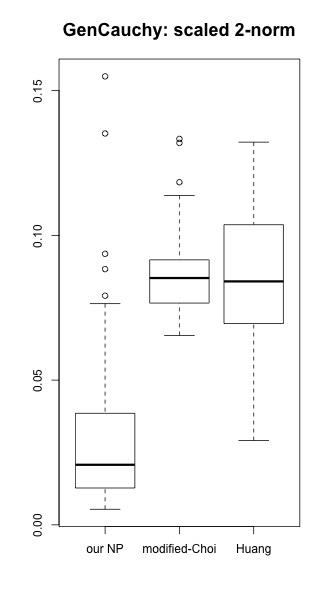}
		\caption{}
		\label{fig: simulation boxplot of GenCauchy norm2_cor}
	\end{subfigure}
	\begin{subfigure}{0.19\textwidth}
		\centering
		\includegraphics[width=\linewidth,height=4.8cm]{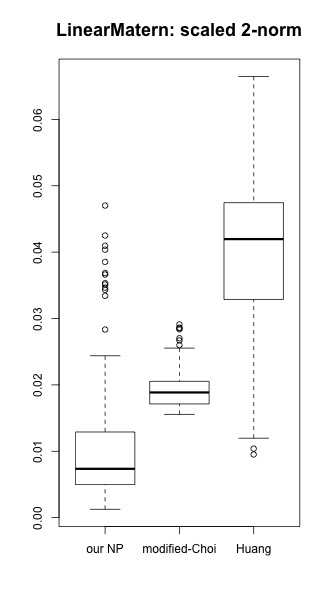}
		\caption{}
		\label{fig: simulation boxplot of LinearMatern norm2_cor}
	\end{subfigure}
	
	\vspace{4pt}
	
	\begin{subfigure}{0.19\textwidth}
		\centering
		\includegraphics[width=\linewidth,height=4.8cm]{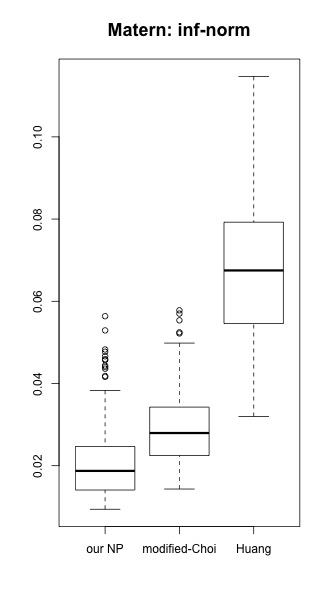}
		\caption{}
		\label{fig: simulation boxplot of Matern supnorm_cor}
	\end{subfigure}
	\begin{subfigure}{0.19\textwidth}
		\centering
		\includegraphics[width=\linewidth,height=4.8cm]{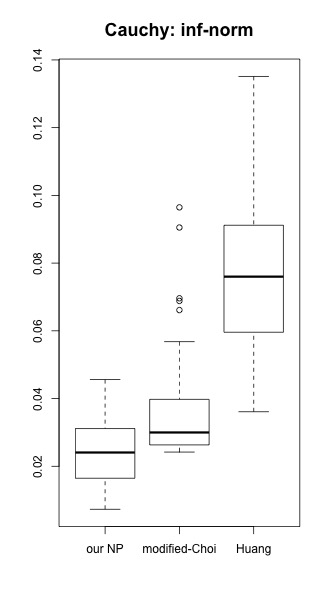}
		\caption{}
		\label{fig: simulation boxplot of Cauchy supnorm_cor}
	\end{subfigure}
	\begin{subfigure}{0.19\textwidth}
		\centering
		\includegraphics[width=\linewidth,height=4.8cm]{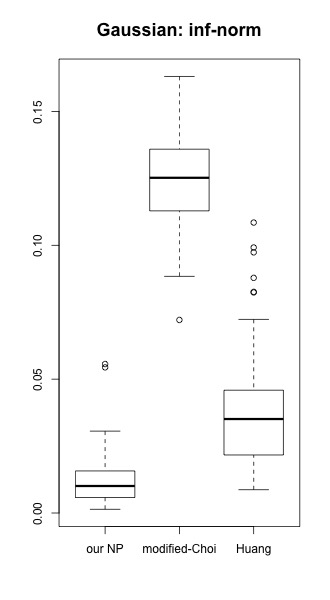}
		\caption{}
		\label{fig: simulation boxplot of Gaussian supnorm_cor}
	\end{subfigure}
	\begin{subfigure}{0.19\textwidth}
		\centering
		\includegraphics[width=\linewidth,height=4.8cm]{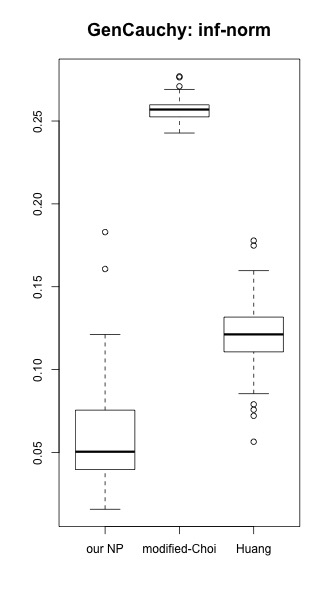}
		\caption{}
		\label{fig: simulation boxplot of GenCauchy supnorm_cor}
	\end{subfigure}
	\begin{subfigure}{0.19\textwidth}
		\centering
		\includegraphics[width=\linewidth,height=4.8cm]{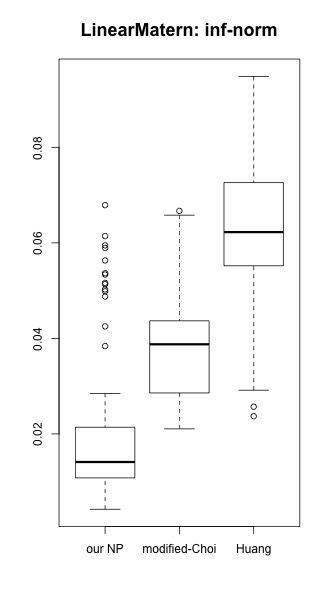}
		\caption{}
		\label{fig: simulation boxplot of LinearMatern supnorm_cor}
	\end{subfigure}
	
	\caption{Boxplots of estimation results based on 100 MCs. Each row includes boxplots for Settings 1--5. The first through third rows show results for $\hat{C}_0(0)$ in (a)--(e), the scaled 2-norm in (f)--(j), and the infinity-norm error in (k)--(o) of correlation estimation. In each subfigure, boxplots from left to right correspond to our nonparametric method (our NP), the modified Choi method, and the Huang method.}
	\label{fig:simulation of boxplot}
\end{figure}


\end{document}